\documentclass[12pt]{article}
\usepackage{amsmath, amssymb,amsfonts,bbm,verbatim,multirow,color}
\usepackage{epic,eepic,psfrag,epsfig}
\usepackage[sf,bf,SF,footnotesize]{subfigure}
\usepackage{graphicx}
\usepackage{amsthm,breakcites}
\usepackage{amscd}
\usepackage{epsfig}
\usepackage{fullpage}
\usepackage{natbib}
\usepackage{setspace}
\usepackage{enumerate}
\usepackage{array}
\usepackage[small]{caption}
\RequirePackage[colorlinks,citecolor=blue,urlcolor=blue,breaklinks]{hyperref}

\newtheorem{thm}{Theorem}

\newtheorem{lemma}[thm]{Lemma}

\DeclareMathOperator*{\sargmin}{sargmin}
\DeclareMathOperator*{\argmin}{argmin}

\DeclareMathOperator*{\Var}{Var}

\def\hat{\widehat}

\begin{document}
\title{Nonparametric independence testing via mutual information}
\author{Thomas B. Berrett and Richard J. Samworth\\Statistical Laboratory, University of Cambridge}

\maketitle

\begin{abstract}
We propose a test of independence of two multivariate random vectors, given a sample from the underlying population. Our approach, which we call \texttt{MINT}, is based on the estimation of mutual information, whose decomposition into joint and marginal entropies facilitates the use of recently-developed efficient entropy estimators derived from nearest neighbour distances.  The proposed critical values, which may be obtained from simulation (in the case where one marginal is known) or resampling, guarantee that the test has nominal size, and we provide local power analyses, uniformly over classes of densities whose mutual information satisfies a lower bound. Our ideas may be extended to provide a new goodness-of-fit tests of normal linear models based on assessing the independence of our vector of covariates and an appropriately-defined notion of an error vector.  The theory is supported by numerical studies on both simulated and real data.
\end{abstract}

\section{Introduction}

Independence is a fundamental concept in statistics and many related fields, underpinning the way practitioners frequently think about model building, as well as much of statistical theory.  Often we would like to assess whether or not the assumption of independence is reasonable, for instance as a method of exploratory data analysis \citep{Steuer02,Albert15,Nguyen17}, or as a way of evaluating the goodness-of-fit of a statistical model \citep{EinmahlvanKeilegom2008}.  Testing independence and estimating dependence are well-established areas of statistics, with the related idea of the correlation between two random variables dating back to Francis Galton's work at the end of the 19th century \citep{Stigler89}, which was subsequently expanded upon by Karl Pearson \citep[e.g.][]{Pearson1920}. Since then many new measures of dependence have been developed and studied, each with its own advantages and disadvantages, and there is no universally accepted measure. For surveys of several measures, see, for example, \citet{Schweizer81}, \citet{Joe89}, \citet{Mari01} and the references therein. We give an overview of more recently-introduced quantities below; see also \citet{Josse14}.

In addition to the applications mentioned above, dependence measures play an important role in independent component analysis (ICA), a special case of blind source separation, in which a linear transformation of the data is sought so that the transformed data is maximally independent; see e.g.\ \citet{Comon94}, \citet{Bach02}, \citet{Miller:03} and \citet{Samworth:12b}.  Here, independence tests may be carried out to check the convergence of an ICA algorithm and to validate the results \citep[e.g.][]{Wu09}.  Further examples include feature selection, where one seeks a set of features which contains the maximum possible information about a response \citep{Torkkola03, Song12}, and the evaluation of the quality of a clustering in cluster analysis \citep{Vinh10}.

When dealing with discrete data, often presented in a contingency table, the independence testing problem is typically reduced to testing the equality of two discrete distributions via a chi-squared test.  Here we will focus on the case of distributions that are absolutely continuous with respect to the relevant Lebesgue measure. Classical nonparametric approaches to measuring dependence and independence testing in such cases include Pearson's correlation coefficient, Kendall's tau and Spearman's rank correlation coefficient. Though these approaches are widely used, they suffer from a lack of power against many alternatives; indeed Pearson's correlation only measures linear relationships between variables, while Kendall's tau and Spearman's rank correlation coefficient measure monotonic relationships.  Thus, for example, if $X$ has a symmetric distribution on the real line, and $Y=X^2$, then the population quantities corresponding to these test statistics are zero in all three cases.  Hoeffding's test of independence \citep{Hoeffding48} is able to detect a wider class of departures from independence and is distribution-free under the null hypothesis but, as with these other classical methods, was only designed for univariate variables.  Recent work of \citet{Weihs17} has aimed to address some of computational challenges involved in extending these ideas to multivariate settings. 

Recent research has focused on constructing tests that can be used for more complex data and that are consistent against wider classes of alternatives. The concept of distance covariance was introduced in \citet{Szekely07} and can be expressed as a weighted $L_2$ norm between the characteristic function of the joint distribution and the product of the marginal characteristic functions. This concept has also been studied in high dimensions \citep{Szekely13,YZS2017}, and for testing independence of several random vectors \citep{Micheaux2017}. In \citet{Sejdinovic13} tests based on distance covariance were shown to be equivalent to a reproducing kernel Hilbert space (RKHS) test for a specific choice of kernel. RKHS tests have been widely studied in the machine learning community, with early understanding of the subject given by \citet{Bach02} and \citet{Gretton05}, in which the Hilbert--Schmidt independence criterion was proposed. These tests are based on embedding the joint distribution and product of the marginal distributions into a Hilbert space and considering the norm of their difference in this space. One drawback of the kernel paradigm here is the computational complexity, though \citet{Jitkrittum16} and \citet{Zhang17} have recently attempted to address this issue. The performance of these methods may also be strongly affected by the choice of kernel.  In another line of work, there is a large literature on testing independence based on an empirical copula process; see for example \citet{Kojadinovic2009} and the references therein. Other test statistics include those based on partitioning the sample space \citep[e.g.][]{Gretton10,Heller16}. These have the advantage of being distribution-free under the null hypothesis, though their performance depends on the particular partition chosen. 

We also remark that the basic independence testing problem has spawned many variants.  For instance, \citet{Pfister17} extend kernel tests to tests of mutual independence between a group of random vectors.  Another important extension is to the problem of testing conditional independence, which is central to graphical modelling \citep{Lauritzen96} and also relevant to causal inference \citep{Pearl2009}.  Existing tests of conditional independence include the proposals of \citet{Su08}, \citet{Zhang11} and \citet{Fan17}.  

To formalise the problem we consider, let $X$ and $Y$ have (Lebesgue) densities $f_X$ on $\mathbb{R}^{d_X}$ and $f_Y$ on $\mathbb{R}^{d_Y}$ respectively, and let $Z = (X,Y)$ have density $f$ on $\mathbb{R}^d$, where $d:= d_X+d_Y$.  Given independent and identically distributed copies $Z_1, \ldots, Z_n$ of $Z$, we wish to test the null hypothesis that $X$ and $Y$ are independent, denoted $H_0: X \perp\!\!\!\perp Y$, against the alternative that $X$ and $Y$ are not independent, written $H_1: X \not\!\perp\!\!\!\perp Y$.  Our approach is based on constructing an estimator $\hat{I}_n = \hat{I}_n(Z_1,\ldots,Z_n)$ of the mutual information $I(X;Y)$ between $X$ and $Y$.  Mutual information turns out to be a very attractive measure of dependence in this context; we review its definition and basic properties in Section~\ref{Sec:Mutual} below.  In particular, its decomposition into joint and marginal entropies (see~\eqref{Eq:Decomp} below) facilitates the use of recently-developed efficient entropy estimators derived from nearest neighbour distances \citep{BSY2017}.

The next main challenge is to identify an appropriate critical value for the test.  In the simpler setting where either of the marginals $f_X$ and $f_Y$ is known, a simulation-based approach can be employed to yield a test of any given size $q \in (0,1)$.  We further provide regularity conditions under which the power of our test converges to $1$ as $n \rightarrow \infty$, uniformly over classes of alternatives with $I(X;Y) \geq b_n$, say, where we may even take $b_n = o(n^{-1/2})$.  To the best of our knowledge this is the first time that such a local power analysis has been carried out for an independence test for multivariate data.  When neither marginal is known, we obtain our critical value via a permutation approach, again yielding a test of the nominal size.  Here, under our regularity conditions, the test is uniformly consistent (has power converging to 1) against alternatives whose mutual information is bounded away from zero.  We call our test \texttt{MINT}, short for \underline{M}utual \underline{In}formation \underline{T}est; it is implemented in the \texttt{R} package \textbf{IndepTest} \citep{BGS2017}.

As an application of these ideas, we are able to introduce new goodness-of-fit tests of normal linear models based on assessing the independence of our vector of covariates and an appropriately-defined notion of an error vector.  Such tests do not follow immediately from our earlier work, because we do not observe realisations of the error vector directly; instead, we only have access to residuals from a fitted model.  Nevertheless, we are able to provide rigorous justification, again in the form of a local analysis, for our approach.  It seems that, when fitting normal linear models, current standard practice in the applied statistics community for assessing goodness-of-fit is based on visual inspection of diagnostic plots such as those provided by the \texttt{plot} command in \texttt{R} when applied to an object of class \texttt{lm}.  Our aim, then, is to augment the standard toolkit by providing a formal basis for inference regarding the validity of the model.  Related work here includes \citet{Neumeyer2009}, \citet{NeumeyervanKeilegom2010}, \citet{MSW2012}, \citet{SenSen2014} and \citet{ShahBuhlmann2017}.

The remainder of the paper is organised as follows: after reviewing the concept of mutual information in Section~\ref{Sec:Defn}, we explain in Section~\ref{Sec:MIEstimation} how it can be estimated effectively from data using efficient entropy estimators.  Our new tests, for the cases where one of the marginals is known and where they are both unknown, are introduced in Sections~\ref{Sec:KnownMarginals} and~\ref{Sec:UnknownMarginals} respectively.  The regression setting is considered in Section~\ref{Sec:Regression}, and numerical studies on both simulated and real data are presented in Section~\ref{Sec:Simulations}.  Proofs are given in Section~\ref{Sec:Proofs}.  

The following notation is used throughout.  For a generic dimension $D \in \mathbb{N}$, let $\lambda_D$ and $\|\cdot\|$ denote Lebesgue measure and the Euclidean norm on $\mathbb{R}^D$ respectively.  If $f = d\mu/d\lambda_D$ and $g = d\nu/d\lambda_D$ are densities on $\mathbb{R}^D$ with respect to $\lambda_D$, we write $f \ll g$ if $\mu \ll \nu$.  For $z \in \mathbb{R}^D$ and $r \in [0,\infty)$, we write $B_z(r) := \{w \in \mathbb{R}^D:\|w-z\| \leq r\}$ and $B_z^\circ(r) := B_z(r) \setminus \{z\}$.  We write $\lambda_{\min}(A)$ for the smallest eigenvalue of a positive definite matrix $A$, and $\|B\|_{\mathrm{F}}$ for the Frobenius norm of a matrix $B$.

\section{Mutual information}
\label{Sec:Mutual}

\subsection{Definition and basic properties}
\label{Sec:Defn}

Retaining our notation from the introduction, let $\mathcal{Z} := \{(x,y):f(x,y) > 0\}$.  A very natural measure of dependence is the mutual information between $X$ and $Y$, defined to be
\[
	I(X;Y)= I(f) := \int_{\mathcal{Z}} f(x,y) \log \frac{f(x,y)}{f_X(x)f_Y(y)} \,d\lambda_d(x,y),
\]
when $f \ll f_Xf_Y$, and defined to be $\infty$ otherwise.  This is the Kullback--Leibler divergence between the joint distribution of $(X,Y)$ and the product of the marginal distributions, so is non-negative, and equal to zero if and only if $X$ and $Y$ are independent.  
Another attractive feature of mutual information as a measure of dependence is that it is invariant to smooth, invertible transformations of $X$ and $Y$. Indeed, suppose that $\mathcal{X} := \{x:f_X(x) > 0\}$ is an open subset of $\mathbb{R}^{d_X}$ and $\phi:\mathcal{X} \rightarrow \mathcal{X}'$ is a continuously differentiable bijection whose inverse $\phi^{-1}$ has Jacobian determinant $J(x')$ that never vanishes on $\mathcal{X}'$.  Let $\Phi:\mathcal{X} \times \mathcal{Y} \rightarrow \mathcal{X}' \times \mathcal{Y}$ be given by $\Phi(x,y) := \bigl(\phi(x),y\bigr)$, so that $\Phi^{-1}(x',y) = \bigl(\phi^{-1}(x'),y)$ also has Jacobian determinant $J(x')$.  Then $\Phi(X,Y)$ has density $g(x',y) = f\bigl(\phi^{-1}(x'),y\bigr)|J(x')|$ on $\Phi(\mathcal{Z})$ and $X' = \phi(X)$ has density $g_{X'}(x') = f_X\bigl(\phi^{-1}(x')\bigr)|J(x')|$ on $\mathcal{X}'$.  It follows that 
\begin{align*}
	I\bigl(\phi(X);Y\bigr) &= \int_{\Phi(\mathcal{Z})} g(x',y) \log \frac{g(x',y)}{g_{X'}(x')f_Y(y)} \, d\lambda_d(x',y) \\
&= \int_{\Phi(\mathcal{Z})} g(x',y) \log \frac{f\bigl(\phi^{-1}(x'),y\bigr)}{f_X\bigl(\phi^{-1}(x')\bigr)f_Y(y)} \, d\lambda_d(x',y) = I(X;Y).
\end{align*}
This means that mutual information is \emph{nonparametric} in the sense of \citet{Weihs17}, whereas several other measures of dependence, including distance covariance, RKHS measures and correlation-based notions are not in general.  Under a mild assumption, the mutual information between $X$ and $Y$ can be expressed in terms of their joint and marginal entropies; more precisely, writing $\mathcal{Y} := \{y:f_Y(y) > 0\}$, and provided that each of $H(X,Y)$, $H(X)$ and $H(Y)$ are finite,
\begin{align}
\label{Eq:Decomp}
I(X;Y) &= \int_{\mathcal{Z}} f \log f \, d\lambda_d - \int_{\mathcal{X}} f_X \log f_X \, d\mu_{d_X} - \int_{\mathcal{Y}} f_Y \log f_Y \, d\mu_{d_Y} \nonumber \\
&=: -H(X,Y) + H(X) + H(Y).
\end{align}
Thus, mutual information estimators can be constructed from entropy estimators.  

Moreover, the concept of mutual information is easily generalised to more complex situations.  For instance, suppose now that $(X,Y,W)$ has joint density $f$ on $\mathbb{R}^{d+d_W}$, and let $f_{(X,Y)|W}(\cdot|w)$, $f_{X|W}(\cdot|w)$ and $f_{Y|W}(\cdot|w)$ denote the joint conditional density of $(X,Y)$ given $W=w$ and the conditional densities of $X$ given $W=w$ and $Y$ given $W=w$ respectively.  When $f_{(X,Y)|W}(\cdot,\cdot|w) \ll f_{X|W}(\cdot|w)f_{Y|W}(\cdot|w)$ for each $w$ in the support of $W$, the conditional mutual information between $X$ and $Y$ given $W$ is defined as
\[
I(X;Y|W) := \int_{\mathcal{W}} f(x,y,w) \log \frac{f_{(X,Y)|W}(x,y|w)}{f_{X|W}(x|w)f_{Y|W}(y|w)} \, d\lambda_{d+d_W}(x,y,w),
\]
where $\mathcal{W} := \{(x,y,w):f(x,y,w) > 0\}$.  This can similarly be written as
\[
I(X;Y|W) = H(X,W) + H(Y,W) - H(X,Y,W) - H(W),
\]
provided each of the summands is finite.  

Finally, we mention that the concept of mutual information easily generalises to situations with $p$ random vectors.  In particular, suppose that $X_1,\ldots,X_p$ have joint density $f$ on $\mathbb{R}^d$, where $d = d_1 + \ldots + d_p$ and that $X_j$ has marginal density $f_j$ on $\mathbb{R}^{d_j}$.  Then, when $f \ll f_1\ldots f_p$, and writing $\mathcal{X}_p := \{(x_1,\ldots,x_p) \in \mathbb{R}^d:f(x_1,\ldots,x_p) > 0\}$, we can define
\begin{align*}
I(X_1;\ldots;X_p) &:= \int_{\mathcal{X}_p} f(x_1,\ldots,x_p) \log \frac{f(x_1,\ldots,x_p)}{f_1(x_1)\ldots f_p(x_p)} \, d\lambda_d(x_1,\ldots,x_p) \\
&= \sum_{j=1}^p H(X_j) - H(X_1,\ldots,X_p),
\end{align*}
with the second equality holding provided that each of the entropies is finite. The tests we introduce in Sections~\ref{Sec:KnownMarginals} and~\ref{Sec:UnknownMarginals} therefore extend in a straightforward manner to tests of independence of several random vectors. 

\subsection{Estimation of mutual information}
\label{Sec:MIEstimation}

For $i=1,\dots,n$, let $Z_{(1),i},\ldots,Z_{(n-1),i}$ denote a permutation of $\{Z_1,\ldots,Z_n\} \setminus \{Z_i\}$ such that $\|Z_{(1),i} - Z_i\| \leq \ldots \leq \|Z_{(n-1),i} - Z_i\|$.  For conciseness, we let
\[
\rho_{(k),i} := \|Z_{(k),i} - Z_i\|
\]
denote the distance between $Z_i$ and the $k$th nearest neighbour of $Z_i$.  To estimate the unknown entropies, we will use a weighted version of the Kozachenko--Leonenko estimator \citep{Kozachenko:87}.  For $k = k^Z \in \{1,\ldots,n-1\}$ and weights $w_1^Z,\ldots,w_k^Z$ satisfying $\sum_{j=1}^k w_j^Z = 1$, this is defined as 
\[
\hat{H}_n^Z = \hat{H}_{n,k}^{d,w^Z}(Z_1,\ldots,Z_n) := \frac{1}{n}\sum_{i=1}^n \sum_{j=1}^k w_j^Z \log \biggl(\frac{\rho_{(j),i}^d V_d (n-1)}{e^{\Psi(j)}}\biggr),
\]
where $V_d := \pi^{d/2}/\Gamma(1 + d/2)$ denotes the volume of the unit $d$-dimensional Euclidean ball and where $\Psi$ denotes the digamma function.  \citet{BSY2017} provided conditions on $k$, $w_1^Z,\ldots,w_k^Z$ and the underlying data generating mechanism under which $\hat{H}_n^Z$ is an efficient estimator of $H(Z)$ (in the sense that its asymptotic normalised squared error risk achieves the local asymptotic minimax lower bound) in arbitrary dimensions.  
With estimators $\hat{H}_n^X$ and $\hat{H}_n^Y$ of $H(X)$ and $H(Y)$ defined analogously as functions of $(X_1,\ldots,X_n)$ and $(Y_1,\ldots,Y_n)$ respectively, we can use~\eqref{Eq:Decomp} to define an estimator of mutual information by
\begin{equation}
\label{Eq:MIEst}
	\hat{I}_n = \hat{I}_n(Z_1, \ldots, Z_n)= \hat{H}_n^X + \hat{H}_n^Y - \hat{H}_n^Z.
\end{equation}
Having identified an appropriate mutual information estimator, we turn our attention in the next two sections to obtaining appropriate critical values for our independence tests.


\section{The case of one known marginal distribution}
\label{Sec:KnownMarginals}


In this section, we consider the case where at least one of $f_X$ and $f_Y$ in known (in our experience, little is gained by knowledge of the second marginal density), and without loss of generality, we take the known marginal to be $f_Y$.  We further assume that we can generate independent and identically distributed copies of $Y$, denoted $\{Y_i^{(b)}:i=1,\ldots,n,b=1,\ldots,B\}$, independently of $Z_1,\ldots,Z_n$.  Our test in this setting, which we refer to as \texttt{MINTknown} (or \texttt{MINTknown}$(q)$ when the nominal size $q \in (0,1)$ needs to be made explicit), will reject $H_0$ for large values of $\hat{I}_n$.  The ideal critical value, if both marginal densities were known,  would therefore be  
\[
	C_q^{(n)} := \inf \{ r \in \mathbb{R} : \mathbb{P}_{f_Xf_Y}( \hat{I}_n > r) \leq q \}.
\]
Using our pseudo-data $\{Y_i^{(b)}:i=1,\ldots,n,b=1,\ldots,B\}$, generated as described above, we define the statistics
\[
	\hat{I}_n^{(b)}= \hat{I}_n \bigl( (X_1, Y_1^{(b)}), \ldots, (X_n, Y_n^{(b)}) \bigr)
\]
for $b=1, \ldots, B$. Motivated by the fact that these statistics have the same distribution as $\hat{I}_n$ under $H_0$, we can estimate the critical value $C_q^{(n)}$ by
\[
	\hat{C}_q^{(n),B} := \inf \biggl\{ r \in \mathbb{R} : \frac{1}{B+1}\sum_{b=0}^B \mathbbm{1}_{\{\hat{I}_n^{(b)} \geq r\}} \leq q\biggr\},
\]
the $(1-q)$th quantile of $\{ \hat{I}_n^{(0)}, \ldots, \hat{I}_n^{(B)}\}$, where $\hat{I}_n^{(0)} := \hat{I}_n$.  The following lemma justifies this critical value estimate.
\begin{lemma}
\label{Lemma:Size}
For any $q \in (0,1)$ and $B \in \mathbb{N}$, the \emph{\texttt{MINTknown}}$(q)$ test that rejects $H_0$ if and only if $\hat{I}_n > \hat{C}_q^{(n),B}$ has size at most $q$, in the sense that
\[
\sup_{k \in \{1,\ldots,n-1\}} \sup_{(X,Y):I(X;Y) = 0} \mathbb{P}\bigl(\hat{I}_n > \hat{C}_q^{(n),B}\bigr) \leq q,
\]
where the inner supremum is over all joint distributions of pairs $(X,Y)$ with $I(X;Y) = 0$.
\end{lemma}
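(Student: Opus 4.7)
The proof rests on the classical exchangeability argument for Monte Carlo tests. First I would establish exchangeability of the vector $(\hat{I}_n^{(0)}, \hat{I}_n^{(1)}, \ldots, \hat{I}_n^{(B)})$ under the null. Setting $Y_i^{(0)} := Y_i$, note that under $H_0$ the sample $Y_1, \ldots, Y_n$ is i.i.d.\ $f_Y$ and independent of $(X_1, \ldots, X_n)$; combined with the sampling scheme for the pseudo-data, the full array $\{Y_i^{(b)} : 0 \leq b \leq B,\ 1 \leq i \leq n\}$ is therefore i.i.d.\ $f_Y$ and independent of $X := (X_1, \ldots, X_n)$. Hence, for any permutation $\sigma$ of $\{0, \ldots, B\}$, the block $\bigl((X_i, Y_i^{(\sigma(b))})_{i=1}^n\bigr)_{b=0}^B$ has the same joint distribution as its unpermuted counterpart. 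Because each $\hat{I}_n^{(b)}$ is the same measurable function of $(X_i, Y_i^{(b)})_{i=1}^n$, the vector $(\hat{I}_n^{(0)}, \ldots, \hat{I}_n^{(B)})$ is exchangeable.

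Next I would translate the critical value into a rank statement. Writing $T_b := \hat{I}_n^{(b)}$ and $N(r) := \sum_{b=0}^B \mathbbm{1}\{T_b \geq r\}$, the function $N$ is non-increasing, so the set $\{r : N(r) \leq q(B+1)\}$ is a right-infinite interval and every $r > \hat{C}_q^{(n),B}$ lies inside it. Thus, on the event $\{T_0 > \hat{C}_q^{(n),B}\}$, evaluating at $r = T_0$ gives $R := |\{b : T_b \geq T_0\}| = N(T_0) \leq q(B+1)$, and since $R$ is an integer, $R \leq k := \lfloor q(B+1) \rfloor$.

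The conclusion then follows from a purely combinatorial inequality combined with exchangeability. For $j = 0, \ldots, B$, define $R_j := |\{b : T_b \geq T_j\}|$; listing the distinct values in $\{T_0, \ldots, T_B\}$ as $v_1 > v_2 > \cdots > v_M$ with multiplicities $m_\ell$ and cumulative counts $s_\ell := m_1 + \cdots + m_\ell$, one sees that $R_j = s_\ell$ precisely when $T_j = v_\ell$. Hence $\sum_{j=0}^B \mathbbm{1}\{R_j \leq k\} = s_{\ell^\star}$, where $\ell^\star$ is the largest $\ell$ with $s_\ell \leq k$, and so this sum is pointwise at most $k$. Taking expectations and using the exchangeability of $(R_0, \ldots, R_B)$ (inherited from that of $(T_0, \ldots, T_B)$) yields $\mathbb{P}(R_0 \leq k) \leq k/(B+1) \leq q$, and assembling gives $\mathbb{P}(\hat{I}_n > \hat{C}_q^{(n),B}) \leq q$. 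Neither step depends on $k \in \{1, \ldots, n-1\}$ or on the specific joint law of $(X,Y)$ beyond $I(X;Y) = 0$ (equivalently $X \perp\!\!\!\perp Y$), so the uniform bound follows. The one point that requires care is the treatment of ties: using the weak inequality $T_b \geq T_0$ in $R$ (matching the left-continuous indicator in $N$) is exactly what preserves the bound $\sum_j \mathbbm{1}\{R_j \leq k\} \leq k$ without any continuity hypothesis on the law of $\hat{I}_n^{(b)}$, which would be impossible to impose given the required uniformity.
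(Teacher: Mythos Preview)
Your proof is correct and follows essentially the same exchangeability argument as the paper: under $H_0$, the sequence $(\hat I_n^{(0)},\ldots,\hat I_n^{(B)})$ is exchangeable, and the rejection event is contained in the event that the rank of $\hat I_n^{(0)}$ is at most $\lfloor q(B+1)\rfloor$, which has probability at most $q$. The only cosmetic difference is that the paper dispatches ties by the standard ``break ties at random, so the rank is uniform on $\{1,\ldots,B+1\}$'' device, whereas you give the equivalent deterministic combinatorial inequality $\sum_{j=0}^B \mathbbm{1}\{R_j\le k\}\le k$; both yield the same bound $\lfloor q(B+1)\rfloor/(B+1)$.
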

An interesting feature of \texttt{MINTknown}, which is apparent from the proof of Lemma~\ref{Lemma:Size}, is that there is no need to calculate $\hat{H}_n^X$ in~\eqref{Eq:Decomp}, either on the original data, or on the pseudo-data sets $\{(X_1, Y_1^{(b)}), \ldots, (X_n, Y_n^{(b)}):b = 1,\ldots,B\}$.  This is because in the decomposition of the event $\{\hat{I}_n^{(b)} \geq \hat{I}_n\}$ into entropy estimates, $\hat{H}_n^X$ appears on both sides of the inequality, so it cancels.  This observation somewhat simplifies our assumptions and analysis, as well as reducing the number of tuning parameters that need to be chosen.

The remainder of this section is devoted to a rigorous study of the power of \texttt{MINTknown} that is compatible with a sequence of local alternatives $(f_n)$ having mutual information $I_n \rightarrow 0$.  To this end, we first define the classes of alternatives that we consider; these parallel the classes introduced by \citet{BSY2017} in the context of entropy estimation.  Let $\mathcal{F}_d$ denote the class of all density functions with respect to Lebesgue measure on $\mathbb{R}^d$.  For $f \in \mathcal{F}_d$ and $\alpha > 0$, let
\[
\mu_\alpha(f) := \int_{\mathbb{R}^d} \|z\|^\alpha f(z) \, dz.
\]
Now let $\mathcal{A}$ denote the class of decreasing functions $a:(0,\infty) \rightarrow [1,\infty)$ satisfying $a(\delta) = o(\delta^{-\epsilon})$ as $\delta \searrow 0$, for every $\epsilon > 0$.  If $a \in \mathcal{A}$, $\beta > 0$, $f \in \mathcal{F}_d$ is $m := (\lceil \beta \rceil -1$)-times differentiable and $z \in \mathcal{Z}$, we define $r_a(z):=\{8d^{1/2}a(f(z))\}^{-1/(\beta \wedge 1)}$ and
\[
M_{f,a,\beta}(z) := \max \biggl\{ \max_{t=1,\ldots, m} \frac{\|f^{(t)}(z)\|}{f(z)} \, , \, \sup_{w \in B_z^\circ(r_a(z))} \frac{\|f^{(m)}(w)-f^{(m)}(z)\|}{f(z) \|w-z\|^{\beta- m}} \biggr\}.
\]
The quantity $M_{f,a,\beta}(z)$ measures the smoothness of derivatives of $f$ in neighbourhoods of $z$, relative to $f(z)$ itself.  Note that these neighbourhoods of $z$ are allowed to become smaller when $f(z)$ is small.  Finally, for $\Theta := (0,\infty)^4 \times \mathcal{A}$, and $\theta = (\alpha,\beta,\nu,\gamma,a) \in \Theta$, let
\[
\mathcal{F}_{d,\theta} := \biggl\{f \in \mathcal{F}_d: \mu_\alpha(f) \leq \nu, \|f\|_\infty \leq \gamma, \sup_{z:f(z) \geq \delta} M_{f,a,\beta}(z) \leq a(\delta) \ \forall \delta > 0\biggr\}.
\]
\citet{BSY2017} show that all Gaussian and multivariate-$t$ densities (amongst others) belong to $\mathcal{F}_{d,\theta}$ for appropriate $\theta \in \Theta$.  

Now, for $d_X, d_Y \in \mathbb{N}$ and $\vartheta=(\theta,\theta_Y) \in \Theta^2$, define
\[
	\mathcal{F}_{d_X,d_Y,\vartheta} := \Bigl\{ f \in \mathcal{F}_{d_X+d_Y,\theta} : f_Y  \in \mathcal{F}_{d_Y, \theta_Y}, f_Xf_Y \in \mathcal{F}_{d_X+d_Y,\theta} \Bigr\}
\]
and, for $b \geq 0$, let
\[
	\mathcal{F}_{d_X,d_Y,\vartheta}(b) := \Bigl\{ f \in \mathcal{F}_{d_X,d_Y,\vartheta} : I(f)  > b \Bigr\}.
\]
Thus, $\mathcal{F}_{d_X,d_Y,\vartheta}(b)$ consists of joint densities whose mutual information is greater than $b$.  In Theorem~\ref{Thm:Power} below, we will show that for a suitable choice of $b=b_n$ and for certain $\vartheta \in \Theta^2$, the power of the test defined in Lemma~\ref{Lemma:Size} converges to 1, uniformly over $\mathcal{F}_{d_X,d_Y,\vartheta}(b)$.

Before we can state this result, however, we must define the allowable choices of $k$ and the weight vectors.  Given $d \in \mathbb{N}$ and $\theta=(\alpha,\beta,\gamma,\nu,a) \in \Theta$ let
\[
	\tau_1(d,\theta) :=  \min \biggl\{ \frac{2 \alpha}{5\alpha+3d} \, , \, \frac{\alpha-d}{2\alpha} \, , \, \frac{4(\beta \wedge 1)}{4(\beta \wedge 1)+3d}\biggr\} 
\]
and
\[
	\tau_2(d,\theta) := \min \biggl\{1-\frac{d}{2 \beta} \, , \, 1- \frac{d/4}{\lfloor d/4 \rfloor +1} \biggr\}
\]
Note that $\min_{i=1,2} \tau_i(d,\theta) >0$ if and only if both $\alpha>d$ and $\beta > d/2$.  Finally, for $k \in \mathbb{N}$, let
\begin{align*}
	\mathcal{W}^{(k)} := \biggl\{ w = (w_1,\ldots,w_k)^T \in \mathbb{R}^k : &\sum_{j=1}^k w_j \frac{\Gamma(j+2 \ell/d)}{\Gamma(j)} =0 \quad \text{for} \, \, \ell=1, \ldots, \lfloor d/4 \rfloor \nonumber \\
	&\sum_{j=1}^k w_j= 1\, \text{and} \, \, w_j=0 \, \, \text{if}\, j \notin \{ \lfloor k/d \rfloor, \lfloor 2k/d \rfloor, \ldots, k\} \biggr\}.
\end{align*}
Thus, our weights sum to 1; the other constraints ensure that the dominant contributions to the bias of the unweighted Kozachenko--Leonenko estimator cancel out to sufficiently high order, and that the corresponding $j$th nearest neighbour distances are not too highly correlated.
\begin{thm}
\label{Thm:Power}
Fix $d_X,d_Y \in \mathbb{N}$, set $d=d_X+d_Y$ and fix $\vartheta = (\theta, \theta_Y) \in \Theta^2$ with 
\[
	\min\Bigl\{\tau_1(d,\theta)\, , \, \tau_1(d_Y,\theta_Y)\, , \,\tau_2(d,\theta)\, , \,\tau_2(d_Y,\theta_Y)\Bigr\}  >0.
\]
Let $k_0^*=k_{0,n}^*, k_Y^*=k_{Y,n}^*$ and $k^*=k_n^*$ denote any deterministic sequences of positive integers with $k_0^* \leq \min\{k_Y^*, k^*\}$, with $k_0^* / \log^5 n \rightarrow \infty$ and with
\[
	\max \biggl\{ \frac{k^*}{n^{\tau_1(d,\theta)-\epsilon}} \, , \, \frac{k_Y^*}{n^{\tau_1(d_Y,\theta_Y)-\epsilon}} \, , \, \frac{k^*}{n^{\tau_2(d,\theta)}} \, , \, \frac{k_{Y}^*}{n^{\tau_2(d_Y,\theta_Y)}} \biggr\} \rightarrow 0
\]
for some $\epsilon>0$. Also suppose that $w_Y=w_Y^{(k_Y)} \in \mathcal{W}^{(k_Y)}$ and $w=w^{(k)} \in \mathcal{W}^{(k)}$, and that $\limsup_n \max(\|w\|,\|w_Y\|) < \infty$. Then there exists a sequence $(b_n)$ such that $b_n=o(n^{-1/2})$ and with the property that for each $q \in (0,1)$ and any sequence $(B_n^*)$ with $B_n^* \rightarrow \infty$,
\[
	\inf_{B_n \geq B_n^*} \inf_{\substack{k_Y \in \{k_0^*, \ldots, k_Y^*\} \\ k \in \{k_0^*, \ldots, k^*\}}}\inf_{f \in \mathcal{F}_{d_X,d_Y,\vartheta}(b_n)} \mathbb{P}_f( \hat{I}_n > \hat{C}_q^{(n),B_n}) \rightarrow 1.
\]
\end{thm}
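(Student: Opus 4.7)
The plan is to exploit the fact that, under $H_0$, the efficient influence function of $\hat I_n$ vanishes identically. Applying the uniform efficient expansion of \citet{BSY2017} to each of $\hat H_n^X$, $\hat H_n^Y$, $\hat H_n^Z$ and recombining via~\eqref{Eq:MIEst} should yield, uniformly over $f \in \mathcal{F}_{d_X,d_Y,\vartheta}$ and over admissible tuning parameters,
\[
\hat I_n - I(f) = \frac{1}{n}\sum_{i=1}^n L_f(Z_i) + R_n, \qquad L_f(x,y) := \log\frac{f(x,y)}{f_X(x)f_Y(y)} - I(f),
\]
with $\mathbb{E}_f(R_n^2) \leq \eta_n/n$ for a deterministic sequence $\eta_n \to 0$. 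Under $H_0$ one has $L_f \equiv 0$ and $I(f)=0$, so $\hat I_n$ concentrates around $0$ at an $o(n^{-1/2})$ rate. Moreover, a second-order Taylor expansion of the KL divergence around the product density gives $\Var_f\{L_f(Z)\} = O(I(f))$ as $I(f) \to 0$, so the first-order variance shrinks linearly with the signal.

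Given this expansion, the argument divides into three steps. First, I verify that the hypotheses on $(\tau_1,\tau_2)$, on the weight vectors $w,w_Y$, and on the class $\mathcal{F}_{d_X,d_Y,\vartheta}$ (which in particular forces $f_Xf_Y \in \mathcal{F}_{d,\theta}$) exactly match the inputs needed to apply the \citet{BSY2017} efficiency theorems separately to $\hat H_n^Y$ and $\hat H_n^Z$. Crucially, the term $\hat H_n^X$ cancels from every comparison $\hat I_n^{(0)} \geq \hat I_n^{(b)}$ because the $X_i$'s are common to the original data and every pseudo-dataset, so it needs no control. Secondly, I bound the critical value: each pseudo-sample $\{(X_i,Y_i^{(b)})\}_{i=1}^n$ has joint law $f_Xf_Y$, so $\mathbb{E}[(\hat I_n^{(b)})^2] \leq \eta_n/n$ for every $b \geq 1$, regardless of the true $f$. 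Fixing a rate $a_n$ with $\eta_n^{1/2}n^{-1/2} \ll a_n \ll n^{-1/2}$ (for instance $a_n := \eta_n^{1/3} n^{-1/2}$), Markov's inequality gives $\mathbb{P}(\hat I_n^{(b)} > a_n) = o(1)$; a second Markov bound applied to the count $\#\{b \geq 1 : \hat I_n^{(b)} > a_n\}$ then yields $\hat C_q^{(n),B_n} \leq a_n$ with probability tending to $1$, uniformly over $B_n \geq B_n^*$ and over $f$.

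Thirdly, I detect the signal. Combining the expansion with $\Var_f\{L_f(Z)\} \leq C\{I(f) + I(f)^2\}$ and Chebyshev's inequality yields, uniformly over $f$ with $I(f) \geq b_n$,
\[
\hat I_n \geq I(f) - C\bigl(\sqrt{I(f)/n} + \sqrt{\eta_n/n}\bigr)
\]
with probability tending to $1$. I then choose a sequence $b_n = o(n^{-1/2})$ that dominates $a_n$, $\sqrt{\eta_n/n}$ and $n^{-1}$ simultaneously; such a $b_n$ exists for any $\eta_n \to 0$ (for example, one may take $b_n$ to be the geometric mean of $n^{-1/2}$ and $\max\{a_n,n^{-1}\}$). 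Then $\hat I_n \geq b_n\{1-o(1)\} > a_n \geq \hat C_q^{(n),B_n}$ with probability tending to $1$, proving the theorem.

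The main obstacle will be the uniform expansion of the first step: establishing the $L^2$ expansion of the weighted Kozachenko--Leonenko estimator with an explicit $o(1/n)$ remainder that is uniform over the regularity class, over the full admissible tuning range $k \in \{k_0^*, \ldots, k^*\}$ rather than for a fixed $k$, and simultaneously for the two densities $f$ and $f_Xf_Y$. Once this uniform expansion is in hand---essentially a quantitative reworking of the bias and variance calculations underlying \citet{BSY2017}---the remaining probabilistic arguments reduce to Markov and Chebyshev bounds carefully tuned to exploit the degeneracy of the null influence function, which is what makes the super-efficient rate $b_n = o(n^{-1/2})$ detectable.
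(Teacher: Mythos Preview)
Your overall architecture---BSY expansion, Markov/Chebyshev, then tune $b_n$---matches the paper's, but Step~2 contains a genuine gap that breaks the argument as written. You correctly observe that $\hat{H}_n^X$ cancels from every comparison $\hat{I}_n^{(b)} \geq \hat{I}_n$, so no control on it is needed. Yet your bound $\mathbb{E}\bigl[(\hat{I}_n^{(b)})^2\bigr] \leq \eta_n/n$ precisely \emph{requires} control of $\hat{H}_n^X$: the class $\mathcal{F}_{d_X,d_Y,\vartheta}$ imposes regularity on $f$, $f_Y$ and $f_Xf_Y$, but none on $f_X$, so the BSY efficient expansion does not apply to $\hat{H}_n^X$. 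Applying the BSY result to the two estimators that \emph{are} controlled gives $\hat{I}_n^{(b)} = \bigl(\hat{H}_n^X + n^{-1}\sum_i \log f_X(X_i)\bigr) + o_P(n^{-1/2})$, and the bracketed Kozachenko--Leonenko remainder for $X$ is common to every $b$ but has no reason to be $o_P(n^{-1/2})$. Hence $\hat{I}_n, \hat{I}_n^{(1)},\ldots$ and $\hat{C}_q^{(n),B_n}$ are all shifted by the \emph{same} uncontrolled amount, and your attempt to separately pin down $\hat{C}_q^{(n),B_n} \leq a_n$ and $\hat{I}_n > a_n$ collapses. The paper never locates either quantity individually: it bounds $\mathbb{P}_f(\hat{I}_n \leq \hat{C}_q^{(n),B_n}) \leq q^{-1}\mathbb{P}_f(\hat{I}_n^{(1)} \geq \hat{I}_n) + \{q(B_n+1)\}^{-1}$ by a single Markov step on the count, and then controls $\mathbb{P}_f(\hat{I}_n^{(1)} \geq \hat{I}_n)$ by applying the BSY expansion to the \emph{difference} $\hat{I}_n - \hat{I}_n^{(1)}$, in which $\hat{H}_n^X$ has genuinely cancelled.

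A second, smaller point: your claim $\Var_f\{L_f(Z)\} \leq C\{I(f) + I(f)^2\}$ from a ``second-order Taylor expansion'' is heuristically right when $f/(f_Xf_Y)$ is pointwise close to~$1$, but making it uniform over $\mathcal{F}_{d_X,d_Y,\vartheta}$ is not obvious. The paper proves only the weaker bound $V(f) = O(I(f)^{1/4})$ (Lemma~\ref{Prop:Varlog}), obtained via Pinsker's inequality (to bound the negative part of $\log\{f/(f_Xf_Y)\}$ by $I(f)^{1/2}$), H\"older, and a uniform third-moment bound on the log-ratio. This weaker exponent forces the explicit choice $b_n = \max\{\epsilon_n^{1/2} n^{-1/2}, n^{-4/7}\log n\}$ rather than anything as small as $n^{-1}$; the $n^{-4/7}$ comes from balancing $V(f)/\{n I(f)^2\} \lesssim \{n I(f)^{7/4}\}^{-1}$.
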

Theorem~\ref{Thm:Power} provides a strong guarantee on the ability of \texttt{MINTknown} to detect alternatives, uniformly over classes whose mutual information is at least $b_n$, where we may even have $b_n = o(n^{-1/2})$.  

\section{The case of unknown marginal distributions}
\label{Sec:UnknownMarginals}

We now consider the setting in which the marginal distributions of both $X$ and $Y$ are unknown.  Our test statistic remains the same, but now we estimate the critical value by permuting our sample in an attempt to mimic the behaviour of the test statistic under $H_0$.  More explicity, for some $B \in \mathbb{N}$, we propose independently of $(X_1,Y_1),\ldots,(X_n,Y_n)$ to simulate independent random variables $\tau_1, \ldots, \tau_B$ uniformly from $S_n$, the permutation group of $\{1, \ldots, n\}$, and for $b=1,\ldots,B$, set $Z_i^{(b)}:=(X_i, Y_{\tau_b(i)})$ and $\tilde{I}_n^{(b)} := \hat{I}_n(Z_1^{(b)}, \ldots, Z_n^{(b)})$.  For $q \in (0,1)$, we can now estimate $C_q^{(n)}$ by 
\[
	\tilde{C}_q^{(n),B}:= \inf \biggl\{ r \in \mathbb{R} : \frac{1}{B+1}\sum_{b=0}^B \mathbbm{1}_{\{\tilde{I}_n^{(b)} \geq r\}} \leq q\biggr\},
\]
where $\tilde{I}_n^{(0)} := \hat{I}_n$, and refer to the test that rejects $H_0$ if and only if $\hat{I}_n > \tilde{C}_q^{(n),B}$ as \texttt{MINTunknown}$(q)$.
\begin{lemma}
\label{Thm:size}
For any $q \in (0,1)$ and $B \in \mathbb{N}$, the \emph{\texttt{MINTunknown}}$(q)$ test has size at most $q$:
\[
\sup_{k \in \{1,\ldots,n-1\}} \sup_{(X,Y):I(X;Y) = 0} \mathbb{P}\bigl(\hat{I}_n > \tilde{C}_q^{(n),B}\bigr) \leq q.
\]
\end{lemma}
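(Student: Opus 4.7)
The plan is the classical permutation-test template: show that under $H_0$ the vector $(\tilde{I}_n^{(0)}, \tilde{I}_n^{(1)}, \ldots, \tilde{I}_n^{(B)})$ is exchangeable, and then read off the size bound from the fact that $\tilde{C}_q^{(n),B}$ is an empirical upper $q$-quantile of this vector. No analytic properties of $\hat{I}_n$ are needed; only the distributional symmetry induced by $H_0$ and the construction of the $\tau_b$'s.

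For the exchangeability step, let $S_n$ act on the data $D := ((X_i, Y_i))_{i=1}^n$ by $\sigma \cdot D := ((X_i, Y_{\sigma(i)}))_{i=1}^n$, and condition on the pair consisting of $(X_1, \ldots, X_n)$ and the multiset $\{Y_1, \ldots, Y_n\}$. Under $H_0$ the sample is i.i.d.\ with $X \perp\!\!\!\perp Y$ and absolutely continuous, so the $Y_i$'s are almost surely distinct; on this conditioning, $D$ is therefore uniformly distributed on its $n!$-element orbit $\mathcal{O}_D$ under the $S_n$-action. I would then verify directly that, conditional on the same $\sigma$-algebra, $(D, \tau_1 \cdot D, \ldots, \tau_B \cdot D)$ is in fact i.i.d.\ uniform on $\mathcal{O}_D$: fixing $D = d$, each $\tau_b \cdot d$ for $b \geq 1$ is uniform on $\mathcal{O}_D$ and mutually independent across $b$, and averaging over $d$ uniform on $\mathcal{O}_D$ preserves the i.i.d.\ uniform structure because the map $(d, \tau_1, \ldots, \tau_B) \mapsto (d, \tau_1 \cdot d, \ldots, \tau_B \cdot d)$ is a bijection from $\mathcal{O}_D \times S_n^B$ onto $\mathcal{O}_D^{B+1}$ when the action is free. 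Applying $\hat{I}_n$ coordinatewise then makes $(\tilde{I}_n^{(0)}, \ldots, \tilde{I}_n^{(B)})$ conditionally i.i.d., and hence exchangeable after marginalisation.

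The size step is standard. Breaking ties by an independent uniform tag if needed, exchangeability implies that the rank of $\tilde{I}_n^{(0)}$ among $\tilde{I}_n^{(0)}, \ldots, \tilde{I}_n^{(B)}$ in decreasing order is uniform on $\{1, \ldots, B+1\}$. The definition of $\tilde{C}_q^{(n),B}$ as $\inf\{r : (B+1)^{-1} \sum_{b=0}^B \mathbbm{1}_{\{\tilde{I}_n^{(b)} \geq r\}} \leq q\}$ forces the event $\{\hat{I}_n > \tilde{C}_q^{(n),B}\}$ to imply that at most $\lfloor q(B+1) \rfloor$ of the $\tilde{I}_n^{(b)}$'s are $\geq \hat{I}_n$, so $\tilde{I}_n^{(0)}$ lies among the top $\lfloor q(B+1) \rfloor$ order statistics, which occurs with probability at most $\lfloor q(B+1) \rfloor / (B+1) \leq q$. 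The bound is pointwise in the joint law satisfying $I(X;Y) = 0$ and in $k \in \{1, \ldots, n-1\}$, so taking suprema in the statement is harmless.

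The only real obstacle is bookkeeping: handling ties in the $\tilde{I}_n^{(b)}$'s and ensuring freeness of the action, both of which are mild — the former by an independent tie-breaker, the latter automatic from absolute continuity of the underlying distribution. The conceptual work is entirely contained in the orbit argument for exchangeability.
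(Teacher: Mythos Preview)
Your proposal is correct and follows essentially the same two-step template as the paper: first establish exchangeability of $(\tilde{I}_n^{(0)},\ldots,\tilde{I}_n^{(B)})$ under $H_0$, then deduce the size bound from the uniform rank argument. The only difference is cosmetic: you obtain exchangeability by conditioning on $(X_1,\ldots,X_n)$ and the multiset $\{Y_1,\ldots,Y_n\}$ and arguing that the $(B+1)$-tuple is conditionally i.i.d.\ uniform on the orbit, whereas the paper verifies exchangeability unconditionally via a direct change of variables in the sum over $(\pi_1,\ldots,\pi_B)\in S_n^B$, using that under $H_0$ the data law is invariant under $(X_i,Y_i)\mapsto (X_i,Y_{\tau(i)})$ and that $(\pi_1,\ldots,\pi_B)\mapsto (\pi_{\sigma(1)}\pi_{\sigma(0)}^{-1},\ldots,\pi_{\sigma(B)}\pi_{\sigma(0)}^{-1})$ is a bijection on $S_n^B$. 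The paper's argument does not need freeness of the action (i.e.\ distinct $Y_i$'s), but since the paper already assumes absolute continuity this is immaterial here.
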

Note that $\hat{I}_n > \tilde{C}_q^{(n),B}$ if and only if
\begin{equation}
\label{Eq:Reject}
	\frac{1}{B+1}\sum_{b=0}^B \mathbbm{1}_{\{\tilde{I}_n^{(b)} \geq \hat{I}_n\}} \leq q.
\end{equation}
This shows that estimating either of the marginal entropies is unnecessary to carry out the test, since $\tilde{I}_n^{(b)}-\hat{I}_n=\hat{H}_n^{Z}-\tilde{H}_n^{(b)}$, where $\tilde{H}_n^{(b)} := \hat{H}_{n,k}^{d,w^Z}(Z_1^{(b)},\ldots,Z_n^{(b)})$ is the weighted Kozachenko--Leonenko joint entropy estimator based on the permuted data.


We now study the power of \texttt{MINTunknown}, and begin by introducing the classes of marginal densities that we consider. To define an appropriate notion of smoothness, for $z \in \{w:g(w) > 0\} =: \mathcal{W}$, $g \in \mathcal{F}_d$ and $\delta > 0$, let 
\begin{equation}
\label{Eq:r}
r_{z,g,\delta} := \biggl\{\frac{\delta e^{\Psi(k)}}{V_d(n-1)g(z)}\biggr\}^{1/d}.
\end{equation}
Now, for $A$ belonging to the class of Borel subsets of $\mathcal{W}$, denoted $\mathcal{B}(\mathcal{W})$, define 
\[
	M_{g}(A):= \sup_{\delta \in (0,2]} \sup_{z \in A} \biggl| \frac{1}{V_d r_{z,g,\delta}^d g(z)} \int_{B_z(r_{z,g,\delta})} g \,d \lambda_d  - 1\biggr|.
\]
Both $r_{z,g,\delta}$ and $M_{g}(\cdot)$ depend on $n$ and $k$, but for simplicity we suppress this in our notation.  Let $\phi=(\alpha,\mu,\nu,(c_n),(p_n)) \in (0,\infty)^3 \times [0,\infty)^\mathbb{N} \times [0,\infty)^\mathbb{N}=:\Phi$ and define
\begin{align*}
	\mathcal{G}_{d_X,d_Y,\phi} := \biggl\{ f &\in \mathcal{F}_{d_X+d_Y} : \max\{\mu_\alpha(f_X),\mu_\alpha(f_Y)\} \leq \mu, \max\{\|f_X\|_\infty, \|f_Y\|_\infty \} \leq \nu, \\
	 &\exists \mathcal{W}_n \in \mathcal{B}(\mathcal{X} \times \mathcal{Y}) \, \text{s.t.}\, M_{f_Xf_Y}(\mathcal{W}_n) \leq c_n, \int_{\mathcal{W}_n^c} f_Xf_Y \, d\lambda_d \leq p_n \ \forall n \in \mathbb{N}\biggr\}.
\end{align*}
In addition to controlling the $\alpha$th moment and uniform norms of the marginals $f_X$ and $f_Y$, the class $\mathcal{G}_{d,\phi}$ asks for there to be a (large) set $\mathcal{W}_n$ on which this product of marginal densities is uniformly well approximated by a constant over small balls.  This latter condition is satisfied by products of many standard parametric families of marginal densities, including normal, Weibull, Gumbel, logistic, gamma, beta, and $t$ densities, and is what ensures that nearest neighbour methods are effective in this context.

The corresponding class of joint densities we consider, for $\phi=(\alpha,\mu,\nu,(c_n),(p_n)) \in \Phi$, is
\begin{align*}
	\mathcal{H}_{d,\phi} := \biggl\{ f \in \mathcal{F}_{d}&: \mu_\alpha(f) \leq \mu, \|f\|_\infty \leq \nu, \\
	 &\exists \mathcal{Z}_n \in \mathcal{B}(\mathcal{Z}) \, \text{s.t.}\, M_{f}(\mathcal{Z}_n) \leq c_n, \int_{\mathcal{Z}_n^c} f \, d\lambda_d \leq p_n \ \forall n \in \mathbb{N}\biggr\}.
\end{align*}
In many cases, we may take $\mathcal{Z}_n = \{ z : f(z) \geq \delta_n \}$, for some appropriately chosen sequence $(\delta_n)$ with $\delta_n \rightarrow 0$ as $n \rightarrow \infty$.  For instance, suppose we fix $d \in \mathbb{N}$ and $\theta = (\alpha, \beta, \nu, \gamma, a) \in \Theta$.  Then, by \citet[][Lemma~12]{BSY2017}, there exists such a sequence $(\delta_n)$, as well as sequences $(c_n)$ and $(p_n)$, where 
\[
	\delta_n = \frac{k a(k/(n-1))^\frac{d}{\beta \wedge 1}}{n-1} \log(n-1), \quad c_n = \frac{15}{7} \frac{2^\frac{\beta \wedge 1}{d} d^{3/2}}{d+(\beta \wedge 1)} \log^{-\frac{\beta \wedge 1}{d}} (n-1),
\]
for large $n$ and $p_n = o( (k/n)^{\frac{\alpha}{\alpha+d} - \epsilon})$ for every $\epsilon > 0$, such that $\mathcal{F}_{d, \theta} \subseteq \mathcal{H}_{d, \phi}$ with $\phi=(\alpha, \mu, \nu, (c_n), (p_n)) \in \Phi$.
We are now in a position to state our main result on the power of \texttt{MINTunknown}.
\begin{thm}
\label{Thm:MINTunknownpower}
Let $d_X, d_Y \in \mathbb{N}$, let $d = d_X+d_Y$ and fix $\phi=(\alpha,\mu,\nu,(c_n),(p_n)) \in \Phi$ with $c_n \rightarrow 0$ and $p_n = o(1/\log n)$ as $n \rightarrow \infty$.  Let $k_0^* = k_{0,n}^*$ and $k_1^* = k_{1,n}^*$ denote two deterministic sequences of positive integers satisfying $k_0^* \leq k_1^*$, $k_0^*/ \log^2 n \rightarrow \infty$ and $(k_1^* \log^2 n)/n \rightarrow 0$.  Then for any $b > 0$, $q \in (0,1)$ and any sequence $(B_n^*)$ with $B_n^* \rightarrow \infty$ as $n \rightarrow \infty$, 
\[
	\inf_{B_n \geq B_n^*} \inf_{k \in \{k_0^*,\ldots,k_1^*\}} \inf_{\substack{f \in \mathcal{G}_{d_X,d_Y,\phi} \cap \mathcal{H}_{d,\phi}:\\ I(f) \geq b}} \mathbb{P}_f( \hat{I}_n > \tilde{C}_q^{(n),B_n}) \rightarrow 1
\]
as $n \rightarrow \infty$.
\end{thm}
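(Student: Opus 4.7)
The plan exploits the identity $\tilde I_n^{(b)} - \hat I_n = \hat H_n^Z - \tilde H_n^{(b)}$ noted after Lemma~\ref{Thm:size}: since $\hat H_n^X$ and $\hat H_n^Y$ depend only on the multisets $\{X_i\}_{i=1}^n$ and $\{Y_i\}_{i=1}^n$, both of which are invariant under $\tau_b$, the rejection condition \eqref{Eq:Reject} depends only on the weighted Kozachenko--Leonenko joint entropy estimator evaluated on the original sample and on each of the $B_n$ permuted samples. The whole proof therefore reduces to showing that, uniformly over $f \in \mathcal{G}_{d_X,d_Y,\phi} \cap \mathcal{H}_{d,\phi}$ with $I(f) \ge b$ and $k \in \{k_0^*,\ldots,k_1^*\}$,
\[
\hat H_n^Z \stackrel{\mathbb{P}}{\longrightarrow} H(X,Y) \quad\text{and}\quad \tilde H_n^{(b)} \stackrel{\mathbb{P}}{\longrightarrow} H(X) + H(Y),
\]
so that these quantities are asymptotically separated by $I(f) \ge b > 0$.

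The first convergence is the standard consistency of the weighted Kozachenko--Leonenko estimator applied to an i.i.d.\ sample from $f$: the conditions defining $\mathcal{H}_{d,\phi}$ (finite $\alpha$th moment $\mu_\alpha(f)$, finite $\|f\|_\infty$, and the local flatness control $M_f(\mathcal{Z}_n) \le c_n \to 0$ on a set of $f$-mass at least $1-p_n$) are precisely the ingredients needed in the consistency analysis of \citet{BSY2017}. Under the hypotheses $k_0^*/\log^2 n \to \infty$, $(k_1^* \log^2 n)/n \to 0$, $c_n \to 0$ and $p_n = o(1/\log n)$, their arguments yield uniform $L^1$ consistency $\sup_f \sup_k \mathbb{E}_f |\hat H_n^Z - H(X,Y)| \to 0$, and hence convergence in probability.

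The main obstacle is the second convergence. The permuted sample $\{(X_i, Y_{\tau_b(i)})\}_{i=1}^n$ is exchangeable but not i.i.d.\ from $f_X f_Y$, because the $Y$-values are coupled across indices through the single permutation $\tau_b$. My plan is to couple the permuted data to fresh i.i.d.\ variables $Y_1',\ldots,Y_n' \stackrel{\text{iid}}{\sim} f_Y$ independent of $\{(X_i,Y_i)\}$: since the expected number of fixed points of $\tau_b$ is $1$, and for any non-fixed-point $i$ the pair $(X_i, Y_{\tau_b(i)})$ has marginal law exactly $f_X \otimes f_Y$, a matching argument constructs a coupling in which $Y_{\tau_b(i)} = Y_i'$ for all but $O_\mathbb{P}(1)$ indices. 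The moment and supremum-norm controls on $f_X f_Y$ coming from $f \in \mathcal{G}_{d_X,d_Y,\phi}$ provide uniform integrability of $|\log \rho_{(j),i}^d|$, so that altering $O_\mathbb{P}(1)$ points in the configuration perturbs the estimator by $o_\mathbb{P}(1)$ after the $1/n$ averaging. Consistency of the estimator on the fully i.i.d.\ sample $\{(X_i, Y_i')\}$ around $H(X) + H(Y)$ then follows by the same Kozachenko--Leonenko analysis, invoking the local flatness condition $M_{f_Xf_Y}(\mathcal{W}_n) \le c_n \to 0$ built into $\mathcal{G}_{d_X,d_Y,\phi}$.

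Combining the two convergences, for each fixed $b \ge 1$ we have $\alpha_n := \sup_{f,k} \mathbb{P}_f(\hat H_n^Z \ge \tilde H_n^{(b)}) \to 0$. By the identical distribution of $\tau_1,\ldots,\tau_{B_n}$, taking expectations of the left-hand side of \eqref{Eq:Reject} yields
\[
\mathbb{E}_f\biggl[\frac{1}{B_n+1}\sum_{b=0}^{B_n}\mathbbm{1}_{\{\tilde I_n^{(b)}\ge \hat I_n\}}\biggr] \le \frac{1}{B_n+1} + \alpha_n,
\]
and Markov's inequality gives that \eqref{Eq:Reject} holds with probability at least $1 - q^{-1}\bigl(\alpha_n + (B_n+1)^{-1}\bigr) \to 1$, uniformly in $f$, $k$ and $B_n \ge B_n^*$.
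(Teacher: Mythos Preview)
Your high-level plan matches the paper's: both reduce, via Markov's inequality on \eqref{Eq:Reject}, to showing that $\hat H_n^Z \to H(X,Y)$ and $\tilde H_n^{(1)} \to H(X)+H(Y)$ in probability (in fact in $L^1$), uniformly over the relevant class. The final step and the treatment of $\hat H_n^Z$ are essentially identical to the paper's.

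The gap is in your coupling argument for $\tilde H_n^{(1)}$. You claim that one can couple the permuted sample $\{(X_i,Y_{\tau(i)})\}_{i=1}^n$ to an i.i.d.\ sample $\{(X_i,Y_i')\}_{i=1}^n$ from $f_Xf_Y$ so that $Y_{\tau(i)}=Y_i'$ for all but $O_{\mathbb{P}}(1)$ indices. This is not possible. First, for continuous $f_Y$ the multisets $\{Y_1,\ldots,Y_n\}$ and $\{Y_1',\ldots,Y_n'\}$ are almost surely disjoint whenever $Y'$ is generated independently of $Y$, so no single equality $Y_{\tau(i)}=Y_i'$ can hold. Second, and more substantively, even conditional on $\tau$ being a derangement the permuted sample is \emph{not} an i.i.d.\ sample from $f_Xf_Y$: for $n=2$ and $\tau=(1\;2)$ the pair $\bigl((X_1,Y_2),(X_2,Y_1)\bigr)$ has joint density $f(x_1,y_1)f(x_2,y_2)$, which is the product density $f_X(x_1)f_Y(y_2)f_X(x_2)f_Y(y_1)$ only when $f=f_Xf_Y$. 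The dependence in $f$ propagates across indices through the shared pool of $Y$-values, so removing the $O_{\mathbb{P}}(1)$ fixed points does not leave you with (or close to) an i.i.d.\ configuration. Your ``matching argument'' therefore does not go through, and the reduction to the i.i.d.\ Kozachenko--Leonenko analysis is not justified.

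The paper circumvents this by analysing $\tilde H_n^{(1)}$ directly (Lemma~\ref{Lemma:Unbiased}). It conditions on $\tau$ having $l\le l_n:=\lfloor\log\log n\rfloor$ fixed points, uses a cone-covering argument to show that at most $O(kl)$ of the remaining points can have one of the fixed points among their $k$ nearest neighbours, and then for the bulk indices establishes concentration of $\xi_i^{(1)} f_Xf_Y(Z_i^{(1)})$ around~$1$ via first- and second-moment computations for the counts $N_{i,\delta}=\sum_{j}\mathbbm{1}\{\|Z_j^{(1)}-Z_i^{(1)}\|\le r_{Z_i^{(1)},f_Xf_Y,\delta}\}$. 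The key point is that for $j_1,j_2\notin\{i,\tau(i),\tau^{-1}(i)\}$ with $j_2\notin\{j_1,\tau(j_1),\tau^{-1}(j_1)\}$ the indicators are conditionally uncorrelated, so $\mathrm{Var}_\tau(N_{i,\delta}\mid Z_i^{(1)})$ is of the same order as in the i.i.d.\ case despite the cross-index dependence. This is the missing piece in your argument: you need to control the permuted nearest-neighbour statistics directly rather than by transfer from an i.i.d.\ surrogate.
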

Theorem~\ref{Thm:MINTunknownpower} shows that \texttt{MINTunknown} is uniformly consistent against a wide class of alternatives.

\section{Regression setting}
\label{Sec:Regression}

In this section we aim to extend the ideas developed above to the problem of goodness-of-fit testing in linear models.  Suppose we have independent and identically distributed pairs $(X_1,Y_1),\ldots,(X_n,Y_n)$ taking values in $\mathbb{R}^p \times \mathbb{R}$, with $\mathbb{E}(Y_1^2) < \infty$ and $\mathbb{E}(X_1X_1^T)$ finite and positive definite.  Then 
\[
\beta_0 := \argmin_{\beta \in \mathbb{R}^p} \mathbb{E}\{(Y_1-X_1^T\beta)^2\}
\]
is well-defined, and we can further define $\epsilon_i := Y_i - X_i^T\beta_0$ for $i=1,\ldots,n$.  We show in the proof of Theorem~\ref{Thm:MINTregressionpower} below that $\mathbb{E}(\epsilon_1X_1) = 0$, but for the purposes of interpretability and inference, it is often convenient if the random design linear model
\[
Y_i = X_i^T\beta_0 + \epsilon_i, \quad i=1,\ldots,n,
\]
holds with $X_i$ and $\epsilon_i$ independent.  A goodness-of-fit test of this property amounts to a test of $H_0: X_1 \perp\!\!\!\perp \epsilon_1$ against $H_1: X_1 \not\!\perp\!\!\!\perp \epsilon_1$.  The main difficulty here is that $\epsilon_1,\ldots,\epsilon_n$ are not observed directly.  Given an estimator $\hat{\beta}$ of $\beta_0$, the standard approach for dealing with this problem is to compute residuals $\hat{\epsilon}_i := Y_i - X_i^T\hat{\beta}$ for $i=1,\ldots,n$, and use these as a proxy for $\epsilon_1,\ldots,\epsilon_n$.  Many introductory statistics textbooks, e.g.\ \citet[][Section~2.3.4]{Dobson2002}, \citet[][Section~5.2]{Dalgaard2002} suggest examining for patterns plots of residuals against fitted values, as well as plots of residuals against each covariate in turn, as a diagnostic, though it is difficult to formalise this procedure.  (It is also interesting to note that when applying the \texttt{plot} function in \texttt{R} to an object of type \texttt{lm}, these latter plots of residuals against each covariate in turn are not produced, presumably because it would be prohibitively time-consuming to check them all in the case of many covariates.) 

The naive approach based on our work so far is simply to use the permutation test of Section~\ref{Sec:UnknownMarginals} on the data $(X_1, \hat{\epsilon}_1), \ldots, (X_n, \hat{\epsilon}_n)$. Unfortunately, calculating the test statistic $\hat{I}_n$ on permuted data sets does not result in an exchangeable sequence, which makes it difficult to ensure that this test has the nominal size $q$.  
To circumvent this issue, we assume that the marginal distribution of $\epsilon_1$ under $H_0$ has $\mathbb{E}(\epsilon_1) = 0$ and is known up to a scale factor $\sigma^2 := \mathbb{E}(\epsilon_1^2)$; in practice, it will often be the case that this marginal distribution is taken to be $N(0,\sigma^2)$, for some unknown $\sigma > 0$.  We also assume that we can sample from the density $f_\eta$ of $\eta_1 := \epsilon_1/\sigma$; of course, this is straightforward in the normal distribution case above.  Let $X=( X_{1} \cdots X_{n})^T$, $Y=(Y_{1}, \ldots, Y_{n})^T$, and suppose the vector of residuals $\hat{\epsilon}  := (\hat{\epsilon}_1,\ldots,\hat{\epsilon}_n)^T$ is computed from the least squares estimator $\hat{\beta} := (X^TX)^{-1}X^TY$.  We then define standardised residuals by $\hat{\eta}_i := \hat{\epsilon}_i / \hat{\sigma}$, for $i=1,\ldots,n$, where $\hat{\sigma}^2:=n^{-1}\|\hat{\epsilon}\|^2$; these standardised residuals are invariant under changes of scale of $\epsilon := (\epsilon_1,\ldots,\epsilon_n)^T$.  Suppressing the dependence of our entropy estimators on $k$ and the weights $w_1,\ldots,w_k$ for notational simplicity, our test statistic is now given by
\[
	\check{I}_n^{(0)}:= \hat{H}_{n}^{p}(X_1, \ldots, X_n) + \hat{H}_{n}^{1}(\hat{\eta}_1, \ldots, \hat{\eta}_n) - \hat{H}_n^{p+1}\bigl( (X_1, \hat{\eta}_1), \ldots, (X_n, \hat{\eta}_n) \bigr).
\]
Writing $\eta := \epsilon/\sigma$, we note that
\[
	\hat{\eta}_i = \frac{1}{\hat{\sigma}}(Y_i - X_i^T\hat{\beta}) = \frac{1}{\hat{\sigma}}\bigl\{\epsilon_i - X_i^T(\hat{\beta} - \beta_0)\bigr\} = \frac{n^{1/2}\bigl\{\eta_i - X_i^T (X^TX)^{-1} X^T \eta\bigr\}}{\|\eta - X^T (X^TX)^{-1} X^T \eta\|},
\]
whose distribution does not depend on the unknown $\beta_0$ or $\sigma^2$.  Let $\{\eta^{(b)}=(\eta_1^{(b)}, \ldots, \eta_n^{(b)})^T:b=1,\ldots,B\}$ denote independent random vectors, whose components are generated independently from $f_\eta$. For $b=1, \ldots, B$ we then set $\hat{s}^{(b)} := n^{-1/2}\|\hat{\eta}^{(b)}\|$ and, for $i=1,\ldots,n$, let 
\[
	\hat{\eta}_i^{(b)} := \frac{1}{\hat{s}^{(b)}}\bigl\{\eta_i^{(b)} - X_i^T(X^TX)^{-1} X^T \eta^{(b)}\bigr\}.
\]
We finally compute
\[
	\check{I}_n^{(b)}:= \hat{H}_n^p(X_1, \ldots, X_n) + \hat{H}_n^1(\hat{\eta}_1^{(b)}, \ldots, \hat{\eta}_n^{(b)}) - \hat{H}_n^{p+1} \bigl( (X_1, \hat{\eta}_1^{(b)}), \ldots, (X_n, \hat{\eta}_n^{(b)}) \bigr).
\]
Analogously to our development in Sections~\ref{Sec:KnownMarginals} and~\ref{Sec:UnknownMarginals}, we can then define a critical value by
\[
\check{C}_q^{(n),B} := \inf\biggl\{r \in \mathbb{R}: \frac{1}{B+1}\sum_{b=0}^B \mathbbm{1}_{\{\check{I}_n^{(b)} \geq r\}} \leq q\biggr\}. 
\]
Conditional on $X$ and under $H_0$, the sequence $(\check{I}_n^{(0)}, \ldots, \check{I}_n^{(B)})$ is independent and identically distributed and unconditionally it is an exchangeable sequence under $H_0$.  This is the crucial observation from which we can immediately derive that the resulting test has the nominal size. 
\begin{lemma}
\label{Lemma:MINTregressionsize}
For each $q \in (0,1)$ and $B \in \mathbb{N}$, the \emph{\texttt{MINTregression}($q$)} test that rejects $H_0$ if and only if $\check{I}_n^{(0)} > \check{C}_q^{(n),B}$ has size at most $q$:
\[
\sup_{k \in \{1,\ldots,n-1\}} \sup_{(X,Y):I(X;Y) = 0} \mathbb{P}\bigl(\check{I}_n > \check{C}_q^{(n),B}\bigr) \leq q.
\]
\end{lemma}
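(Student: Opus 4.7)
My plan is to exploit the hint given just before the lemma: conditional on $X$ under $H_0$, $(\check{I}_n^{(0)}, \check{I}_n^{(1)}, \ldots, \check{I}_n^{(B)})$ is an i.i.d.\ sequence, after which the size bound follows from the standard Monte Carlo test argument.

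First, I would verify the conditional i.i.d.\ structure. Under $H_0$ we have $X \perp\!\!\!\perp \epsilon$, so $\eta := \epsilon/\sigma$ is independent of $X$ and its coordinates are i.i.d.\ from $f_\eta$; by construction, $\eta^{(1)}, \ldots, \eta^{(B)}$ are mutually independent, independent of $(X,Y)$, and each have coordinates i.i.d.\ from $f_\eta$. Hence $(\eta, \eta^{(1)}, \ldots, \eta^{(B)})$ is i.i.d.\ conditional on $X$. Inspection of the definitions shows that $\hat{\eta}$ and each $\hat{\eta}^{(b)}$ are obtained by applying the \emph{same} measurable function $g(X,\cdot)$ (built from the least-squares projection and the renormalisation) to $\eta$ and to $\eta^{(b)}$ respectively; consequently $(\hat{\eta}, \hat{\eta}^{(1)}, \ldots, \hat{\eta}^{(B)})$ is conditionally i.i.d.\ given $X$. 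Since $\check{I}_n^{(b)}$ is produced from $(X_1, \ldots, X_n, \hat{\eta}^{(b)})$ by the same composition of entropy estimators for every $b$ (with $\hat{\eta}^{(0)} := \hat{\eta}$), the conclusion transfers to $(\check{I}_n^{(0)}, \ldots, \check{I}_n^{(B)})$.

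Next, I would convert this conditional exchangeability into the size bound. Set $N := \sum_{b=0}^B \mathbbm{1}_{\{\check{I}_n^{(b)} \geq \check{I}_n^{(0)}\}} \geq 1$. Since the map $r \mapsto \sum_{b=0}^B \mathbbm{1}_{\{\check{I}_n^{(b)} \geq r\}}$ is non-increasing, every $r > \check{C}_q^{(n),B}$ satisfies $\sum_{b=0}^B \mathbbm{1}_{\{\check{I}_n^{(b)} \geq r\}} \leq q(B+1)$, so on the event $\{\check{I}_n^{(0)} > \check{C}_q^{(n),B}\}$ we have $N \leq q(B+1)$. Let $R$ denote the rank of $\check{I}_n^{(0)}$ in descending order among $\{\check{I}_n^{(0)}, \ldots, \check{I}_n^{(B)}\}$, with ties broken uniformly at random. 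Then $R \leq N$, and by the conditional i.i.d.\ property, $R$ is uniform on $\{1, \ldots, B+1\}$ conditional on $X$. Therefore
\[
\mathbb{P}_{H_0}\bigl(\check{I}_n^{(0)} > \check{C}_q^{(n),B} \bigm| X\bigr) \leq \mathbb{P}\bigl(R \leq q(B+1) \bigm| X\bigr) \leq q,
\]
and taking expectations over $X$ gives the claimed bound, uniformly in $k$, in $B$, and in the joint distribution of $(X,Y)$ satisfying $H_0$.

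The only point needing any care is the possibility of ties among the $\check{I}_n^{(b)}$, which prevents the naive equality $N = R$; this is handled cleanly by the randomised tie-breaking in $R$ together with the one-sided inequality $R \leq N$. Beyond this, the argument is purely structural: it relies only on the fact that the same formula produces $\hat{\eta}^{(b)}$ from $\eta^{(b)}$ as produces $\hat{\eta}$ from $\eta$, which in turn hinges on using the least-squares residualisation for both, and on our ability to sample from $f_\eta$. No moment, smoothness, or tuning-parameter restrictions enter the proof.
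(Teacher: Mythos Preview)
Your proposal is correct and follows essentially the same approach as the paper: the paper does not give a separate detailed proof of this lemma, instead noting just before its statement that conditional on $X$ under $H_0$ the sequence $(\check{I}_n^{(0)},\ldots,\check{I}_n^{(B)})$ is i.i.d.\ (hence exchangeable), from which the size bound follows exactly as in the proof of Lemma~\ref{Lemma:Size} via the uniform-rank argument with randomised tie-breaking. Your write-up simply supplies the details the paper leaves implicit.
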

As in previous sections, we are only interested in the differences $\check{I}_n^{(0)} - \check{I}_n^{(b)}$ for $b=1,\ldots,B$, and in these differences, the $\hat{H}_n^p(X_1, \ldots, X_n)$ terms cancel out, so these marginal entropy estimators need not be computed.


In fact, to simplify our power analysis, it is more convenient to define a slightly modified test, which also has the nominal size.  Specifically, we assume for simplicity that $m := n/2$ is an integer, and consider a test in which the sample is split in half, with the second half of the sample used to calculate the estimators $\hat{\beta}_{(2)}$ and $\hat{\sigma}_{(2)}^2$ of $\beta_0$ and $\sigma^2$ respectively.   On the first half of the sample, we calculate
\[
	\hat{\eta}_{i,(1)} := \frac{Y_i - X_i^T\hat{\beta}_{(2)}}{\hat{\sigma}_{(2)}}
\]
for $i=1, \ldots, m$ and the test statistic
\[
	\breve{I}_n^{(0)} := \hat{H}_m^p(X_1, \ldots, X_m) + \hat{H}_m^1(\hat{\eta}_{1,(1)}, \ldots, \hat{\eta}_{m,(1)}) - \hat{H}_m^{p+1}\bigl( (X_1, \hat{\eta}_{1,(1)}), \ldots, (X_m, \hat{\eta}_{m,(1)}) \bigr).
\]
Corresponding estimators $\{\breve{I}_n^{(b)}:b=1,\ldots,B\}$ based on the simulated data may also be computed using the same sample-splitting procedure, and we then obtain the critical value $\breve{C}_q^{(n),B}$ in the same way as above.  The advantage from a theoretical perspective of this approach is that, conditional on $\hat{\beta}_{(2)}$ and $\hat{\sigma}_{(2)}^2$, the random variables $\hat{\eta}_{1,(1)},\ldots,\hat{\eta}_{m,(1)}$ are independent and identically distributed.

To describe the power properties of \texttt{MINTregression}, we first define several densities: for $\gamma \in \mathbb{R}^p$ and $s > 0$, let $f_{\hat{\eta}}^{\gamma,s}$ and $f_{\hat{\eta}^{(1)}}^{\gamma,s}$ denote the densities of $\hat{\eta}_{1}^{\gamma,s} := (\eta_1 - X_1^T\gamma)/s$ and $\hat{\eta}_{1}^{(1),\gamma,s} := (\eta_1^{(1)} - X_1^T\gamma)/s$ respectively; further, let $f_{X,\hat{\eta}}^{\gamma,s}$ and $f_{X,\hat{\eta}^{(1)}}^{\gamma,s}$ be the densities of $(X_1,\hat{\eta}_{1}^{\gamma,s})$ and $(X_1,\hat{\eta}_{1}^{(1),\gamma,s})$ respectively.  Note that imposing assumptions on these densities amounts to imposing assumptions on the joint density $f$ of $(X,\epsilon)$.  For $\Omega := \Theta \times \Theta \times (0,\infty) \times (0,1) \times (0,\infty) \times (0,\infty)$, and $\omega = (\theta_1,\theta_2,r_0,s_0,\Lambda,\lambda_0)$, we therefore let $\mathcal{F}_{p+1,\omega}^*$ denote the class of joint densities $f$ of $(X,\epsilon)$ satisfying the following three requirements:
\begin{enumerate}[(i)]
\item
\[
\Bigl\{f_{\hat{\eta}}^{\gamma,s} : \gamma \in B_0(r_0), s \in [s_0,1/s_0] \Bigr\} \cup \Bigl\{f_{\hat{\eta}^{(1)}}^{\gamma,s} : \gamma \in B_0(r_0), s \in [s_0,1/s_0] \Bigr\} \subseteq \mathcal{F}_{1,\theta_1}
\]
and 
\[
	\Bigl\{f_{X,\hat{\eta}}^{\gamma,s} : \gamma \in B_0(r_0), s \in [s_0,1/s_0] \Bigr\} \cup \Bigl\{f_{X,\hat{\eta}^{(1)}}^{\gamma,s} : \gamma \in B_0(r_0), s \in [s_0,1/s_0] \Bigr\} \subseteq \mathcal{F}_{p+1,\theta_2}.
\]
\item
\begin{equation}
\label{Eq:ii,1}
\sup_{\gamma \in B_0(r_0)} \max \Bigl\{\mathbb{E} \log^2 f_{\hat{\eta}}^{\gamma,1}(\eta_1) \, , \, \mathbb{E} \log^2 f_{\hat{\eta}^{(1)}}^{\gamma,1}(\eta_1)\Bigr\} \leq \Lambda, \quad 
\end{equation}
and
\begin{equation}
\label{Eq:ii,2}
\sup_{\gamma \in B_0(r_0)} \max \Bigl\{\mathbb{E} \log^2 f_{\eta}\bigl(\hat{\eta}_1^{\gamma,1}\bigr) \, , \, \mathbb{E} \log^2 f_{\eta}\bigl(\hat{\eta}_1^{(1),\gamma,1}\bigr)\Bigr\} \leq \Lambda.
\end{equation}
\item Writing $\Sigma := \mathbb{E}(X_1X_1^T)$, we have $\lambda_{\min}(\Sigma) \geq \lambda_0$.
\end{enumerate}
The first of these requirements ensures that we can estimate efficiently the marginal entropy of our scaled residuals, as well as the joint entropy of these scaled residuals and our covariates.  The second condition is a moment condition that allows us to control $|H(\eta_1 - X_1^T\gamma) - H(\eta_1)|$ (and similar quantities) in terms of $\|\gamma\|$, when $\gamma$ belongs to a small ball around the origin.  To illustrate the second part of this condition, it is satisfied, for instance, if $f_\eta$ is a standard normal density and $\mathbb{E}(\|X_1\|^4) < \infty$, or if $f_\eta$ is a $t$ density and $\mathbb{E}(\|X_1\|^\alpha) < \infty$ for some $\alpha > 0$; the first part of the condition is a little more complicated but similar.  The final condition is very natural for random design regression problems.


By the same observation on the sequence $(\breve{I}_n^{(0)},\ldots,\breve{I}_n^{(B)})$ as was made regarding the sequence $(\check{I}_n^{(0)},\ldots,\check{I}_n^{(B)})$ just before Lemma~\ref{Lemma:MINTregressionsize}, we see that the sample-splitting version of the \texttt{MINTregression}$(q)$ test has size at most $q$.  
\begin{thm}
\label{Thm:MINTregressionpower}
Fix $p \in \mathbb{N}$ and $\omega = (\theta_1,\theta_2,r_0,s_0,\Lambda,\lambda_0) \in \Omega$, where the first component of $\theta_2$ is $\alpha_2 \geq 4$ and the second component of $\theta_1$ is $\beta_1 \geq 1$.  Assume that 
\[
	\min\Bigl\{\tau_1(1,\theta_1)\, , \, \tau_1(p+1,\theta_2)\, , \,\tau_2(1,\theta_1)\, , \,\tau_2(p+1,\theta_2)\Bigr\}  >0.
\]
Let $k_0^*=k_{0,n}^*, k_\eta^*=k_{\eta,n}^*$ and $k^*=k_n^*$ denote any deterministic sequences of positive integers with $k_0^* \leq \min\{k_\eta^*, k^*\}$, with $k_0^* / \log^5 n \rightarrow \infty$ and with
\[
	\max \biggl\{ \frac{k^*}{n^{\tau_1(p+1,\theta_2)-\epsilon}} \, , \, \frac{k_\eta^*}{n^{\tau_1(1,\theta_1)-\epsilon}} \, , \, \frac{k^*}{n^{\tau_2(p+1,\theta_2)}} \, , \, \frac{k_{\eta}^*}{n^{\tau_2(1,\theta_1)}} \biggr\} \rightarrow 0
\]
for some $\epsilon>0$. Also suppose that $w_\eta=w_\eta^{(k_\eta)} \in \mathcal{W}^{(k_\eta)}$ and $w=w^{(k)} \in \mathcal{W}^{(k)}$, and that $\limsup_n \max(\|w\|,\|w_\eta\|) < \infty$.  Then for any sequence $(b_n)$ such that $n^{1/2}b_n \rightarrow \infty$, any $q \in (0,1)$ and any sequence $(B_n^*)$ with $B_n^* \rightarrow \infty$,
\[
	\inf_{B_n \geq B_n^*} \inf_{\substack{k_\eta \in \{k_0^*, \ldots, k_\eta^*\} \\ k \in \{k_0^*, \ldots, k^*\}}}\inf_{f \in \mathcal{F}_{1,p+1}^*:I(f) \geq b_n} \mathbb{P}_f( \breve{I}_n > \breve{C}_q^{(n),B_n}) \rightarrow 1.
\]
\end{thm}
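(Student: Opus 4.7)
The plan is to show that on a high-probability event $\breve{I}_n^{(0)} = I(f) + o_p(b_n)$ while $\breve{C}_q^{(n),B_n} = O_p(n^{-1/2})$; since $n^{1/2}b_n \to \infty$, the test then rejects with probability tending to one. Set $m := n/2$, $\hat\gamma := (\hat\beta_{(2)} - \beta_0)/\sigma$ and $\hat s := \hat\sigma_{(2)}/\sigma$, so that $\hat\eta_{i,(1)} = (\eta_i - X_i^T\hat\gamma)/\hat s$ for $i \leq m$. Standard OLS reasoning, using the moment bound $\mathbb{E}\|X_1\|^{\alpha_2} < \infty$ implicit in condition~(i) together with the eigenvalue bound in~(iii), yields $\hat\gamma = O_p(m^{-1/2})$ and $\hat s - 1 = O_p(m^{-1/2})$, uniformly over $f \in \mathcal{F}_{1,p+1}^*$, so it suffices to work on the event $\mathcal{E}_n := \{\|\hat\gamma\| \leq r_0/2,\ \hat s \in [s_0, 1/s_0]\}$, whose probability tends to one uniformly.

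Conditional on the second half and on $\mathcal{E}_n$, the pairs $(X_i, \hat\eta_{i,(1)})$ are i.i.d.\ with joint density $f_{X,\hat\eta}^{\hat\gamma,\hat s}$ and second marginal $f_{\hat\eta}^{\hat\gamma,\hat s}$, which by condition~(i) lie in $\mathcal{F}_{p+1,\theta_2}$ and $\mathcal{F}_{1,\theta_1}$ uniformly in $(\hat\gamma,\hat s)$. The uniform efficient entropy estimator theory of \citet{BSY2017}, applied exactly as in the proof of Theorem~\ref{Thm:Power}, then gives $\breve{I}_n^{(0)} = I(X; \hat\eta^{\hat\gamma,\hat s}) + O_p(m^{-1/2})$, uniformly in the admissible ranges of $k$ and $k_\eta$ and in $f$. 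Since $(x,z) \mapsto (x,(z-x^T\gamma)/s)$ is a diffeomorphism with constant Jacobian $1/s$, $H(X, \hat\eta^{\gamma,s}) = H(X,\eta) - \log s$ and $H(\hat\eta^{\gamma,s}) = H(\eta - X^T\gamma) - \log s$, so that
\[
I(X; \hat\eta^{\gamma,s}) = I(X;\eta) + \bigl[H(\eta - X^T\gamma) - H(\eta)\bigr].
\]

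The main technical step is the entropy perturbation bound $|H(\eta - X^T\gamma) - H(\eta)| = O(\|\gamma\|)$, uniformly over $\gamma \in B_0(r_0)$ and $f \in \mathcal{F}_{1,p+1}^*$. I would write $H(\eta - X^T\gamma) - H(\eta) = \mathbb{E}\{\log f_\eta(\eta) - \log f_{\hat\eta}^{\gamma,1}(\eta - X^T\gamma)\}$ and split via the cross term $\mathbb{E}\{\log f_\eta(\eta - X^T\gamma)\}$; the $L^2$ bounds~\eqref{Eq:ii,1} and~\eqref{Eq:ii,2} of condition~(ii), together with Cauchy--Schwarz, control both pieces, while the smoothness afforded by $f_\eta \in \mathcal{F}_{1,\theta_1}$ with $\beta_1 \geq 1$ justifies a first-order Taylor expansion of $\gamma \mapsto H(\eta - X^T\gamma)$ around the origin, whose derivative
\[
\partial_{\gamma_j} H(\eta - X^T\gamma)\big|_{\gamma=0} = \int \mathbb{E}[X_j \mid \eta = z]\, f_\eta'(z)\, dz
\]
is finite. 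Plugging $\gamma = \hat\gamma$ gives a perturbation of order $O_p(m^{-1/2})$, yielding $\breve{I}_n^{(0)} \geq I(f) - O_p(m^{-1/2})$. I expect this step to be the most delicate, since differentiation under the integral must be justified with constants uniform in $f$, and this uniformity is precisely what conditions~(i)--(ii) are engineered to supply.

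The critical value is controlled analogously on the pseudo sample. Since $\eta^{(b)} \perp X$ by construction, the same two-step reduction applied to $\breve{I}_n^{(b)}$ with $(\hat\gamma^{(b)}, \hat s^{(b)})$ estimated from the pseudo second half gives $\breve{I}_n^{(b)} - \hat H_m^p(X_{1:m}) = -H(X) + [H(\eta^{(b)} - X^T\hat\gamma^{(b)}) - H(\eta^{(b)})] + O_p(m^{-1/2})$; here the bracketed perturbation is $O_p(\|\hat\gamma^{(b)}\|^2) = O_p(m^{-1})$, because under independence the linear Taylor coefficient vanishes ($\mathbb{E}[X_j \mid \eta^{(b)} = z] = \mathbb{E}[X_j]$ and $\int f_\eta' = 0$). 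Markov's inequality then gives $\mathbb{P}_f(\breve{I}_n^{(b)} - \hat H_m^p(X_{1:m}) + H(X) > C_q m^{-1/2}) \leq q/2$ uniformly in $b \geq 1$ and $f$, and a quantile-concentration argument over the exchangeable sequence $(\breve{I}_n^{(0)}, \ldots, \breve{I}_n^{(B_n)})$ (using the conditioning idea underlying Lemma~\ref{Lemma:MINTregressionsize}) upgrades this to $\breve{C}_q^{(n),B_n} - \hat H_m^p(X_{1:m}) + H(X) = O_p(m^{-1/2})$, uniformly in $f$ and $B_n \geq B_n^*$. Combining with the lower bound on $\breve{I}_n^{(0)}$ from the preceding paragraphs completes the argument.
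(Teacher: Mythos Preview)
Your architecture matches the paper's: restrict to a high-probability event where $(\hat\gamma,\hat s)$ lie in the admissible range, invoke \citet{BSY2017} for the entropy-estimation error, use the invariance identity $I(X;\hat\eta^{\gamma,s}) = I(X;\eta) + h(\gamma) - h(0)$ with $h(\gamma):=H(\eta - X^T\gamma)$, and bound the entropy perturbation $h(\gamma)-h(0)$ linearly in $\|\gamma\| = O_p(m^{-1/2})$. Two implementation choices differ from the paper, and one of yours contains a gap.

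First, the paper never controls $\breve{C}_q^{(n),B_n}$ directly. It applies Markov's inequality to the $p$-value to obtain
\[
\mathbb{P}_f\bigl(\breve{I}_n \leq \breve{C}_q^{(n),B_n}\bigr) \leq \frac{1}{q(B_n+1)} + \frac{1}{q}\,\mathbb{P}_f\bigl(\breve{I}_n^{(1)} \geq \breve{I}_n^{(0)}\bigr),
\]
reducing everything to a single pairwise comparison, and then writes $\breve{I}_n^{(0)} - \breve{I}_n^{(1)} = I(f) + R_1 + R_2$ with $\mathbb{E}|R_j|\mathbbm{1}_{A_{r_0,s_0}\cap A_0} = O(n^{-1/2})$. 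Your appeal to ``the exchangeable sequence $(\breve{I}_n^{(0)},\ldots,\breve{I}_n^{(B_n)})$'' is incorrect under the alternative: exchangeability is a null property, and under $H_1$ only $\breve{I}_n^{(1)},\ldots,\breve{I}_n^{(B_n)}$ are conditionally i.i.d.\ given $X$. A quantile-concentration argument for those $B_n$ copies could be made to work, but the paper's pairwise reduction is both simpler and immediately rigorous.

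Second, for the perturbation the paper does not Taylor-expand $h$. It bounds $|\log f_\eta(y) - \log f_\eta(x)|$ pointwise (a Polyanskiy--Wu style argument using $\beta_1 \geq 1$), combines this with the KL-based inequality $|H(U)-H(V)| \leq \max\{\mathbb{E}|\log \tfrac{f_U(U)}{f_U(V)}|,\ \mathbb{E}|\log \tfrac{f_V(V)}{f_V(U)}|\}$, and then applies a generalised H\"older inequality together with condition~(ii) and $\mathbb{E}\|X_1\|^4 < \infty$ (from $\alpha_2 \geq 4$) to get $|h(\gamma)-h(0)| \leq C\|\gamma\|$ uniformly. This sidesteps the differentiation under the integral sign that you flag as delicate. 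Finally, your second-order refinement for the null pseudo-statistic (linear Taylor term vanishing, hence $O_p(m^{-1})$) is unnecessary since first-order $O_p(m^{-1/2})$ already suffices when $n^{1/2}b_n \to \infty$, and is not obviously justified by the hypotheses, which only give $\beta_1 \geq 1$.
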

Finally in this section, we consider partitioning our design matrix as $X = (X^* \, X^{**}) \in \mathbb{R}^{n \times (p_0 + p_1)}$, with $p_0 + p_1 = p$, and describe an extension of \texttt{MINTregression} to cases where we are interested in testing the independence between $\epsilon$ and $X^*$.  For instance, $X^{**}$ may consist of an intercept term, or transformations of variables in $X^*$, as in the real data example presented in Section~\ref{Sec:RealData} below.  Our method for simulating standardised residual vectors $\{\hat{\eta}^{(b)}:b = 1,\ldots,B\}$ remains unchanged, but our test statistic and corresponding null statistics become
\begin{align*}
\bar{I}_n^{(0)} &:= \hat{H}_n^p(X_1^*, \ldots, X_n^*) + \hat{H}_n^1(\hat{\eta}_1, \ldots, \hat{\eta}_n) - \hat{H}_n^{p+1} \bigl( (X_1^*, \hat{\eta}_1), \ldots, (X_n^*, \hat{\eta}_n) \bigr) \\
\bar{I}_n^{(b)} &:= \hat{H}_n^p(X_1^*, \ldots, X_n^*) + \hat{H}_n^1(\hat{\eta}_1^{(b)}, \ldots, \hat{\eta}_n^{(b)}) - \hat{H}_n^{p+1} \bigl( (X_1^*, \hat{\eta}_1^{(b)}), \ldots, (X_n^*, \hat{\eta}_n^{(b)}) \bigr), \ b=1,\ldots,B.
\end{align*}
The sequence $(\bar{I}_n^{(0)},\bar{I}_n^{(1)},\ldots,\bar{I}_n^{(B)})$ is again exchangeable under $H_0$, so a $p$-value for this modified test is given by 
\[
	\frac{1}{B+1}\sum_{b=0}^B \mathbbm{1}_{\{\bar{I}_n^{(b)} \geq \bar{I}_n^{(0)}\}}.
\]

\section{Numerical studies}
\label{Sec:Simulations}

\subsection{Practical considerations}

For practical implementation of the \texttt{MINTunknown} test, we consider both a direct, data-driven approach to choosing $k$, and a multiscale approach that averages over a range of values of $k$.  To describe the first method, let $\mathcal{K} \subseteq \{1,\ldots,n-1\}$ denote a plausible set of values of $k$.  For a given $N \in \mathbb{N}$ and independently of the data, generate $\tau_1, \ldots, \tau_{2N}$ independently and uniformly from $S_n$, and for each $k \in \mathcal{K}$ and $j=1,\ldots,2N$ let $\hat{H}_{n,k}^{(j)}$ be the (unweighted) Kozachenko--Leonenko joint entropy estimate with tuning parameter $k$ based on the sample $\{(X_i,Y_{\tau_j(i)}) : i=1,\ldots,n\}$. We may then choose
\[
	\hat{k} := \sargmin_{k \in \mathcal{K}} \sum_{j=1}^N (\hat{H}_{n,k}^{(2j)} - \hat{H}_{n,k}^{(2j-1)} )^2,
\]
where $\sargmin$ denotes the smallest index of the $\argmin$ in the case of a tie.  In our simulations below, which use $N = 100$, we refer to the resulting test as \texttt{MINTauto}. 

For our multiscale approach, we again let $\mathcal{K} \subseteq \{1, \ldots, n-1\}$ and, for $k \in \mathcal{K}$, let $\hat{h}^{(0)}(k) := \hat{H}_{n,k}$ denote the (unweighted) Kozachenko--Leonenko entropy estimate with tuning parameter $k$ based on the original data $(X_1,Y_1), \ldots, (X_n,Y_n)$.  Now, for $b=1, \ldots, B$ and $k \in \mathcal{K}$, we let $\hat{h}^{(b)}(k) := \tilde{H}_{n,(k)}^{(b)}$ denote the Kozachenko--Leonenko entropy estimate with tuning parameter $k$ based on the permuted data $Z_1^{(b)}, \ldots, Z_n^{(b)}$. Writing $\bar{h}^{(b)} := |\mathcal{K}|^{-1} \sum_{k \in \mathcal{K}} \hat{h}^{(b)}(k)$ for $b=0,1,\ldots,B$, we then define the $p$-value for our test to be
\[
	\frac{1}{B+1} \sum_{b=0}^B \mathbbm{1}_{\{ \bar{h}^{(0)} \geq \bar{h}^{(b)}\}}.
\]
By the exchangeability of $(\bar{h}^{(0)},\bar{h}^{(1)},\ldots,\bar{h}^{(B)})$ under $H_0$, the corresponding test has the nominal size.  We refer to this test as \texttt{MINTav}, and note that if $\mathcal{K}$ is taken to be a singleton set then we recover \texttt{MINTunknown}.  In our simulations below, we took $\mathcal{K} = \{1,\ldots,20\}$ and $B=100$.

 
\subsection{Simulated data}
\label{Sec:SimulatedData}
To study the empirical performance of our methods, we first compare our tests to existing approaches through their performance on simulated data.  For comparison, we present corresponding results for a test based on the empirical copula process decribed by \citet{Kojadinovic2009} and implemented in the \texttt{R} package \texttt{copula} \citep{HKMY2017}, a test based on the HSIC implemented in the \texttt{R} package \texttt{dHSIC} \citep{PfisterPeters2017}, a test based on the distance covariance implemented in the \texttt{R} package \texttt{energy} \citep{RizzoSzekely2017} and the improvement of Hoeffding's test described in \citet{BergsmaDassios2014} and implemented in the \texttt{R} package \texttt{TauStar} \citep{Weihs16}.  We also present results for an oracle version of our tests, denoted simply as \texttt{MINT}, which for each parameter value in each setting, uses the best (most powerful) choice of $k$.  Throughout, we took $q=0.05$ and $n=200$, ran 5000 repetitions for each parameter setting, and for our comparison methods, used the default tuning parameter values recommended by the corresponding authors.  We consider three classes of data generating mechanisms, designed to illustrate different possible types of dependence:
\begin{enumerate}[(i)]
\item For $l \in \mathbb{N}$ and $(x,y) \in [-\pi,\pi]^2$, define the density function
\[
	f_l(x,y) = \frac{1}{\pi^2}\{1+ \sin(lx)\sin(ly)\}.
\]
This class of densities, which we refer to as sinusoidal, are identified by \citet{Sejdinovic13} as challenging ones to detect dependence, because as $l$ increases, the dependence becomes increasingly localised, while the marginal densities are uniform on $[-\pi,\pi]$ for each $l$. Despite this, by the periodicity of the sine function, we have that the mutual information does not depend on $l$: indeed,
\begin{align*}
	I(f_l) &= \frac{1}{\pi^2} \int_{-\pi}^{\pi} \int_{-\pi}^{\pi} \{1+ \sin(lx)\sin(ly)\} \log \bigl( 1+ \sin(lx)\sin(ly) \bigr) \,dx \,dy \\
	&= \frac{1}{\pi^2} \int_{-\pi}^{\pi} \int_{-\pi}^{\pi} (1+ \sin u\sin v) \log ( 1+ \sin u\sin v ) \,du \,dv \approx 0.0145.
\end{align*}
\item Let $L, \Theta, \epsilon_1, \epsilon_2$ be independent with $L \sim U(\{1, \ldots, l\})$ for some parameter $l \in \mathbb{N}$, $\Theta \sim U[0, 2 \pi]$, and $\epsilon_1, \epsilon_2 \sim N(0,1)$.  Set $X=L \cos \Theta + \epsilon_1/4$ and $Y= L \sin \Theta + \epsilon_2/4$.  For large values of $l$, the distribution of $(X/l,Y/l)$ approaches the uniform distribution on the unit disc.
\item Let $X, \epsilon$ be independent with $X \sim U[-1,1]$, $\epsilon \sim N(0,1)$, and for a parameter $\rho \in [0,\infty)$, let $Y= |X|^\rho \epsilon$.
\end{enumerate}
For each of these three classes of data generating mechanisms, we also consider a corresponding multivariate setting in which we wish to test the independence of $X$ and $Y$ when $X=(X_1,X_2)^T, Y=(Y_1,Y_2)^T$.  Here, $(X_1,Y_1), X_2, Y_2$ are independent, with $X_1$ and $Y_1$ having the dependence structures described above, and $X_2,Y_2 \sim U(0,1)$.  In these multivariate settings, the generalisations of the \texttt{TauStar} test were too computationally intensive, so were omitted from the comparison. 


The results are presented in Figure~\ref{Fig:Sinus}.  Unsurprisingly, there is no uniformly most powerful test among those considered, and if the form of the dependence were known in advance, it may well be possible to design a tailor-made test with good power.  Nevertheless, Figure~\ref{Fig:Sinus} shows that, especially in the first and second of the three settings described above, the \texttt{MINT} and \texttt{MINTav} approaches have very strong performance.  In these examples, the dependence becomes increasingly localised as $l$ increases, and the flexibility to choose a smaller value of $k$ in such settings means that \texttt{MINT} approaches are particularly effective.  Where the dependence is more global in nature, such as in setting~(iii), other approaches may be better suited, though even here, \texttt{MINT} is competitive.  Interestingly, the multiscale \texttt{MINTav} appears to be much more effective than the \texttt{MINTauto} approach of choosing a single value of $k$; indeed, in setting~(iii),~\texttt{MINTav} even outperforms the oracle choice of $k$. 


\begin{figure}%
\centering
\subfigure{\label{Fig:Sinus}\includegraphics[width=0.28\linewidth]{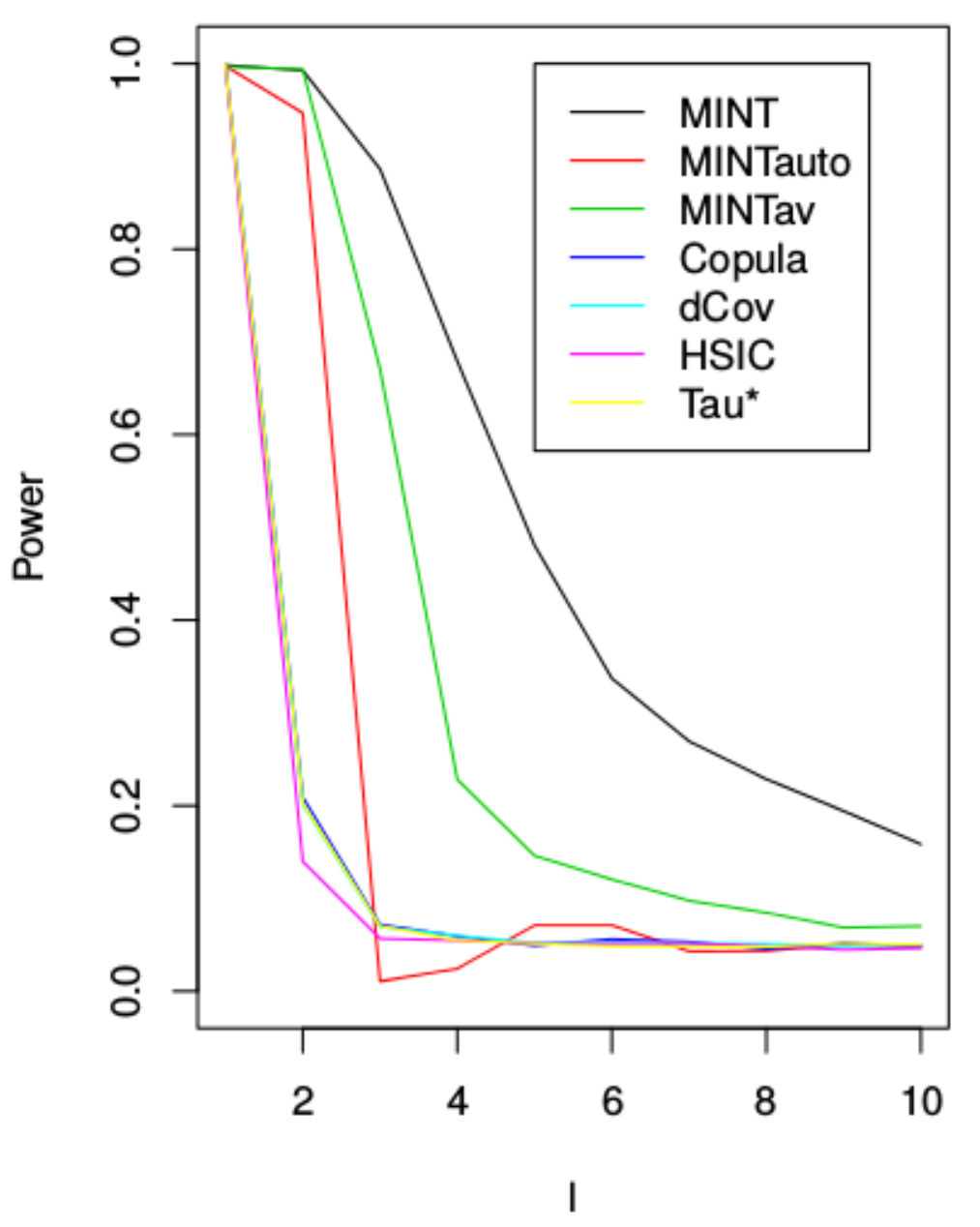}}\qquad
\subfigure{\label{Fig:Circular}\includegraphics[width=0.28\linewidth]{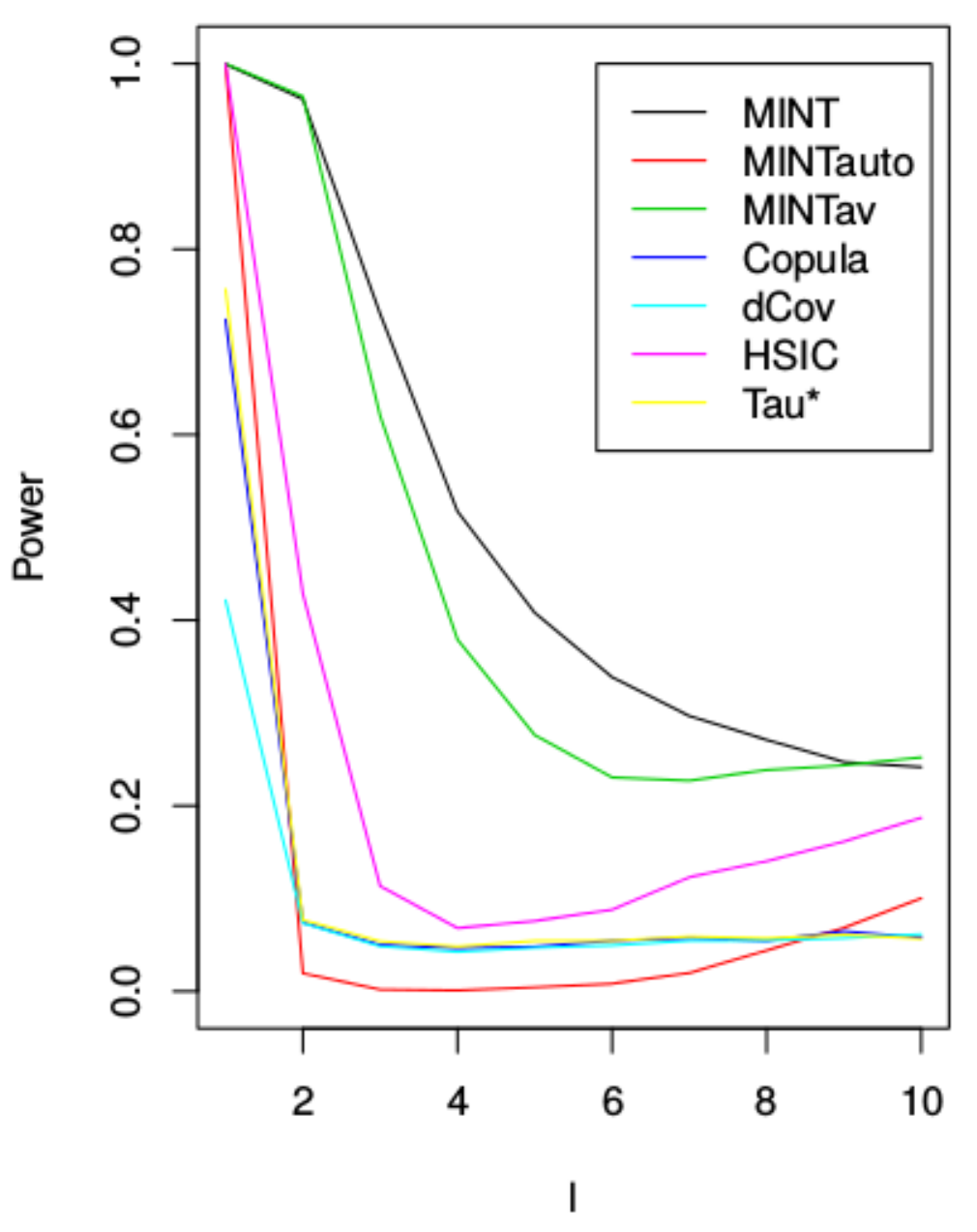}} \qquad
\subfigure{\label{Fig:Multiplicative}\includegraphics[width=0.28\linewidth]{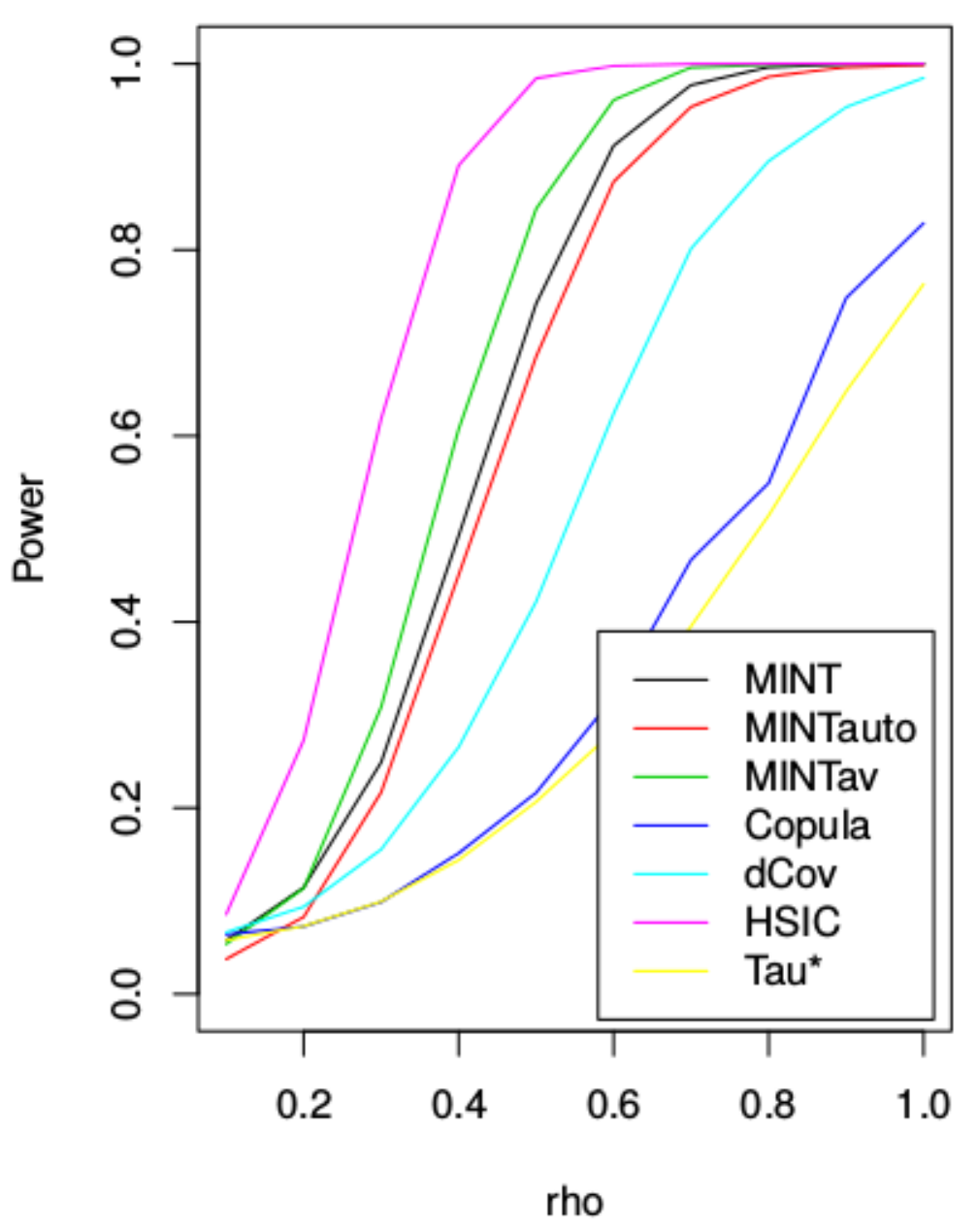}} \\
\subfigure{\label{Fig:MultiSinus}\includegraphics[width=0.28\linewidth]{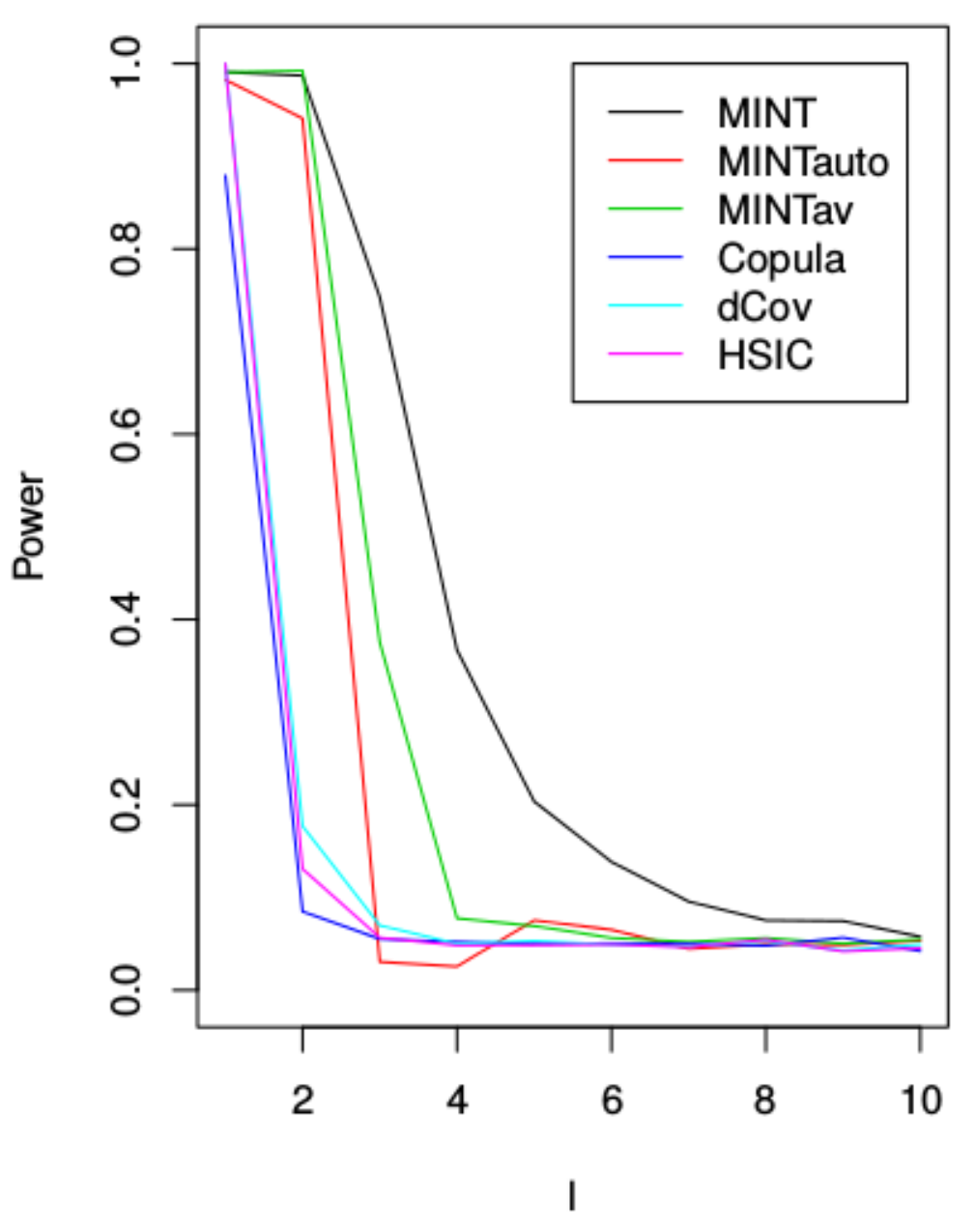}}\qquad
\subfigure{\label{Fig:MultiCircular}\includegraphics[width=0.28\linewidth]{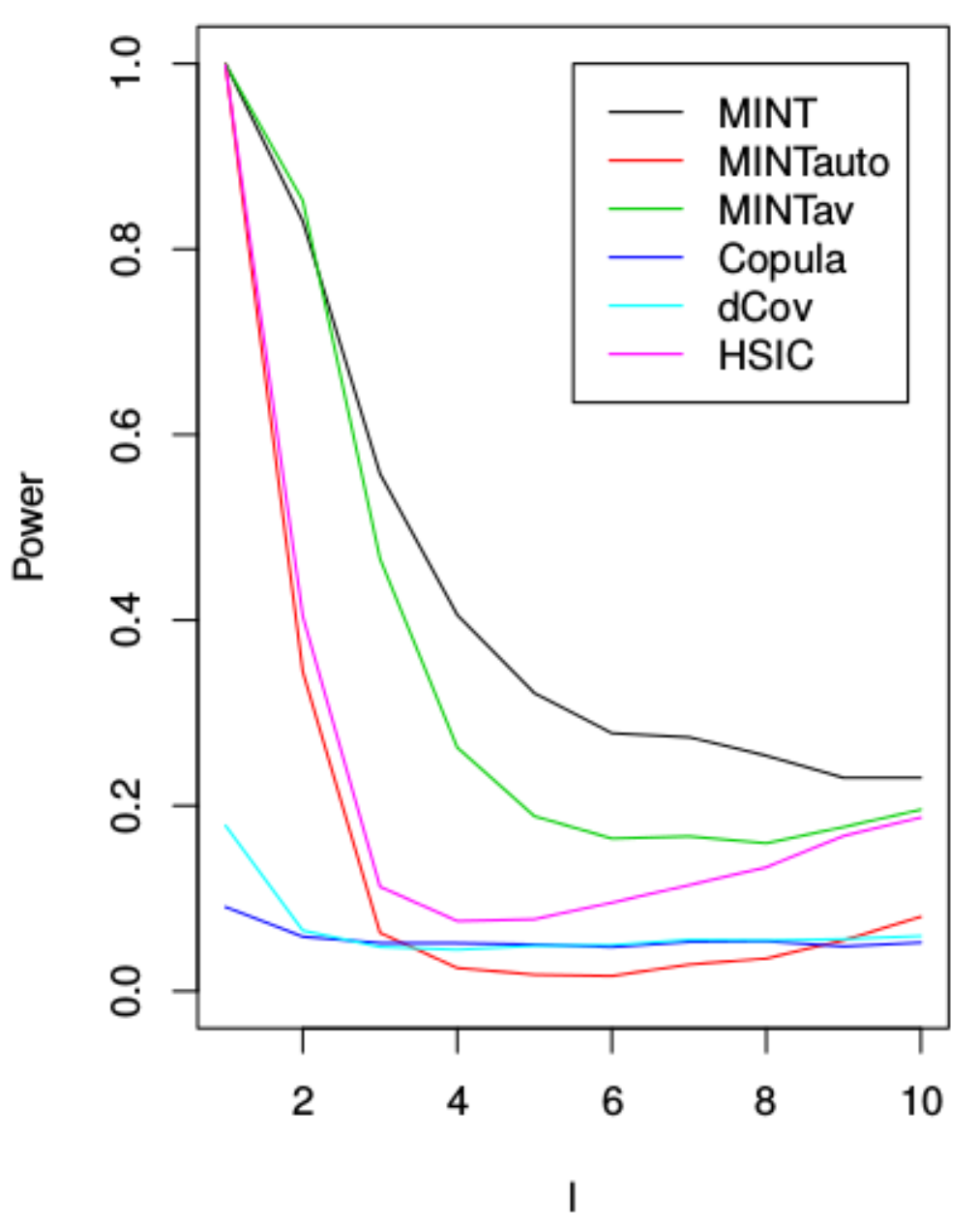}} \qquad
\subfigure{\label{Fig:MultiMultiplicative}\includegraphics[width=0.28\linewidth]{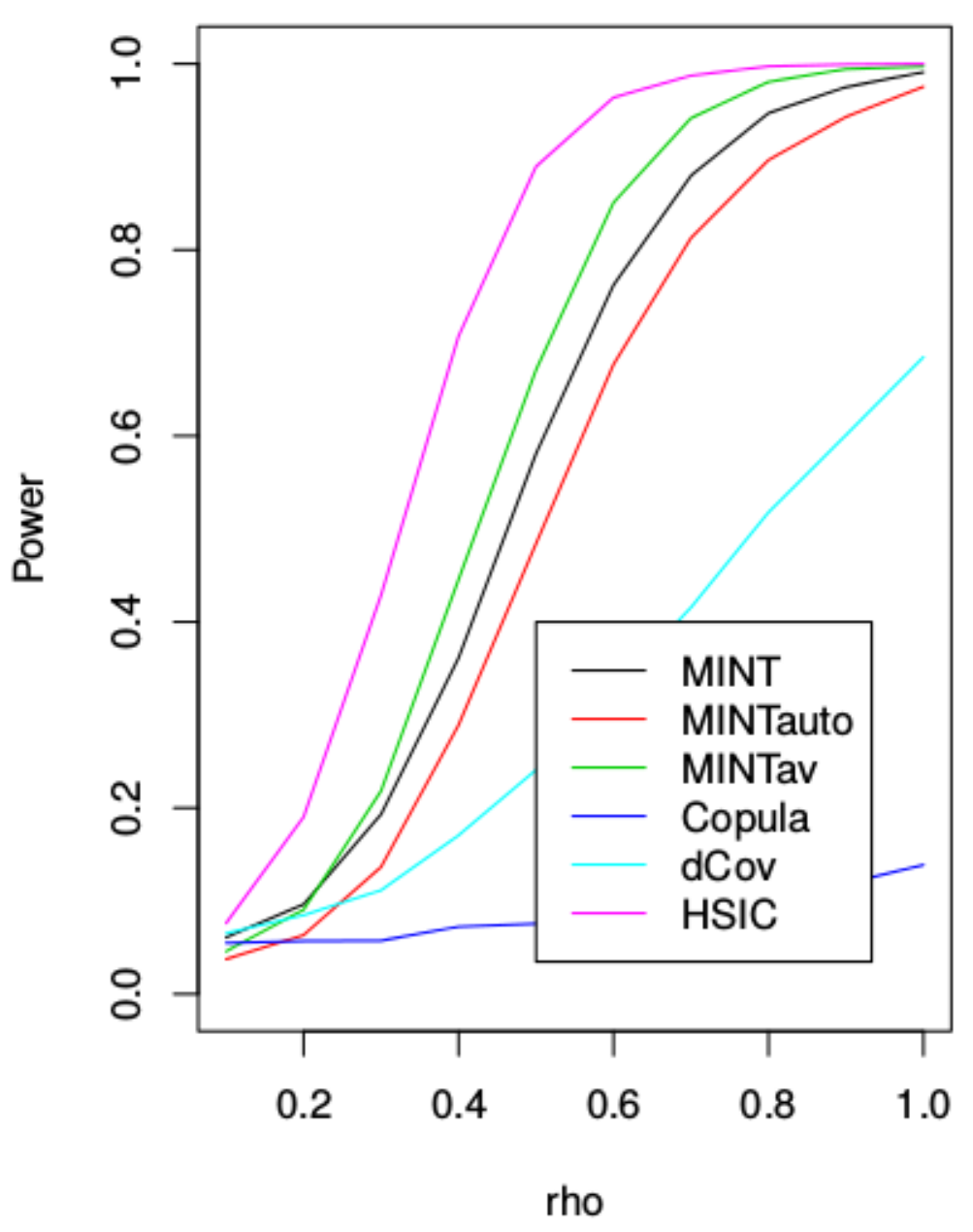}}
\caption{Power curves for the different tests considered, for settings~(i) (left), (ii) (middle) and~(iii) (right).  The marginals are univariate (top) and bivariate (bottom).}
\label{Fig:Simulations}
\end{figure}

\subsection{Real data}
\label{Sec:RealData}

In this section we illustrate the use of \texttt{MINTregression} on a data set comprising the average minimum daily January temperature, between 1931 and 1960 and measured in degrees Fahrenheit, in 56 US cities, indexed by their latitude and longitude \citep{Peixoto1990}. Fitting a normal linear model to the data with temperature as the response and latitude and longitude as covariates leads to the diagnostic plots shown in Figures~\ref{Fig:Diagnostic1} and~\ref{Fig:Diagnostic2}.
Based on these plots, it is relatively difficult to assess the strength of evidence against the model. Centering all of the variables and running \texttt{MINTregression} with $k_\eta=6$, $k=3$ and $B=1785 = \lfloor 10^5/56 \rfloor$ yielded a $p$-value of 0.00224, providing strong evidence that the normal linear model is not a good fit to the data.  Further investigation is possible via partial regression: in Figure~\ref{Fig:PartialRegression} we plot residuals based on a simple linear fit of Temperature on Latitude against the residuals of another linear fit of Longitude on Latitude.  This partial regression plot is indicative of a cubic relationship between Temperature and Longitude after removing the effects of Latitude.
\begin{figure}%
\centering
\subfigure{\label{Fig:Diagnostic1}\includegraphics[width=0.25\linewidth]{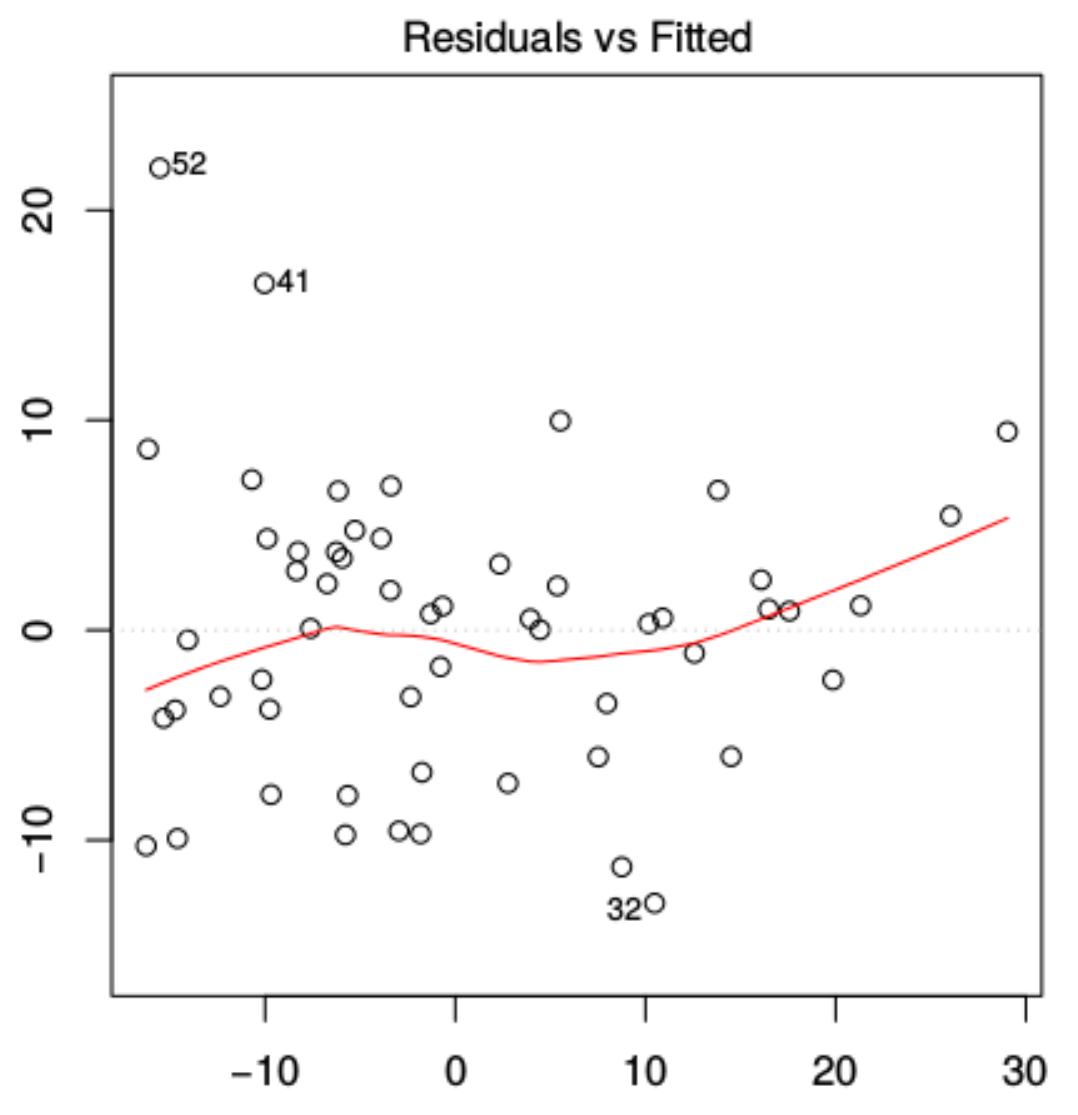}}\qquad
\subfigure{\label{Fig:Diagnostic2}\includegraphics[width=0.25\linewidth]{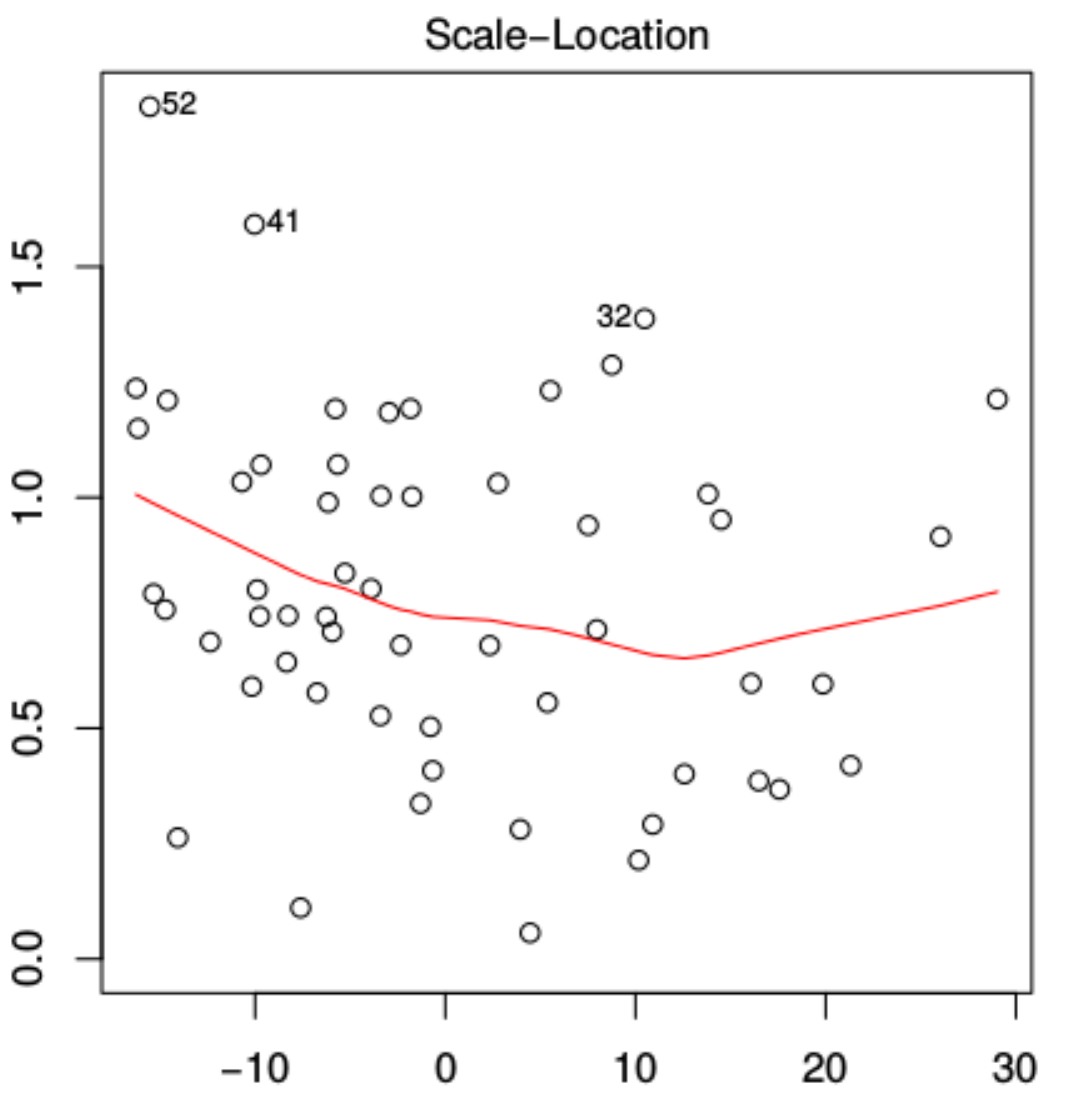}} \qquad
\subfigure{\label{Fig:PartialRegression} \includegraphics[width=0.25\linewidth]{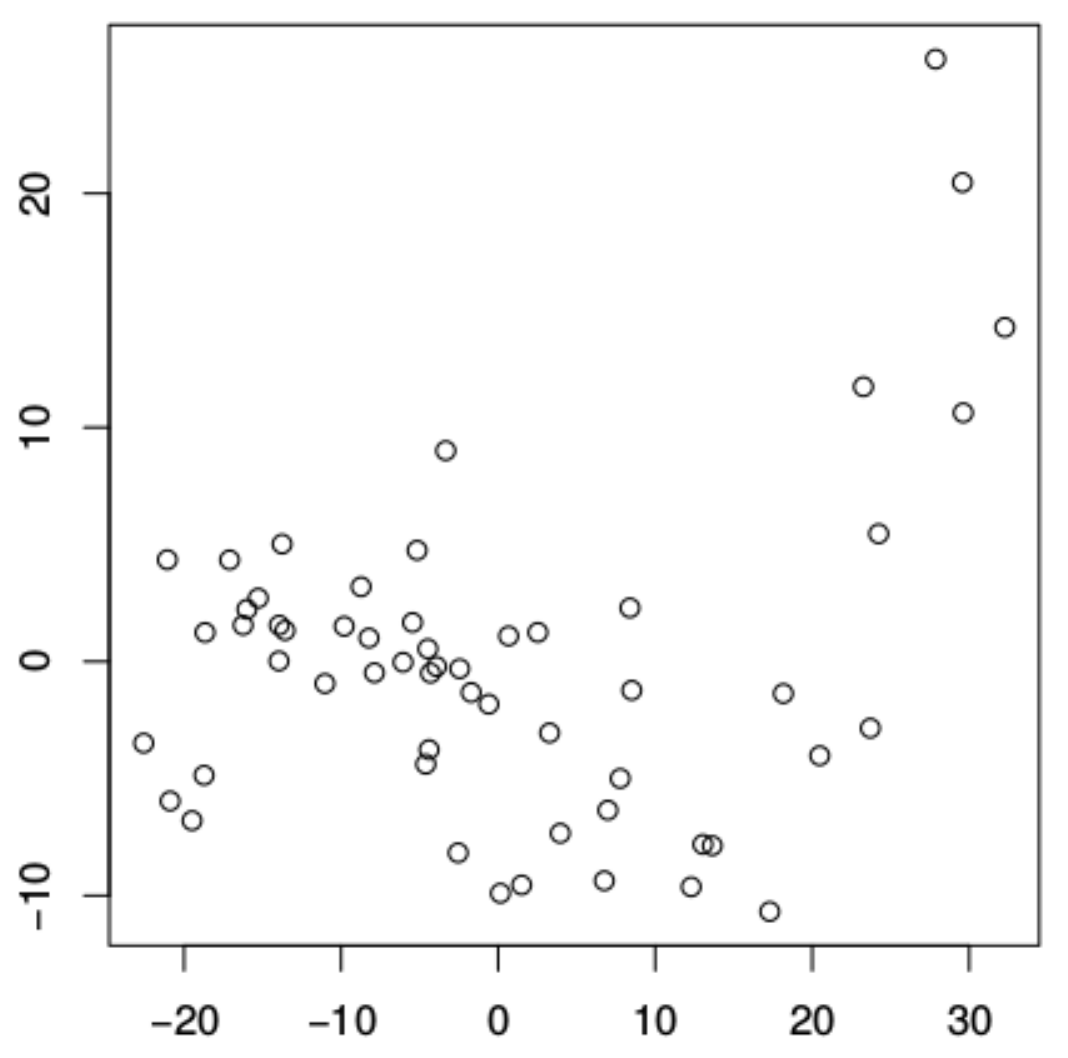}}
\caption{Plots of residuals against fitted values (left) and square roots of the absolute values of the standardised residuals against fitted values (middle) for the US cities temperature data.  The right panel gives a partial regression plot after removing the effect of latitude from both the response and longitude.}
\label{3figs}
\end{figure}
Based on this evidence we fitted a new model with a quadratic and cubic term in Longitude added to the first model. The $p$-value of the resulting $F$-test was $<2.2 \times 10^{-16}$, giving overwhelming evidence in favour of the inclusion of the quadratic and cubic terms. We then ran our extension of \texttt{MINTregression} to test the independence of $\epsilon$ and $(\mathrm{Longitude},\mathrm{Latitude})$, as described at the end of Section~\ref{Sec:Regression} with $B=1000$.  This resulted in a $p$-value of 0.0679, so we do not reject this model at the $5\%$ significance level. 

\section{Proofs}
\label{Sec:Proofs}

\subsection{Proofs from Section~\ref{Sec:KnownMarginals}}

\begin{proof}[Proof of Lemma~\ref{Lemma:Size}]
Under $H_0$, the finite sequence $(\hat{I}_n, \hat{I}_n^{(1)}, \ldots, \hat{I}_n^{(B)})$ is exchangeable.  If follows that under $H_0$, if we split ties at random, then every possible ordering of $\hat{I}_n, \hat{I}_n^{(1)}, \ldots, \hat{I}_n^{(B)}$ is equally likely.  In particular, the (descending order) rank of $\hat{I}_n$ among $\{\hat{I}_n, \hat{I}_n^{(1)}, \ldots, \hat{I}_n^{(B)}\}$, denoted $\mathrm{rank}(\hat{I}_n)$, is uniformly distributed on $\{1,2,\ldots,B+1\}$.  We deduce that for any $k \in \{1,\ldots,n-1\}$ and writing $\mathbb{P}_{H_0}$ for the probability under the product of any marginal distributions for $X$ and~$Y$, 
\begin{align*}
\mathbb{P}_{H_0}\bigl(\hat{I}_n > \hat{C}_q^{(n),B}\bigr) &= \mathbb{P}_{H_0}\biggl(\frac{1}{B+1}\sum_{b=0}^B \mathbbm{1}_{\{\hat{I}_n^{(b)} \geq \hat{I}_n\}} \leq q\biggr) \\
&\leq \mathbb{P}_{H_0}\bigl(\mathrm{rank}(\hat{I}_n) \leq q(B+1)\bigr) = \frac{\lfloor q(B+1) \rfloor}{B+1} \leq q,
\end{align*}
as required.
\end{proof}
Let $V(f) := \mathrm{Var} \log \frac{f(X,Y)}{f_X(X)f_Y(Y)}$.  The following result will be useful in the proof of Theorem~\ref{Thm:Power}.
\begin{lemma}
\label{Prop:Varlog}
Fix $d_X, d_Y \in \mathbb{N}$ and $\vartheta \in \Theta^2$. Then
\[
	\sup_{f \in \mathcal{F}_{d_X, d_Y, \vartheta}(0)} \frac{V(f)}{I(f)^{1/4}} < \infty.
\]
\end{lemma}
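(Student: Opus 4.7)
The plan is to establish the stronger bound $V(f) \le C\,I(f)^{1/4}$ uniformly on $\mathcal{F}_{d_X,d_Y,\vartheta}(0)$, via a Cauchy--Schwarz argument interpolating between the first and third moments of the log-density ratio. Writing $p(x,y) := f_X(x) f_Y(y)$ and, as a random variable under $(X,Y) \sim f$, $L := \log\bigl(f(X,Y)/p(X,Y)\bigr)$, we have $I(f) = \mathbb{E}_f L$ and $V(f) = \Var_f L \le \mathbb{E}_f L^2$. The factorisation $|L|^2 = |L|^{1/2} \cdot |L|^{3/2}$ combined with Cauchy--Schwarz gives
\[
V(f) \;\le\; \mathbb{E}_f L^2 \;\le\; \bigl(\mathbb{E}_f |L|\bigr)^{1/2} \bigl(\mathbb{E}_f |L|^3\bigr)^{1/2},
\]
so it suffices to bound $\mathbb{E}_f |L|$ by $C_1\,I(f)^{1/2}$ and $\mathbb{E}_f |L|^3$ by $C_2$, uniformly in $f$.

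The first bound follows from Pinsker's inequality. Decomposing $|L| = L + 2 L^{-}$ with $L^{-} := \log(p/f)\mathbf{1}_{\{f<p\}}$, and using $\log t \le t-1$ for $t>0$,
\[
\mathbb{E}_f L^{-} \;\le\; \int f\Bigl(\frac{p}{f} - 1\Bigr) \mathbf{1}_{\{f<p\}} \, d\lambda_d \;=\; \int (p - f)^{+} \, d\lambda_d \;=\; \tfrac{1}{2}\|f - p\|_1 \;\le\; \sqrt{I(f)/2},
\]
so that $\mathbb{E}_f |L| = I(f) + 2\mathbb{E}_f L^{-} \le I(f) + \sqrt{2\,I(f)}$. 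Since, by Jensen, $I(f) = \mathbb{E}_f L \le \mathbb{E}_f |L| \le (\mathbb{E}_f |L|^3)^{1/3} \le C_2^{1/3}$, the mutual information is itself uniformly bounded on the class, and hence $\mathbb{E}_f |L| \le C_1\,I(f)^{1/2}$ for some $C_1 = C_1(\vartheta)$.

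The main technical step is thus the uniform bound $\mathbb{E}_f |L|^3 \le C_2$. Using $|L|^3 \le 9\bigl(|\log f|^3 + |\log f_X|^3 + |\log f_Y|^3\bigr)$ and the fact that $\int f\,(\log f_X)^3\,d\lambda_d = \int f_X (\log f_X)^3 \, d\lambda_{d_X}$ (and similarly for $f_Y$), the problem reduces to bounding $\int g |\log g|^3 \,d\lambda$ uniformly for $g \in \{f, f_X, f_Y\}$. This is a standard third-moment-of-log-density estimate, valid for any density $g$ with $\|g\|_\infty < \infty$ and $\mu_{\alpha'}(g) < \infty$ for some $\alpha'$ exceeding the ambient dimension: split according to whether $g$ exceeds a small threshold $\delta$, bound $|\log g|$ by $\log(1/\delta) \vee \log \|g\|_\infty$ on $\{g \ge \delta\}$, and use $g|\log g|^3 \le C_\epsilon g^{1-\epsilon}$ (valid for $g$ small, any $\epsilon \in (0,1)$) together with $\int g^{1-\epsilon}\,d\lambda < \infty$ (by Holder against $(1+\|z\|^{\alpha'})^{-1}$) on $\{g < \delta\}$. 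The requisite sup-norm and moment hypotheses come directly from the class definition for $g \in \{f, f_Y\}$. For $g = f_X$ they must be extracted from the joint constraints: $\mu_\alpha(f_X) \le \mu_\alpha(f_X f_Y) \le \nu$ is immediate, while $\|f_X\|_\infty \le \gamma/\|f_Y\|_\infty$ follows from $\|f_X f_Y\|_\infty \le \gamma$; a short Markov-inequality argument using $\mu_\alpha(f_Y) \le \nu_Y$ then shows that at least half of the mass of $f_Y$ lies in a ball of radius depending only on $\nu_Y$, forcing $\|f_Y\|_\infty \ge c_Y > 0$ for some $c_Y = c_Y(\theta_Y)$, and hence $\|f_X\|_\infty \le \gamma/c_Y$ unconditionally.

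Combining the two factor bounds yields $V(f) \le \sqrt{C_1\,I(f)^{1/2} \cdot C_2} = \sqrt{C_1 C_2}\,I(f)^{1/4}$, which rearranges to the claim. The principal obstacle in this plan is the third paragraph: although each ingredient (extracting sup-norm and moment control on $f_X$ from the constraints on $f_X f_Y$ and $f_Y$; the uniform third-moment bound on $\log g$ under such control) is routine, assembling them specifically for the class $\mathcal{F}_{d_X,d_Y,\vartheta}$ requires a certain amount of care since $f_X$ is not among the marginals directly regularised by the class definition.
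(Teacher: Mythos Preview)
Your approach coincides with the paper's: both bound $V(f)$ via the Cauchy--Schwarz factorisation $\mathbb{E}_f L^2 \le (\mathbb{E}_f|L|)^{1/2}(\mathbb{E}_f|L|^3)^{1/2}$, control $\mathbb{E}_f|L|$ through Pinsker's inequality, and bound $\mathbb{E}_f|L|^3$ uniformly over the class. The one simplification you miss is that the definition of $\mathcal{F}_{d_X,d_Y,\vartheta}$ already stipulates $f_Xf_Y \in \mathcal{F}_{d,\theta}$; hence the paper writes $|L|^3 \lesssim |\log f|^3 + |\log(f_Xf_Y)|^3$ and applies the standard log-density moment estimate directly to the product $f_Xf_Y$, so your ``principal obstacle'' of extracting separate sup-norm and moment control on $f_X$ never arises. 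As a minor point, the hypothesis you state for that estimate --- moment exponent exceeding the ambient dimension --- is neither implied by $\vartheta \in \Theta^2$ nor needed: your own H\"older argument works for any $\alpha' > 0$ by choosing $\epsilon < \alpha'/(\alpha'+D)$.
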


\begin{proof}[Proof of Lemma~\ref{Prop:Varlog}]
For $x \in \mathbb{R}$ we write $x_-:=\max(0,-x)$.  We have 
\begin{align*}
	\mathbb{E} \biggl[\biggl\{ \log \frac{f(X,Y)}{f_X(X)f_Y(Y)} \biggr\}_-\biggr] &= \int_{\{(x,y):f(x,y) \leq f_X(x)f_Y(y)\}} f(x,y) \log \frac{f_X(x)f_Y(y)}{f(x,y)} \,d\lambda_d(x,y) \\
&\leq \int_{\{(x,y):f(x,y) \leq f_X(x)f_Y(y)\}} f(x,y) \biggl\{ \frac{f_X(x)f_Y(y)}{f(x,y)} -1 \biggr\} \,d\lambda_d(x,y) \\
	& = \sup_{A \in \mathcal{B}(\mathbb{R}^d)} \biggl\{\int_A f_Xf_Y - \int_A f\biggr\} \leq \{ I(f) /2 \}^{1/2},
\end{align*}
where the final inequality is Pinsker's inequality.
We therefore have that
\begin{align*}
	V(f) &\leq \mathbb{E} \log^2 \frac{f(X,Y)}{f_X(X)f_Y(Y)} = \mathbb{E} \biggl\{ \Bigl| \log \frac{f(X,Y)}{f_X(X)f_Y(Y)} \Bigr|^{1/2} \Bigl| \log \frac{f(X,Y)}{f_X(X)f_Y(Y)} \Bigr|^{3/2} \biggr\} \nonumber \\
	& \leq \biggl\{ \mathbb{E} \Bigl| \log \frac{f(X,Y)}{f_X(X)f_Y(Y)} \Bigr| \biggr\}^{1/2}\biggl\{\mathbb{E}\Bigl| \log \frac{f(X,Y)}{f_X(X)f_Y(Y)} \Bigr|^3\biggr\}^{1/2} \nonumber \\
&= \biggl( I(f) + 2\mathbb{E} \biggl[\biggl\{ \log \frac{f(X,Y)}{f_X(X)f_Y(Y)} \biggr\}_-\biggr] \biggr)^{1/2} \biggl\{\mathbb{E}\Bigl| \log \frac{f(X,Y)}{f_X(X)f_Y(Y)} \Bigr|^3\biggr\}^{1/2} \nonumber \\
&\leq 2\bigl\{I(f) + 2^{1/2}I(f)^{1/2}\bigr\}^{1/2}\bigl\{\mathbb{E} \bigl|\log \bigl(f_X(X)f_Y(Y)\bigr)\bigr|^3 +\mathbb{E} |\log f(X,Y)|^3 \bigr\}^{1/2}.
\end{align*}
By \citet[][Lemma~11(i)]{BSY2017}, we conclude that 
\[
\sup_{f \in \mathcal{F}_{d_X,d_Y,\vartheta}:I(f) \leq 1} \frac{V(f)}{I(f)^{1/4}} < \infty.
\]
The result follows from this fact, together with the observation that 
\begin{equation}
\label{Eq:VBounded}
	\sup_{f \in \mathcal{F}_{d_X,d_Y,\vartheta}} V(f) \leq 2\sup_{f \in \mathcal{F}_{d_X,d_Y,\vartheta}}\bigl\{\mathbb{E}\log^2 f(X,Y) + \mathbb{E}\log^2 \bigl(f_X(X)f_Y(Y)\bigr)\bigr\} < \infty,
\end{equation}
where the final bound follows from another application of \citet[][Lemma~11(i)]{BSY2017}.
\end{proof}

\begin{proof}[Proof of Theorem~\ref{Thm:Power}]
Fix $f \in \mathcal{F}_{d_X,d_Y,\vartheta}(b_n)$.  We have by two applications of Markov's inequality that
\begin{align}
\label{Eq:Power1}
	\mathbb{P}_f(\hat{I}_n \leq \hat{C}_q^{(n),B_n}) &\leq \frac{1}{q(B_n+1)} \{1 +B_n \mathbb{P}_f(\hat{I}_n^{(1)} \geq \hat{I}_n)\} \nonumber \\
	&\leq \frac{1}{q} \mathbb{P}_f\bigl( |\hat{I}_n-\hat{I}_n^{(1)} - I(f)| \geq I(f)\bigr) +\frac{1}{q(B_n+1)} \nonumber \\
	&\leq \frac{1}{qI(f)^2} \mathbb{E}_f[ \{\hat{I}_n-\hat{I}_n^{(1)} - I(f)\}^2]+\frac{1}{q(B_n+1)}.
\end{align}
It is convenient to define the following entropy estimators based on pseudo-data, as well as oracle versions of our entropy estimators: let $Z_i^{(1)} := (X_i^T,Y_i^{(1)T})^T$, $Z^{(1)} := (Z_1^{(1)},\ldots,Z_n^{(1)})^T$ and $Y^{(1)} := (Y_1^{(1)},\ldots,Y_n^{(1)})^T$ and set
\begin{alignat*}{2}
&\hat{H}_n^{Z,(1)} := \hat{H}_{n,k}^{Z,w^Z}(Z^{(1)}), \quad &&\hat{H}_n^{Y,(1)} := \hat{H}_{n,k}^{Y,w^Y}(Y^{(1)})  \\
&H_n^* := -\frac{1}{n}\sum_{i=1}^n \log f(X_i,Y_i), \quad &&H_n^{*,Y} := -\frac{1}{n}\sum_{i=1}^n \log f_Y(Y_i), \\
&H_n^{*,(1)} := -\frac{1}{n}\sum_{i=1}^n \log f_X(X_i)f_Y(Y_i^{(1)}), \quad &&H_n^{*,Y,(1)} := -\frac{1}{n}\sum_{i=1}^n \log f_Y(Y_i^{(1)}).
\end{alignat*}
Writing $I_n^* := \frac{1}{n} \sum_{i=1}^n \log \frac{f(X_i,Y_i)}{f_X(X_i)f_Y(Y_i)}$ we have by \citet[][Theorem~1]{BSY2017} that
\begin{align*}
	&\sup_{\substack{k_Y \in \{k_0^*, \ldots, k_Y^*\} \\ k \in \{k_0^*, \ldots, k^*\}}}\sup_{f \in \mathcal{F}_{d_X,d_Y,\vartheta}} n \mathbb{E}_f\{(\hat{I}_n-\hat{I}_n^{(1)} - I_n^*)^2 \} \\
&= \! \! \! \sup_{\substack{k_Y \in \{k_0^*, \ldots, k_Y^*\} \\ k \in \{k_0^*, \ldots, k^*\}}}\sup_{f \in \mathcal{F}_{d_X,d_Y,\vartheta}} \! \! \! \! n\mathbb{E}_f\biggl[\biggl\{\hat{H}_n^Y - \hat{H}_n - \hat{H}_n^{Y,(1)} + \hat{H}_n^{(1)} - \bigl(H_n^{*,Y} - H_n^* - H_n^{*,Y,(1)} + H_n^{*,(1)}\bigr)\biggr\}^2\biggr] \\
&\leq 4n \sup_{\substack{k_Y \in \{k_0^*, \ldots, k_Y^*\} \\ k \in \{k_0^*, \ldots, k^*\}}}\sup_{f \in \mathcal{F}_{d_X,d_Y,\vartheta}} \mathbb{E}_f\Bigl\{(\hat{H}_n^Y - H_n^{*,Y})^2 + (\hat{H}_n - H_n^*)^2 \\
&\hspace{7cm}+ (\hat{H}_n^{Y,(1)} - H_n^{*,Y,(1)})^2 + (\hat{H}_n^{(1)} - H_n^{*,(1)})^2\Bigr\} \rightarrow 0.
\end{align*}
It follows by Cauchy--Schwarz and the fact that $\mathbb{E}\bigl[\bigl\{I_n^* - I(f)\bigr\}^2\bigr] = V(f)/n$ that
\begin{align}
\label{Eq:Epsilon3}
	\epsilon_n^2&:= \sup_{\substack{k_Y \in \{k_0^*, \ldots, k_Y^*\} \\ k \in \{k_0^*, \ldots, k^*\}}} \sup_{f \in \mathcal{F}_{d_X,d_Y,\vartheta}} \Bigl| n\mathbb{E}_f \bigl[\bigl\{\hat{I}_n- \hat{I}_n^{(1)}-I(f)\bigr\}^2 \bigr] - V(f) \Bigr| \nonumber \\
	& \leq \sup_{\substack{k_Y \in \{k_0^*, \ldots, k_Y^*\} \\ k \in \{k_0^*, \ldots, k^*\}}} \sup_{f \in \mathcal{F}_{d_X,d_Y,\vartheta}}  \Bigl\{ n \mathbb{E}_f \{(\hat{I}_n - \hat{I}_n^{(1)}- I_n^*)^2 \} + 2 \bigl[ n \mathbb{E}_f \{(\hat{I}_n - \hat{I}_n^{(1)} - I_n^*)^2 \} V(f) \bigr]^{1/2} \Bigr\} \nonumber \\
&\rightarrow 0,
\end{align}
where we use~\eqref{Eq:VBounded} to bound $V(f)$ above. Now consider $b_n:= \max( \epsilon_n^{1/2} n^{-1/2}, n^{-4/7} \log n)$. By~\eqref{Eq:Power1},~\eqref{Eq:Epsilon3} and Lemma~\ref{Prop:Varlog} we have that
\begin{align*}
	\inf_{B_n \geq B_n^*} \inf_{\substack{k_Y \in \{k_0^*, \ldots, k_Y^*\} \\ k \in \{k_0^*, \ldots, k^*\}}}&\inf_{f \in \mathcal{F}_{d_X,d_Y,\vartheta}(b_n)} \mathbb{P}_f(\hat{I}_n > C_q^{(n),B_n}) \\
	&\geq 1- \sup_{\substack{k_Y \in \{k_0^*, \ldots, k_Y^*\} \\ k \in \{k_0^*, \ldots, k^*\}}}\sup_{f \in \mathcal{F}_{d_X,d_Y,\vartheta}(b_n)} \frac{\mathbb{E}[ \{\hat{I}_n-\hat{I}_n^{(1)} - I(f)\}^2]}{qI(f)^2} -\frac{1}{q(B_n^*+1)} \\
	& \geq 1 - \sup_{f \in \mathcal{F}_{d_X,d_Y,\vartheta}(b_n)} \frac{V(f) + \epsilon_n^2}{nq I(f)^2} -\frac{1}{q(B_n^*+1)} \\
	& \geq 1 - \frac{1}{q \log^{7/4} n} \sup_{f \in \mathcal{F}_{d_X,d_Y,\vartheta}(0)} \frac{V(f)}{I(f)^{1/4}} - \frac{\epsilon_n}{q} -\frac{1}{q(B_n^*+1)} \rightarrow 1,
\end{align*}
as required.
\end{proof}

\subsection{Proofs from Section~\ref{Sec:UnknownMarginals}}

\begin{proof}[Proof of Lemma~\ref{Thm:size}]
We first claim that $(\hat{I}_n^{(0)}, \hat{I}_n^{(1)}, \ldots, \hat{I}_n^{(B)})$ is an exchangeable sequence under $H_0$. Indeed, let $\sigma$ be an arbitrary permutation of $\{0,1,\ldots,B\}$, and recall that $\hat{I}_n^{(b)}$ is computed from $(X_i,Y_{\tau_b(i)})_{i=1}^n$, where $\tau_b$ is uniformly distributed on $S_n$.  Note that, under $H_0$, we have $(X_i,Y_i)_{i=1}^n \overset{d}{=} (X_i,Y_{\tau(i)})_{i=1}^n$ for any $\tau \in S_n$.  Hence, for any $A \in \mathcal{B}(\mathbb{R}^{B+1})$,
\begin{align*}
	&\mathbb{P}_{H_0}\bigl((\hat{I}_n^{(\sigma(0))}, \ldots, \hat{I}_n^{(\sigma(B))}) \in A\bigr) \\
	&= \frac{1}{(n!)^B} \sum_{\pi_1, \ldots, \pi_B \in S_n} \mathbb{P}_{H_0}\bigl((\hat{I}_n^{(\sigma(0))}, \ldots, \hat{I}_n^{(\sigma(B))}) \in A | \tau_1=\pi_1, \ldots, \tau_B=\pi_B\bigr) \\
	&=\frac{1}{(n!)^B} \sum_{\pi_1, \ldots, \pi_B \in S_n} \mathbb{P}_{H_0}\bigl((\hat{I}_n^{(0)}, \ldots, \hat{I}_n^{(B)}) \in A | \tau_1=\pi_{\sigma(1)} \pi_{\sigma(0)}^{-1}, \ldots, \tau_B=\pi_{\sigma(B)} \pi_{\sigma(0)}^{-1}\bigr) \\
	&= \mathbb{P}_{H_0}\bigl((\hat{I}_n^{(0)}, \ldots, \hat{I}_n^{(B)}) \in A\bigr).
\end{align*}
By the same argument as in the proof of Lemma~\ref{Lemma:Size} we now have that
\[
	\mathbb{P}_{H_0}(\hat{I}_n > \tilde{C}_q^{(n),B}) \leq \frac{\lfloor q(B+1) \rfloor}{B+1} \leq q,
\]
as required.
\end{proof}
Recall the definition of $\hat{H}_n^{(1)}$ given just after~\eqref{Eq:Reject}.  We give the proof of Theorem~\ref{Thm:MINTunknownpower} below, followed by Lemma~\ref{Lemma:Unbiased}, which is the key ingredient on which it is based.
\begin{proof}[Proof of Theorem~\ref{Thm:MINTunknownpower}]
We have by Markov's inequality that
\begin{align}
\label{Eq:Power}
	\mathbb{P}_f(\hat{I}_n \leq \tilde{C}_q^{(n),B_n}) &\leq \frac{1}{q(B_n+1)} \{1 +B_n \mathbb{P}_f(\hat{I}_n \leq \tilde{I}_n^{(1)})\} \leq \frac{1}{q} \mathbb{P}_f(\hat{H}_n^Z \geq \tilde{H}_n^{(1)}) +\frac{1}{q(B_n+1)}.
\end{align}
But
\begin{align}
\label{Eq:Expansion}
	\mathbb{P}_f(\hat{H}_n^{Z}& \geq \hat{H}_n^{(1)}) = \mathbb{P}_f\bigl( \hat{H}_n^{Z} - H(X,Y) \geq \hat{H}_n^{(1)} - H(X)-H(Y)+I(X;Y) \bigr) \nonumber \\
	&\leq \mathbb{P}\Bigl(|\hat{H}_n^{Z}-H(X,Y)| \geq \frac{1}{2}I(X;Y) \Bigr) + \mathbb{P}\Bigl(|\hat{H}_n^{(1)} - H(X)-H(Y)| \geq \frac{1}{2}I(X;Y) \Bigr) \nonumber \\
&\leq \frac{2}{I(X;Y)}\bigl\{\mathbb{E}|\hat{H}_n^{Z}-H(X,Y)| + \mathbb{E}|\hat{H}_n^{(1)} - H(X)-H(Y)|\bigr\}.
\end{align}
The theorem therefore follows immediately from~\eqref{Eq:Power},~\eqref{Eq:Expansion} and Lemma~\ref{Lemma:Unbiased} below.
\end{proof}
\begin{lemma}
\label{Lemma:Unbiased}
Let $d_X, d_Y \in \mathbb{N}$, let $d = d_X+d_Y$ and fix $\phi=(\alpha,\mu,\nu,(c_n),(p_n)) \in \Phi$ with $c_n \rightarrow 0$ and $p_n = o(1/\log n)$ as $n \rightarrow \infty$.  Let $k_0^* = k_{0,n}^*$ and $k_1^* = k_{1,n}^*$ denote two deterministic sequences of positive integers satisfying $k_0^* \leq k_1^*$, $k_0^*/ \log^2 n \rightarrow \infty$ and $(k_1^* \log^2n) /n \rightarrow 0$.  Then 
\begin{align*}
\sup_{k \in \{k_0^*, \ldots, k_1^*\}} \sup_{f \in \mathcal{G}_{d_X,d_Y,\phi}} &\mathbb{E}_{f}| \hat{H}_n^{(1)} - H(X) - H(Y) | \rightarrow 0, \\
\sup_{k \in \{k_0^*, \ldots, k_1^*\}} \sup_{f \in \mathcal{H}_{d,\phi}} &\mathbb{E}_{f}| \hat{H}_n^Z - H(X,Y)| \rightarrow 0
\end{align*}
as $n \rightarrow \infty$.
\end{lemma}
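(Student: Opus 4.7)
Both claims are $L_1$-consistency statements for the weighted Kozachenko--Leonenko entropy estimator: the second on the original iid sample $Z_1,\ldots,Z_n$ targeting $H(X,Y)$, and the first on the permuted sample $Z_i^{(1)} := (X_i, Y_{\tau_1(i)})$ targeting $H(X) + H(Y)$. My plan is to perform a bias-plus-variance analysis in the spirit of \citet{BSY2017} for each, with the permuted case requiring additional work to handle the dependencies induced by $\tau_1$.

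For the second statement, the class $\mathcal{H}_{d,\phi}$ is tailored to the nearest-neighbour analysis. Conditional on $Z_i = z \in \mathcal{Z}_n$, the defining condition $M_f(\mathcal{Z}_n) \leq c_n \to 0$ implies that $\int_{B_z(r)} f\,d\lambda_d = f(z) V_d r^d (1+o(1))$ uniformly for $r$ of order $r_{z,f,k/n}$, so that $(n-1) V_d f(Z_i) \rho_{(j),i}^d$ has the same asymptotic distribution as a $\Gamma(j,1)$ variable up to a controlled error. Taking logs and averaging yields $\mathbb{E}[\hat{H}_n^Z] = H(f) + o(1)$ uniformly. The contribution from $\mathcal{Z}_n^c$ is made negligible via $\int_{\mathcal{Z}_n^c} f\,d\lambda_d \leq p_n = o(1/\log n)$ together with standard moment bounds on $\rho_{(j),i}^d$ exploiting $\mu_\alpha(f) \leq \mu$ and $\|f\|_\infty \leq \nu$; the requirement $k_0^*/\log^2 n \to \infty$ is what makes the log-integrability effective. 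The variance of $\hat{H}_n^Z$ is then controlled by a standard near-iid concentration argument.

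The first statement requires dealing with the random permutation. Each $Z_i^{(1)}$ has marginal density $\tfrac{1}{n} f + \tfrac{n-1}{n} f_X f_Y$, converging to $f_X f_Y$, and the joint density of $(Z_i^{(1)}, Z_k^{(1)})$ for $i \ne k$ tensorises as $(f_X f_Y)(z_i)(f_X f_Y)(z_k)$ up to $O(1/n)$ corrections coming from the events $\{\tau_1(i), \tau_1(k)\} \cap \{i, k\} \neq \emptyset$. My plan is to rerun the bias analysis of the previous paragraph with $f$ replaced by $f_X f_Y$, using the hypothesis from the definition of $\mathcal{G}_{d_X,d_Y,\phi}$ that there exists $\mathcal{W}_n$ with $M_{f_X f_Y}(\mathcal{W}_n) \leq c_n \to 0$ and $\int_{\mathcal{W}_n^c} f_X f_Y\,d\lambda_d \leq p_n$. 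The key input is
\[
\mathbb{P}\bigl( Z_k^{(1)} \in B_z(r) \bigm| Z_i^{(1)} = z \bigr) = (f_X f_Y)(z)\, V_d r^d \bigl\{1 + O(c_n + 1/n)\bigr\},
\]
uniformly over $z \in \mathcal{W}_n$ and $r$ of the relevant order, which can be obtained by decomposing on the value of $\tau_1(k)$ relative to $\{i, k\}$.

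The main obstacle I anticipate is the variance bound for $\hat{H}_n^{(1)}$: independence across indices no longer holds. I plan to handle this by conditioning on $\tau_1$ and using (i) the fact that a uniformly random permutation has only $O(\log n)$ fixed points and short cycles with high probability, so the dependence between typical $Z_i^{(1)}$ and $Z_k^{(1)}$ is mediated by an $O(1/n)$ perturbation, and (ii) that the weighted-KL estimator depends on the sample only through local NN distances, which asymptotically factorise. Combining these, one obtains a variance bound of order $O(\mathrm{polylog}(n)/n)$, comfortably dominated by the $o(1)$ target. Once this step is in place, the bias analysis outlined above yields the desired uniform $L_1$ convergence over $\mathcal{G}_{d_X,d_Y,\phi}$.
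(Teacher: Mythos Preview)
Your high-level intuition is sound, but the execution plan diverges from the paper's and, more importantly, your variance step has a genuine gap.

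First, a factual slip: a uniformly random permutation does not have $O(\log n)$ fixed points; the number of fixed points is approximately Poisson with mean $1$. The paper exploits precisely this, by partitioning $S_n$ according to the fixed-point set and showing that permutations with more than $l_n := \lfloor \log\log n\rfloor$ fixed points contribute negligibly (their total probability is $\lesssim 1/l_n!$, absorbed by a crude $\log n$ bound on the integrand).

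Second, and more seriously, your plan to bound $\Var(\hat H_n^{(1)})$ by appealing to ``asymptotic factorisation of local NN distances'' is not a workable strategy as stated. The correlation between $\log\xi_i^{(1)}$ and $\log\xi_j^{(1)}$ depends on whether the $k$ nearest neighbours of $Z_i^{(1)}$ and $Z_j^{(1)}$ overlap, and under the permutation this is entangled with the cycle structure of $\tau_1$ in a way your proposal does not address. The paper sidesteps the need for any variance bound: rather than splitting into bias and variance, it works termwise, showing that $\mathbb{E}_\tau|\log(\xi_i^{(1)} f_Xf_Y(Z_i^{(1)}))| \to 0$ uniformly over the ``good'' indices and events. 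Concretely, once $\tau$ is fixed with fixed-point set $\{1,\ldots,l\}$ ($l\le l_n$), the paper introduces the event $A_i$ that $Z_i^{(1)}\in\mathcal{W}_n$ and none of the $k$ nearest neighbours of $Z_i^{(1)}$ lie among $Z_1^{(1)},\ldots,Z_l^{(1)}$; on $A_i$ the relevant neighbours are genuinely iid $f_Xf_Y$ (up to the two indices $\tau(i),\tau^{-1}(i)$), so a direct Chebyshev argument on the count $N_{i,\delta} := \sum_{j>l,\,j\ne i}\mathbbm{1}\{\|Z_j^{(1)}-Z_i^{(1)}\|\le r_{Z_i^{(1)},f_Xf_Y,\delta}\}$ shows $\xi_i^{(1)}f_Xf_Y(Z_i^{(1)})$ concentrates near $1$. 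This is a first-moment argument on each summand, not a second-moment argument on the average.

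Third, controlling $\mathbb{P}(A_i^c)$ requires a tool you do not mention: a cone-covering argument (Biau--Devroye) showing that at most $klC_{\pi/6}$ of the points $Z_{l+1}^{(1)},\ldots,Z_n^{(1)}$ can have any of $Z_1^{(1)},\ldots,Z_l^{(1)}$ among their $k$ nearest neighbours. This deterministic geometric bound is what makes the ``bad neighbour'' event rare enough, and there is no obvious substitute for it in your averaging-over-$\tau$ scheme.

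In short, your conditional probability formula $\mathbb{P}(Z_k^{(1)}\in B_z(r)\mid Z_i^{(1)}=z) = f_Xf_Y(z)V_dr^d\{1+O(c_n+1/n)\}$ is morally correct but does not by itself yield the result; the paper's route---condition on $\tau$, isolate fixed points, use the cone bound to control bad neighbours, then do a termwise $L_1$ bound via Chebyshev on $N_{i,\delta}$---is what closes the argument.
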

\begin{proof}
We prove only the first claim in Lemma~\ref{Lemma:Unbiased}, since the second claim involves similar arguments but is more straightforward because the estimator is based on an independent and identically distributed sample and no permutations are involved.  Throughout this proof, we write $a \lesssim b$ to mean that there exists $C > 0$, depending only on $d_X, d_Y \in \mathbb{N}$ and $\phi \in \Phi$, such that $a \leq Cb$.  Fix $f \in \mathcal{G}_{d_X,d_Y,\phi}$, where $\phi = (\alpha,\mu,\nu,(c_n),(p_n))$.  Write $\rho_{(k),i,(1)}$ for the distance from $Z_i^{(1)}$ to its $k$th nearest neighbour in the sample $Z_1^{(1)}, \ldots, Z_n^{(1)}$ and $\xi_i^{(1)}:=e^{-\Psi(k)}V_d(n-1) \rho_{(k),i,(1)}^d$ so that
\[
	\hat{H}_n^{(1)} = \frac{1}{n} \sum_{i=1}^n \log \xi_i^{(1)}.
\]
Recall that $S_n$ denotes the set of all permutations of $\{1,\ldots,n\}$, and define $S_n^l \subseteq S_n$ to be the set of permutations that fix $\{1,\ldots,l\}$ and have no fixed points among $\{l+1,\ldots,n\}$; thus $|S_n^l| = (n-l)!\sum_{r=0}^{n-l} (-1)^r/r!$.  For $\tau \in S_n$, write $\mathbb{P}_\tau(\cdot) :=\mathbb{P}( \cdot | \tau_1=\tau)$, and $\mathbb{E}_\tau(\cdot) :=\mathbb{E}( \cdot | \tau_1=\tau)$.  Let $l_n := \lfloor \log \log n \rfloor$.  For $i=1,\ldots,n$, let
\[
A'_i:=\{\text{The $k$ nearest neighbours of $Z_i^{(1)}$ are among $Z_{l+1}^{(1)}, \ldots, Z_n^{(1)}$}\},
\]
and let $A_i:=A'_i \cap \{Z_i^{(1)} \in \mathcal{W}_n\}$.  Using the exchangeability of $\xi_1^{(1)},\ldots,\xi_n^{(1)}$, our basic decomposition is as follows:
\begin{align}
\label{Eq:MainDecomp}
&\mathbb{E}\bigl|\hat{H}_n^{(1)} - H(X) - H(Y)| \nonumber \\
&\leq \frac{1}{n!}\sum_{l=0}^{l_n}\binom{n}{l} \sum_{\tau \in S_n^l} \biggl\{\frac{1}{n}\sum_{i=1}^l \mathbb{E}_\tau |\log \xi_i^{(1)}| + \mathbb{E}_\tau\biggl|\frac{1}{n}\sum_{i=l+1}^n \log \xi_i^{(1)} - H(X) - H(Y)\biggr|\biggr\} \nonumber \\
&\hspace{7cm}+ \frac{1}{n!}\sum_{l=l_n+1}^n\binom{n}{l} \sum_{\tau \in S_n^l} \mathbb{E}_\tau\bigl|\hat{H}_n^{(1)} - H(X) - H(Y)\bigr| \nonumber \\
&\leq \frac{1}{n!}\sum_{l=0}^{l_n}\binom{n}{l} \sum_{\tau \in S_n^l} \biggl\{\frac{1}{n}\sum_{i=1}^l \mathbb{E}_\tau |\log \xi_i^{(1)}| + \mathbb{E}_\tau\biggl|\frac{1}{n}\sum_{i=l+1}^n (\log \xi_i^{(1)})\mathbbm{1}_{A_i} - H(X) - H(Y)\biggr| \nonumber \\
&+ \frac{1}{n}\sum_{i=l+1}^n \mathbb{E}_\tau\bigl(|\log \xi_i^{(1)}|\mathbbm{1}_{\{(A_i^c \setminus (A_i')^c) \cup (A_i')^c\}}\bigr)\biggr\} + \frac{1}{n!}\sum_{l=l_n+1}^n\binom{n}{l} \sum_{\tau \in S_n^l} \mathbb{E}_\tau\bigl(|\hat{H}_n^{(1)}| + |H(X) + H(Y)|\bigr).
\end{align}
The rest of the proof consists of handling each of these four terms.

\textbf{Step 1:} We first show that
\begin{equation}
\label{Eq:Step1}
\sup_{k \in \{1,\ldots,n-1\}} \sup_{f \in \mathcal{G}_{d_X,d_Y,\phi}} \frac{1}{n!}\sum_{l=l_n+1}^n\binom{n}{l} \sum_{\tau \in S_n^l} \mathbb{E}_\tau\bigl(|\hat{H}_n^{(1)}| + |H(X) + H(Y)|\bigr) \rightarrow 0
\end{equation}
as $n \rightarrow \infty$.  
Writing $\rho_{(k),i,X}$ for the distance from $X_i$ to its $k$th nearest neighbour in the sample $X_1, \ldots, X_n$ and defining $\rho_{(k),i,Y}$ similarly, we have
\[
	\max\{\rho_{(k),i,X}^2,\rho_{(k),\tau_1(i),Y}^2\}  \leq \rho_{(k),i,(1)}^2 \leq \rho_{(n-1),i,X}^2 + \rho_{(n-1),\tau_1(i),Y}^2.
\]
Using the fact that $\log(a+b) \leq \log 2 + |\log a| + |\log b|$ for $a,b>0$, we therefore have that
\begin{align}
	|& \log \xi_i^{(1)}| \leq \biggl| \log \biggl( \frac{V_d(n-1)}{e^{\Psi(k)}} \biggr) \biggr| + d |\log \rho_{(k),i,X}| + \frac{d}{2} \log \biggl( \frac{\rho_{(k),i,(1)}^2}{\rho_{(k),i,X}^2} \biggr)  \nonumber \\
	& \leq \biggl| \log \biggl( \frac{V_d(n-1)}{e^{\Psi(k)}} \biggr) \biggr| +2d|\log \rho_{(k),i,X}| + \frac{d}{2} \log 2 + d|\log \rho_{(n-1),i,X}|+ d|\log \rho_{(n-1),\tau_1(i),Y}| \label{Eq:crudebound1} \\
	& \leq \biggl| \log \biggl( \frac{V_d(n-1)}{e^{\Psi(k)}} \biggr) \biggr| + \frac{d}{2} \log 2 \nonumber \\
&\hspace{1cm} +4d \max_{j=1,\ldots,n} \max\{ -\log \rho_{(1),j,X} \, , \, -\log \rho_{(1),j,Y} \, , \, \log \rho_{(n-1),j,X} \, , \, \log \rho_{(n-1),j,Y} \}. \label{Eq:crudebound2}
\end{align}
Now, by the triangle inequality, a union bound and Markov's inequality,
\begin{align}
\label{Eq:max}
	\mathbb{E} \Bigl\{\max_{j=1, \ldots, n}&\log \rho_{(n-1),j,X} \Bigr\}- \log 2 \leq \mathbb{E}\log \Bigl( \max_{j=1,\ldots,n} \|X_j\| \Bigr) \leq \mathbb{E} \Bigl[\Bigl\{\log \Bigl( \max_{j=1,\ldots,n} \|X_j\|\Bigr)\Bigr\}_+\Bigr] \nonumber \\
	&= \int_0^\infty \mathbb{P} \Bigl( \max_{j=1,\ldots,n} \|X_j\| \geq e^M \Bigr) \,dM \nonumber \\
	& \leq \frac{1}{\alpha}\max\bigl\{0,\log n + \log \mathbb{E}( \|X_1\|^\alpha)\bigr\} + \frac{1}{\alpha} n \mathbb{E}( \|X_1\|^\alpha) \exp\bigl\{ - \log n - \log \mathbb{E}( \|X_1\|^\alpha)\bigr\} \nonumber \\
	& \leq \frac{1}{\alpha} \bigl\{1 + \log n + \max(0,\log \mu)\bigr\},
\end{align}
and the same bound holds for $\mathbb{E} \max_{j=1, \ldots, n} \log \rho_{(n-1),j,Y}$. Similarly, 
\begin{align}
\label{Eq:min}
	\mathbb{E}\Bigl[\Bigl\{\min_{j=1, \ldots, n} \log \rho_{(1),j,X} \Bigr\}_-\Bigr] &= \int_0^\infty \mathbb{P} \Bigl( \min_{j=1, \ldots, n} \rho_{(1),j,X} \leq e^{-M} \Bigr) \,dM \nonumber \\
	& \leq 2 d_X^{-1} \log n + n(n-1)V_{d_X} \|f_X\|_\infty \int_{2d_X^{-1}\log n}^\infty e^{-Md_X} \,dM \nonumber \\
	& \leq 2 d_X^{-1} \log n + d_X^{-1}V_{d_X}\nu,
\end{align}
and the same bound holds for $\mathbb{E}\bigl[\bigl\{\min_{j=1, \ldots, n} \log \rho_{(1),j,Y} \bigr\}_-\bigr]$ once we replace $d_X$ with $d_Y$.  By~\eqref{Eq:crudebound2}, \eqref{Eq:max}, \eqref{Eq:min} and  
Stirling's approximation, 
\begin{align}
\label{Eq:Stirling}
\sum_{l=l_n+1}^n &\binom{n}{l}\max_{k=1,\ldots,n-1} \sup_{f \in \mathcal{G}_{d_X,d_Y,\phi}} \mathbb{E}(|\hat{H}_n^{(1)}| \mathbbm{1}_{\{\tau_1 \in S_n^l \}}) \lesssim \log n \sum_{l=l_n+1}^n \binom{n}{l}\mathbb{P}(\tau_1 \in S_n^l)  \nonumber \\
&\leq \frac{\log n}{n!}\binom{n}{l_n+1} (n-l_n-1)! = \frac{\log n}{(l_n+1)!} \leq \frac{\log n}{\sqrt{2 \pi (l_n+1)}} \biggl( \frac{e}{l_n+1} \biggr)^{l_n+1} \rightarrow 0
\end{align}
as $n \rightarrow \infty$.  Moreover, for any $f \in \mathcal{G}_{d_X,d_Y,\phi}$ and $(X,Y) \sim f$, writing $\epsilon = \alpha/(\alpha+2d_X)$,
\begin{align}
\label{Eq:HXBound}
|H(X)| \leq \int_{\mathcal{X}} f_X|\log f_X| &\leq \int_{\mathcal{X}} f_X(x) \log\biggl(\frac{\|f_X\|_\infty}{f_X(x)}\biggr) \, dx + |\log \|f_X\|_\infty| \nonumber \\
&\leq \frac{\|f_X\|_\infty^\epsilon}{\epsilon} \int_{\mathcal{X}} f_X(x)^{1-\epsilon} \, dx + |\log \|f_X\|_\infty| \nonumber \\
	& \leq \frac{\nu^{\epsilon}}{\epsilon} (1+\mu)^{1-\epsilon} \biggl\{ \int_{\mathbb{R}^{d_X}} (1+\|x\|^\alpha)^{-\frac{1-\epsilon}{\epsilon}} \,dx \biggr\}^{\epsilon} \nonumber \\
&\hspace{1.5cm}+ \max\biggl\{\log \nu \, , \, \frac{1}{\alpha}\log\biggl(\frac{V_{d_X}^\alpha\mu^{d_X}(\alpha+d_X)^{\alpha+d_X}}{\alpha^\alpha d_X^{d_X}}\biggr)\biggr\},
\end{align}
where the lower bound on $\|f_X\|_\infty$ is due to the fact that $V_{d_X}^\alpha \mu_\alpha(f_X)^{d_X} \|f_X\|_\infty^\alpha \geq \alpha^\alpha d_X^{d_X}/(\alpha+d_X)^{\alpha+d_X}$ by \citet[][Lemma~10(i)]{BSY2017}.  Combining~\eqref{Eq:HXBound} with the corresponding bound on $H(Y)$, as well as~\eqref{Eq:Stirling}, we deduce that~\eqref{Eq:Step1} holds.

\medskip

\textbf{Step 2:} We observe that by~\eqref{Eq:crudebound2},~\eqref{Eq:max} and~\eqref{Eq:min},
\begin{equation}
\label{Eq:Step2}
\sup_{k \in \{1,\ldots,n-1\}} \sup_{f \in \mathcal{G}_{d_X,d_Y,\phi}} \frac{1}{n!}\sum_{l=0}^{l_n}\binom{n}{l} \sum_{\tau \in S_n^l} \frac{1}{n}\sum_{i=1}^l \mathbb{E}_\tau |\log \xi_i^{(1)}| \lesssim \frac{l_n\log n}{n} \rightarrow 0.
\end{equation}

\medskip

\textbf{Step 3:} We now show that
\begin{equation}
\label{Eq:Step3}
\sup_{k \in \{k_0^*,\ldots,k_1^*\}} \sup_{f \in \mathcal{G}_{d_X,d_Y,\phi}} \frac{1}{n!}\sum_{l=0}^{l_n}\binom{n}{l} \sum_{\tau \in S_n^l}\frac{1}{n}\sum_{i=l+1}^n \mathbb{E}_\tau\bigl(|\log \xi_i^{(1)}|\mathbbm{1}_{\{(A_i^c \setminus (A_i')^c) \cup (A_i')^c\}}\bigr) \rightarrow 0.
\end{equation}
We use an argument that involves covering $\mathbb{R}^d$ by cones; cf. \citet[][Section~20.7]{BiauDevroye2015}.  For $w_0 \in \mathbb{R}^d$, $z \in \mathbb{R}^d \setminus \{0\}$ and $\theta \in [0,\pi]$, define the cone of angle $\theta$ centred at $w_0$ in the direction $z$ by 
\[
	\mathcal{C}(w_0,z,\theta) := w_0 + \bigl\{ w \in \mathbb{R}^d \setminus \{0\}: \cos^{-1}(z^T w/(\|z\|\|w\|)) \leq \theta\bigr\} \cup \{0\}.
\]
By \citet[][Theorem~20.15]{BiauDevroye2015}, there exists a constant $C_{\pi/6} \in \mathbb{N}$ depending only on $d$ such that we may cover $\mathbb{R}^d$ by $C_{\pi/6}$ cones of angle $\pi/6$ centred at $Z_1^{(1)}$.  In each cone, mark the $k$ nearest points to $Z_1^{(1)}$ among $Z_2^{(1)}, \ldots, Z_n^{(1)}$.  Now fix a point $Z_i^{(1)}$ that is not marked and a cone that contains it, and let $Z_{i_1}^{(1)}, \ldots, Z_{i_k}^{(1)}$ be the $k$ marked points in this cone. By \citet[][Lemma~20.5]{BiauDevroye2015} we have, for each $j=1, \ldots, k$, that
\[
	\|Z_i^{(1)}- Z_{i_j}^{(1)} \| < \|Z_i^{(1)} - Z_1^{(1)}\|.
\]
Thus, $Z_1^{(1)}$ is not one of the $k$ nearest neighbours of the unmarked point $Z_i^{(1)}$, and only the marked points, of which there are at most $kC_{\pi/6}$, may have $Z_1^{(1)}$ as one of their $k$ nearest neighbours. This immediately generalises to show that at most $klC_{\pi/6}$ of the points $Z_{l+1}^{(1)}, \ldots, Z_n^{(1)}$ may have any of $Z_1^{(1)}, \ldots, Z_l^{(1)}$ among their $k$ nearest neighbours.  Then, by~\eqref{Eq:crudebound2}, \eqref{Eq:max} and~\eqref{Eq:min}, we have that, uniformly over $k \in \{1,\ldots,n-1\}$,
\begin{equation}
\label{Eq:Z1Fixed}
\sup_{f \in \mathcal{G}_{d_X,d_Y,\phi}} \max_{\tau \in \cup_{l=0}^{l_n} S_n^l} \frac{1}{n} \mathbb{E}_\tau\biggl(\sum_{i=l+1}^n |\log \xi_i^{(1)}| \mathbbm{1}_{(A'_i)^c}\biggr) \lesssim \frac{k l_n}{n} \log n.
\end{equation}
Now, for $x \in \mathcal{X}$ and $r \in [0,\infty)$, let $h_x(r) := \int_{B_x(r)} f_X$, and, for $s \in [0,1)$, let $h_x^{-1}(s) := \inf\{r \geq 0:h_x(r) \geq s\}$.  We also write $\mathrm{B}_{k,n-k}(s) := \frac{\Gamma(n)}{\Gamma(k)\Gamma(n-k)}s^{k-1}(1-s)^{n-k-1}$ for $s \in (0,1)$.  Then, by~(4) and~(13) in \citet{BSY2017}, there exists $C > 0$, depending only on $d_X$ and $\phi$, such that, uniformly for $k \in \{1,\ldots,n-1\}$,
\begin{align}
\label{Eq:rhoX}
&\max_{i=l+1,\ldots,n}\mathbb{E}_\tau\bigl\{|\log \rho_{(k),i,X}|\mathbbm{1}_{\{Z_i^{(1)} \in \mathcal{W}_n^c\}}\bigr\} \nonumber \\
&= \int_{\mathcal{W}_n^c} f_X(x)f_Y(y) \int_0^1 \biggl|\log\biggl(\frac{V_{d_X}(n-1)h_x^{-1}(s)}{e^{\Psi(k)}}\biggr)\biggr| \mathrm{B}_{k,n-k}(s) \, ds \, d\lambda_d(x,y) \nonumber \\
&\lesssim \int_{\mathcal{W}_n^c} f_X(x)f_Y(y) \int_0^1 \bigl\{\log n - \log s - \log(1-s) + \log(1+C\|x\|)\bigr\}\mathrm{B}_{k,n-k}(s) \, ds \, d\lambda_d(x,y).
\end{align}
Now, given $\epsilon \in (0,1)$, by H\"older's inequality, and with $K_\epsilon := \max\{1,(\epsilon\alpha-1)\log 2,C^{\epsilon \alpha}\}/(\epsilon \alpha)$,
\begin{align}
\label{Eq:logterm}
\int_{\mathcal{W}_n^c} &f_X(x)f_Y(y)\log(1+C\|x\|) \, d\lambda_d(x,y) \leq K_\epsilon\int_{\mathcal{W}_n^c} f_X(x)f_Y(y)(1+\|x\|^{\epsilon\alpha}) \, d\lambda_d(x,y) \nonumber \\
&\leq K_\epsilon \biggl\{\int_{\mathcal{W}_n^c} f_X(x)f_Y(y)(1+\|x\|^\alpha) \, d\lambda_d(x,y)\biggr\}^{\epsilon}\biggl\{\int_{\mathcal{W}_n^c} f_X(x)f_Y(y) \, d\lambda_d(x,y)\biggr\}^{1-\epsilon} \nonumber \\
&\leq K_\epsilon(1+\mu)^\epsilon p_n^{1-\epsilon}.
\end{align}
We deduce from~\eqref{Eq:rhoX} and~\eqref{Eq:logterm} that for each $\epsilon \in (0,1)$ and uniformly for $k \in \{1,\ldots,n-1\}$,
\[
\max_{l+1=1,\ldots,n} \sup_{f \in \mathcal{G}_{d_X,d_Y,\phi}} \mathbb{E}_\tau\bigl\{|\log \rho_{(k),i,X}|\mathbbm{1}_{\{Z_i^{(1)} \in \mathcal{W}_n^c\}}\bigr\} \lesssim p_n\log n + K_\epsilon p_n^{1-\epsilon}.
\]
From similar bounds on the corresponding terms with $\rho_{(k),i,X}$ replaced with $\rho_{(n-1),i,X}$ and  $\rho_{(n-1),\tau_1(i),Y}$, we conclude from~\eqref{Eq:crudebound1} that for every $\epsilon \in (0,1)$ and uniformly for $k \in \{1,\ldots,n-1\}$,
\begin{equation}
\label{Eq:Zncomp}
\max_{i=l+1,\ldots,n} \sup_{f \in \mathcal{G}_{d_X,d_Y,\phi}} \mathbb{E}_\tau\bigl\{|\log \xi_i^{(1)}|\mathbbm{1}_{\{Z_i^{(1)} \in \mathcal{W}_n^c\}}\bigr\} \lesssim p_n\log n + K_\epsilon p_n^{1-\epsilon}.
\end{equation}
The claim~\eqref{Eq:Step3} follows from~\eqref{Eq:Z1Fixed} and~\eqref{Eq:Zncomp}.


\medskip

\textbf{Step 4:} Finally, we show that
\begin{equation}
\label{Eq:Step4}
\sup_{k \in \{k_0^*,\ldots,k_1^*\}} \sup_{f \in \mathcal{G}_{d_X,d_Y,\phi}} \frac{1}{n!}\sum_{l=0}^{l_n}\binom{n}{l} \sum_{\tau \in S_n^l}\mathbb{E}_\tau\biggl|\frac{1}{n}\sum_{i=l+1}^n (\log \xi_i^{(1)})\mathbbm{1}_{A_i} - H(X) - H(Y)\biggr| \rightarrow 0.
\end{equation}
To this end, we consider the following further decomposition:
\begin{align}
\label{Eq:SecondDecomp}
\mathbb{E}_\tau\biggl|&\frac{1}{n}\sum_{i=l+1}^n (\log \xi_i^{(1)})\mathbbm{1}_{A_i} - H(X) - H(Y)\biggr| \nonumber \\
&\leq \frac{1}{n}\sum_{i=l+1}^n \mathbb{E}_\tau\bigl|\bigl(\log \xi_i^{(1)}f_Xf_Y(Z_i^{(1)})\bigr)\mathbbm{1}_{A_i}\bigr| + \frac{1}{n}\mathbb{E}_\tau\sum_{i=l+1}^n \bigl|\bigl(\log f_Xf_Y(Z_i^{(1)})\bigr)\mathbbm{1}_{\{(A_i^c \setminus (A_i')^c) \cup (A_i')^c\}}\bigr| \nonumber \\
&+\mathbb{E}_\tau \biggl|\frac{1}{n}\sum_{i=l+1}^n \log f_Xf_Y(Z_i^{(1)}) - \frac{n-l}{n}\bigl\{H(X) + H(Y)\bigr\}\biggr| + \frac{l}{n}\bigl|H(X) + H(Y)\bigr|.
\end{align}
We deal first with the second term in~\eqref{Eq:SecondDecomp}.  Now, by the arguments in Step~3,
\begin{align}
\label{Eq:fXfYDecomp}
\frac{1}{n}\mathbb{E}_\tau\sum_{i=l+1}^n &\bigl|\bigl(\log f_Xf_Y(Z_i^{(1)})\bigr)\mathbbm{1}_{(A_i')^c}\bigr| \leq \frac{C_{\pi/6}kl}{n}\mathbb{E}_\tau\biggl(\max_{i=l+1,\ldots,n} |\log f_Xf_Y(Z_i^{(1)})|\biggr) \nonumber \\
&\leq \frac{C_{\pi/6}k_1^*l}{n}\biggl\{\mathbb{E}\biggl(\max_{i=l+1,\ldots,n} |\log f_X(X_i)|\biggr) + \mathbb{E}\biggl(\max_{i=l+1,\ldots,n} |\log f_Y(Y_i)|\biggr)\biggr\}.
\end{align}
Now, for any $s > 0$, by H\"older's inequality,
\begin{align*}
\mathbb{P}&\bigl\{f_X(X_1) \leq s\bigr\} \leq s^{\frac{\alpha}{2(\alpha+d_X)}} \int_{x:f(x) \leq s} f_X(x)^{1 -\frac{\alpha}{2(\alpha+d_X)}} \, dx \\
&\leq  s^{\frac{\alpha}{2(\alpha+d_X)}}(1+\mu)^{1 - \frac{\alpha}{2(\alpha+d_X)}} \biggl\{\int_{\mathbb{R}^{d_X}} \frac{1}{(1+\|x\|^\alpha)^{(\alpha+2d_X)/\alpha}} \, dx\biggr\}^{\frac{\alpha}{2(\alpha+d_X)}} =: \tilde{K}_{d_X,\phi}s^{\frac{\alpha}{2(\alpha+d_X)}}.
\end{align*}
Writing $t_* := -\frac{2(\alpha+d_X)}{\alpha} \log (n-l)$, we deduce that for $n \geq 3$,
\begin{align}
\label{Eq:ExpLog}
\mathbb{E}\biggl(\max_{i=l+1,\ldots,n} |\log f_X(X_i)|\biggr) &\leq \log \nu - \mathbb{E}\biggl(\min_{i=l+1,\ldots,n} \log f_X(X_i)\biggr) \nonumber \\
&\leq \log \nu + (n-l)\tilde{K}_{d_X,\phi} \int_{-\infty}^{t_*} e^{\frac{\alpha t}{2(\alpha+d_X)}} \, dt + \frac{2(\alpha+d_X)}{\alpha} \log n \nonumber \\
&= \log \nu + \frac{2(\alpha+d_X)}{\alpha}\tilde{K}_{d_X,\phi} + \frac{2(\alpha+d_X)}{\alpha} \log n.
\end{align}
Combining~\eqref{Eq:ExpLog} with the corresponding bound on $\mathbb{E}\bigl(\max_{i=l+1,\ldots,n} |\log f_Y(Y_i)|\bigr)$, which is obtained in the same way, we conclude from~\eqref{Eq:fXfYDecomp} that
\begin{equation}
\label{Eq:Step4a1}
\sup_{k \in \{k_0^*,\ldots,k_1^*\}} \sup_{f \in \mathcal{G}_{d_X,d_Y,\phi}}\frac{1}{n}\mathbb{E}_\tau\sum_{i=l+1}^n \bigl|\bigl(\log f_Xf_Y(Z_i^{(1)})\bigr)\mathbbm{1}_{(A_i')^c}\bigr| \lesssim \frac{k_1^*l_n \log n}{n} \rightarrow 0.
\end{equation}
Moreover, given any $\epsilon \in \bigl(0,(\alpha+2d)/\alpha\bigr)$, set $\epsilon' := \alpha \epsilon/(\alpha+2d)$.  Then by two applications of H\"older's inequality,
\begin{align}
\label{Eq:Holderagain}
\frac{1}{n}&\mathbb{E}_\tau\sum_{i=l+1}^n \bigl|\bigl(\log f_Xf_Y(Z_i^{(1)})\bigr)\mathbbm{1}_{\{Z_i^{(1)} \in \mathcal{W}_n^c\}}\bigr| \nonumber \\
&\leq \int_{\mathcal{W}_n^c} f_Xf_Y(z)|\log f_Xf_Y(z)| \, dz \leq \nu p_n + \frac{\nu^{\epsilon'}}{\epsilon'} \int_{\mathcal{W}_n^c} f_Xf_Y(z)^{1-\epsilon'} \, dz \nonumber \\
&\leq \nu p_n + \frac{\nu^{\epsilon'}}{\epsilon'} p_n^{1-\epsilon} \biggl(\int_{\mathcal{W}_n^c} f_Xf_Y(z)^{2d/(\alpha+2d)} \, dz\biggr)^{\epsilon} \nonumber \\
&\leq \nu p_n + \frac{\nu^{\epsilon'}}{\epsilon'}p_n^{1-\epsilon}\biggl[\bigl\{1 + \max(1,2^{\alpha-1})\mu\bigr\}^{2d/(\alpha+2d)} \biggl\{\int_{\mathbb{R}^d} \frac{1}{(1+\|z\|^\alpha)^{2d/\alpha}} \, dz\biggr\}^{\alpha/(\alpha+2d)}\biggr]^\epsilon \lesssim p_n^{1-\epsilon}.
\end{align}
From~\eqref{Eq:Step4a1} and~\eqref{Eq:Holderagain}, we find that 
\begin{equation}
\label{Eq:Step4a}
\sup_{k \in \{k_0^*,\ldots,k_1^*\}} \sup_{f \in \mathcal{G}_{d_X,d_Y,\phi}} \max_{\tau \in \cup_{l=0}^{l_n} S_n^l} \frac{1}{n}\mathbb{E}_\tau\sum_{i=l+1}^n \bigl|\bigl(\log f_Xf_Y(Z_i^{(1)})\bigr)\mathbbm{1}_{\{(A_i^c \setminus (A_i')^c) \cup (A_i')^c\}}\bigr| \rightarrow 0,
\end{equation}
which takes care of the second term in~\eqref{Eq:SecondDecomp}.

For the third term in~\eqref{Eq:SecondDecomp}, since $Z_i^{(1)}$ and $\bigl\{Z_i^{(j)}:j \notin \{i,\tau(i),\tau^{-1}(i)\}\bigr\}$ are independent, we have
\begin{align*}
\mathbb{E}_\tau \biggl|\frac{1}{n}\sum_{i=l+1}^n &\log f_Xf_Y(Z_i^{(1)}) - \frac{n-l}{n}\bigl\{H(X) + H(Y)\bigr\}\biggr| \leq \mathrm{Var}_\tau^{1/2}\biggl(\frac{1}{n}\sum_{i=l+1}^n \log f_Xf_Y(Z_i^{(1)})\biggr) \nonumber \\
&\leq \frac{3^{1/2}}{n^{1/2}}\mathrm{Var}_\tau^{1/2}\log f_Xf_Y(Z_n^{(1)}) \leq \frac{3^{1/2}}{n^{1/2}}\biggl[\bigl\{\mathbb{E} \log^2 f_X(X_1)\bigr\}^{1/2} + \bigl\{\mathbb{E} \log^2 f_Y(Y_1)\bigr\}^{1/2}\biggr].
\end{align*}
By a very similar argument to that in~\eqref{Eq:HXBound}, we deduce that
\begin{equation}
\label{Eq:Step4b}
\sup_{k \in \{k_0^*,\ldots,k_1^*\}} \sup_{f \in \mathcal{G}_{d_X,d_Y,\phi}} \max_{\tau \in \cup_{l=0}^{l_n} S_n^l}\mathbb{E}_\tau \biggl|\frac{1}{n}\sum_{i=l+1}^n \log f_Xf_Y(Z_i^{(1)}) - \frac{n-l}{n}\bigl\{H(X) + H(Y)\bigr\}\biggr| \rightarrow 0,
\end{equation}
which handles the third term in~\eqref{Eq:SecondDecomp}.

For the fourth term in~\eqref{Eq:SecondDecomp}, we simply observe that, by~\eqref{Eq:HXBound},
\begin{equation}
\label{Eq:Step4c}
\sup_{k \in \{k_0^*,\ldots,k_1^*\}} \sup_{f \in \mathcal{G}_{d_X,d_Y,\phi}} \max_{\tau \in \cup_{l=0}^{l_n} S_n^l} \frac{l_n}{n}|H(X) + H(Y)| \rightarrow 0.
\end{equation}

Finally, we turn to the first term in~\eqref{Eq:SecondDecomp}.  
%
Recalling the definition of $r_{z,f_Xf_Y,\delta}$ in~\eqref{Eq:r}, we define the random variables
\[
	N_{i,\delta} := \sum_{\overset{l+1 \leq j \leq n}{j \neq i}} \mathbbm{1}_{\bigl\{ \|Z_j^{(1)}-Z_i^{(1)}\| \leq r_{Z_i^{(1)}, f_Xf_Y,\delta} \bigr\}}.
\]
We study the bias and the variance of $N_{i,\delta}$.  For $\delta \in (0,2]$ and $z \in \mathcal{W}_n$, we have that
\begin{align}
\label{Eq:ExpLower}
	\mathbb{E}_\tau(N_{i,\delta} |Z_i^{(1)}=z)& \geq (n-l_n-3) \int_{B_z(r_{z,f_Xf_Y,\delta})} f_Xf_Y(w) \,dw \nonumber \\
	&\geq (n-l_n-3)V_dr_{z,f_Xf_Y,\delta}^df_Xf_Y(z)(1-c_n) \nonumber \\
&= \delta(1-c_n)e^{\Psi(k)}\biggl(\frac{n-l_n-3}{n-1}\biggr).
\end{align}
Similarly, also, for $\delta \in (0,2]$ and $z \in \mathcal{W}_n$,
\begin{align}
\label{Eq:ExpUpper}
	\mathbb{E}_\tau(N_{i,\delta} |Z_i^{(1)}=z) &\leq 2+ (n-l_n-3) \int_{B_z(r_{z,f_Xf_Y,\delta})} f_Xf_Y(w) \,dw \nonumber \\
&\leq 2 + \delta(1+c_n)e^{\Psi(k)}\biggl(\frac{n-l_n-3}{n-1}\biggr).
\end{align}
Note that if $j_2 \notin \{ j_1, \tau(j_1), \tau^{-1}(j_1) \}$ then for $\delta \in (0,2]$ and $z \in \mathcal{W}_n$,
\[
	\mathrm{Cov}_\tau\Bigl( \mathbbm{1}_{\bigl\{ \|Z_{j_1}^{(1)}-z\| \leq r_{z,f_Xf_Y,\delta}\bigr\}}, \mathbbm{1}_{\bigl\{ \|Z_{j_2}^{(1)}-z\| \leq r_{z,f_Xf_Y,\delta}\bigr\}} |Z_i^{(1)} = z\Bigr)=0.
\]
Also, for $j \notin \{ i, \tau(i), \tau^{-1}(i) \}$ we have
\[
	\mathrm{Var}_\tau \Bigl(\mathbbm{1}_{\bigl\{ \|Z_j^{(1)}-z\| \leq r_{z,f_Xf_Y,\delta}\bigr\}} | Z_i^{(1)} =z \Bigr) \leq \int_{B_z(r_{z,f_Xf_Y,\delta})} f_Xf_Y(w) \,dw \leq \frac{\delta(1+c_n)e^{\Psi(k)}}{n-1}.
\]
When $j \in \{ i, \tau(i), \tau^{-1}(i)\}$ we simply bound the variance above by $1/4$ so that, by Cauchy--Schwarz, when $n-1 \geq 2(1+c_ne^{\Psi(k)})$,
\begin{align}
\label{Eq:VarUpper}
	\mathrm{Var}_\tau (N_{i,\delta} |Z_i^{(1)}=z) \leq 3(n-l_n-3) \frac{\delta(1+c_n)e^{\Psi(k)}}{n-1} + 1.
\end{align}
Letting $\epsilon_{n,k} := \max(k^{-1/2}\log n,c_n^{1/2})$, there exists $n_0 \in \mathbb{N}$, depending only on $d_X, d_Y$, $\phi$ and $k_0^*$, such that for $n \geq n_0$ and all $k \in \{k_0^*,\ldots,k_1^*\}$, we have $\epsilon_{n,k} \leq 1/2$ and
\[
\min\biggl\{(1+\epsilon_{n,k})(1-c_n)e^{\Psi(k)}\Bigl(\frac{n\!-\!l_n\!-\!3}{n-1}\Bigr) - k \, , \, k - 2 - (1-\epsilon_{n,k})(1+c_n)e^{\Psi(k)}\Bigl(\frac{n\!-\!l_n\!-\!3}{n-1}\Bigr)\biggr\} \geq \frac{k\epsilon_{n,k}}{2}.
\]
We deduce from~\eqref{Eq:VarUpper} and~\eqref{Eq:ExpLower} that for $n \geq n_0$,
\begin{align}
\label{Eq:CIP}
	\mathbb{P}_\tau\bigl( \{\xi_i^{(1)} &f_Xf_Y(Z_i^{(1)})-1\} \mathbbm{1}_{A_i} \geq \epsilon_{n,k}\bigr) = \mathbb{P}_\tau\bigl(A_i \cap \{\rho_{(k),i,(1)} \geq r_{Z_i^{(1)}, f_Xf_Y,1+\epsilon_{n,k}}\}\bigr) \nonumber \\
&\leq \mathbb{P}_\tau (N_{i,1+\epsilon_{n,k}} \leq k, Z_i^{(1)} \in \mathcal{W}_n) = \int_{\mathcal{W}_n}f_Xf_Y(z) \mathbb{P}_\tau(N_{i,1+\epsilon_{n,k}} \leq k | Z_i^{(1)}=z) \,dz \nonumber \\
	&\leq \int_{\mathcal{W}_n} f_Xf_Y(z) \frac{\Var_\tau(N_{i,1+\epsilon_{n,k}} |Z_i^{(1)}=z)}{\{\mathbb{E}_\tau(N_{i,1+\epsilon_{n,k}} |Z_i^{(1)}=z) -k \}^2} \,dz \nonumber \\
&\leq \frac{3(n-l_n-3) \frac{(1+\epsilon_{n,k})(1+c_n)e^{\Psi(k)}}{n-1} + 1}{\bigl\{(1+\epsilon_{n,k})(1-c_n)e^{\Psi(k)}\bigl(\frac{n-l_n-3}{n-1}\bigr) - k \bigr\}^2}.
\end{align}
Using very similar arguments, but using~\eqref{Eq:ExpUpper} in place of~\eqref{Eq:ExpLower}, we also have that for $n \geq n_0$,
\begin{equation}
\label{Eq:CIP2}
\mathbb{P}_\tau\bigl( \{\xi_i^{(1)} f_Xf_Y(Z_i^{(1)})-1\} \mathbbm{1}_{A_i} \leq -\epsilon_{n,k}\bigr) \leq \frac{3(n-l_n-3) \frac{(1-\epsilon_{n,k})(1+c_n)e^{\Psi(k)}}{n-1} + 1}{\bigl\{k - 2 - (1-\epsilon_{n,k})(1+c_n)e^{\Psi(k)}\bigl(\frac{n-l_n-3}{n-1}\bigr)\bigr\}^2}.
\end{equation}
Now, by Markov's inequality, for $k \geq 3$, $i \in \{l+1,\ldots,n\}$, $\tau \in S_n^l$ and $\delta \in (0,2]$,
\begin{align}
\label{Eq:Smalldelta}
	\mathbb{P}_\tau\bigl(\bigl\{\xi_i^{(1)}f_Xf_Y(Z_i^{(1)}) \leq \delta\bigr\} \cap A_i\bigr) &= \mathbb{P}_\tau (N_{i,\delta} \geq k, Z_i^{(1)} \in \mathcal{W}_n) \nonumber \\
&\leq \frac{n-l-3}{k-2} \int_{\mathcal{W}_n} f_Xf_Y(z) \int_{B_z(r_{z,f_Xf_Y,\delta})} f_Xf_Y(w) \,dw \, dz \nonumber \\ 
&\leq \frac{n-l-3}{k-2}(1+c_n) \frac{\delta e^{\Psi(k)}}{n-1}.
\end{align}
Moreover, for any $i \in \{l+1,\ldots,n\}$, $\tau \in S_n^l$ and $t > 0$, by two applications of Markov's inequality,
\begin{align}
\label{Eq:ttail}
	\mathbb{P}_\tau\bigl(\bigl\{\xi_i^{(1)} f_Xf_Y(Z_i^{(1)}) \geq t\bigr\} \cap A_i\bigr) &\leq \mathbb{P}_\tau(N_{i,t} \leq k, Z_i^{(1)} \in \mathcal{W}_n) \nonumber \\
	&\leq \frac{n-l-3}{n-l-k-3} \int_{\mathcal{W}_n} f_Xf_Y(z)\int_{B_z(r_{z,f_Xf_Y,t})^c} f_Xf_Y(w) \,dw \,dz \nonumber \\
	&\leq \frac{n-l_n-3}{n-l_n-k-3} \max\{1,2^{\alpha-1}\}  \int_{\mathcal{W}_n }  f_Xf_Y(z) \frac{\mu +\|z\|^\alpha}{r_{z,f_Xf_Y,t}^\alpha} \,dz \nonumber \\
&\leq \frac{n-l_n-3}{n-l_n-k-3} \max\{1,2^{\alpha-1}\} 2\mu \nu^{2\alpha/d}V_d^{\alpha/d}\Bigl(\frac{n-1}{e^{\Psi(k)}}\Bigr)^{\alpha/d}t^{-\alpha/d}.
\end{align}
Writing $s_n := \log^2\bigl((n/k)\log^{2d/\alpha} n\bigr)$, we deduce from~\eqref{Eq:CIP},~\eqref{Eq:CIP2},~\eqref{Eq:Smalldelta} and~\eqref{Eq:ttail} that for $n \geq n_0$,
\begin{align*}
	&\mathbb{E}_\tau \bigl\{\log^2 \bigl( \xi_i^{(1)} f_Xf_Y(Z_i^{(1)}) \bigr) \mathbbm{1}_{A_i}\bigr\} = \int_0^\infty \mathbb{P}_\tau\bigr(\bigl\{\xi_i^{(1)} f_Xf_Y(Z_i^{(1)}) \leq e^{-s^{1/2}}\bigr\} \cap A_i\bigr) \, ds \\
&+ \log^2(1+\epsilon_{n,k}) + \biggl(\int_{\log^2(1+\epsilon_{n,k})}^{s_n} + \int_{s_n}^\infty\biggr) \mathbb{P}_\tau\bigl(\bigl\{\xi_i^{(1)} f_Xf_Y(Z_i^{(1)}) > e^{s^{1/2}}\bigr\} \cap A_i\bigr) \, ds \\
&\leq \frac{n-l-3}{k-2}(1+c_n) \frac{e^{\Psi(k)}}{n-1}\int_0^\infty e^{-s^{1/2}} \, ds + \log^2(1+\epsilon_{n,k}) \\
&\hspace{2cm}+ s_n\frac{12(n-l_n-3) \frac{(1+\epsilon_{n,k})(1+c_n)e^{\Psi(k)}}{n-1} + 1}{k^2\epsilon_{n,k}^2} \\
&\hspace{2cm}+ \frac{n-l_n-3}{n-l_n-k-3} \max\{1,2^{\alpha-1}\} 2\mu \nu^{2\alpha/d}V_d^{\alpha/d}\Bigl(\frac{n-1}{e^{\Psi(k)}}\Bigr)^{\alpha/d}\int_{s_n}^\infty e^{-s^{1/2}\alpha/d} \, ds.
\end{align*}
We conclude that
\begin{equation}
\label{Eq:FiveSups}
\sup_{n \in \mathbb{N}} \sup_{k \in \{k_0^*,\ldots,k_1^*\}} \sup_{f \in \mathcal{G}_{d_X,d_Y,\phi}} \max_{\tau \in \cup_{l=1}^{l_n} S_n^l} \max_{i=l+1,\ldots,n} \mathbb{E}_\tau \bigl\{\log^2 \bigl( \xi_i^{(1)} f_Xf_Y(Z_i^{(1)}) \bigr) \mathbbm{1}_{A_i}\bigr\} < \infty.
\end{equation}
Finally, then, from~\eqref{Eq:CIP},~\eqref{Eq:CIP2} and~\eqref{Eq:FiveSups},
\begin{align*}
&\frac{1}{n} \sum_{i=l+1}^n \mathbb{E}_\tau \bigl|\log \bigl(\xi_i^{(1)}f_Xf_Y(Z_i^{(1)})\bigr) \mathbbm{1}_{A_i}\bigr| \leq \max_{i=l+1,\ldots,n}\mathbb{E}_\tau\bigl| \log \bigl(\xi_i^{(1)}f_Xf_Y(Z_i^{(1)})\bigr) \mathbbm{1}_{A_i}\bigr| \\ 
&\leq 2\epsilon_{n,k} + \max_{i=l+1,\ldots,n} \Bigl[\mathbb{P}_\tau\bigl\{\bigl| \log \bigl(\xi_i^{(1)}f_Xf_Y(Z_i^{(1)})\bigr)\bigr|\mathbbm{1}_{A_i} \geq 2\epsilon_{n,k}\bigr\} \mathbb{E}_\tau\bigl\{\log^2 \bigl(\xi_i^{(1)}f_Xf_Y(Z_i^{(1)})\bigr) \mathbbm{1}_{A_i}\bigr\}\Bigr]^{1/2} \\
&\leq 2\epsilon_{n,k} + \max_{i=l+1,\ldots,n} \Bigl[\mathbb{P}_\tau\bigl\{\bigl|\xi_i^{(1)}f_Xf_Y(Z_i^{(1)}) - 1 \bigr|\mathbbm{1}_{A_i} \geq \epsilon_{n,k}\bigr\} \mathbb{E}_\tau\bigl\{\log^2 \bigl(\xi_i^{(1)}f_Xf_Y(Z_i^{(1)})\bigr) \mathbbm{1}_{A_i}\bigr\}\Bigr]^{1/2},
\end{align*}
so
\begin{equation}
\label{Eq:Step4d}
\sup_{k \in \{k_0^*,\ldots,k_1^*\}} \sup_{f \in \mathcal{G}_{d_X,d_Y,\phi}} \max_{\tau \in \cup_{l=1}^{l_n} S_n^l}\frac{1}{n} \sum_{i=l+1}^n \mathbb{E}_\tau \bigl|\log \bigl(\xi_i^{(1)}f_Xf_Y(Z_i^{(1)})\bigr) \mathbbm{1}_{A_i}\bigr| \rightarrow 0.
\end{equation}
as $n \rightarrow \infty$.  The proof of claim~\eqref{Eq:Step4} follows from~\eqref{Eq:SecondDecomp}, together with~\eqref{Eq:Step4a},~\eqref{Eq:Step4b},~\eqref{Eq:Step4c} and~\eqref{Eq:Step4d}.  

\medskip

The final result therefore follows from~\eqref{Eq:MainDecomp},~\eqref{Eq:Step1},~\eqref{Eq:Step2},~\eqref{Eq:Step3} and~\eqref{Eq:Step4}.
\end{proof}

\subsection{Proof from Section~\ref{Sec:Regression}}

\begin{proof}[Proof of Theorem~\ref{Thm:MINTregressionpower}]
We partition $X = (X_{(1)}^T \ X_{(2)}^T)^T \in \mathbb{R}^{(m+m) \times p}$ and, writing $\eta^{(0)} := \epsilon/\sigma$, let $\eta^{(b)} = \bigl((\eta_{(1)}^{(b)})^T,(\eta_{(2)}^{(b)})^T\bigr)^T \in \mathbb{R}^{m+m}$, for $b=0,\ldots,B$.  Further, let $\hat{\gamma} := (X_{(2)}^TX_{(2)})^{-1}X_{(2)}^T\eta_{(2)}^{(0)}$, $\hat{\gamma}^{(1)} := (X_{(2)}^TX_{(2)})^{-1}X_{(2)}^T\eta_{(2)}^{(1)}$ and $\hat{s} := \hat{\sigma}_{(2)}/\sigma$.  Now define the events 
\[
	A_{r_0,s_0}:=\Bigl\{ \max(\|\hat{\gamma}\|,\|\hat{\gamma}^{(1)}\|) \leq r_0 \, , \, \hat{s} \in [s_0,1/s_0] \, , \, \hat{s}^{(1)} \in [s_0,1/s_0] \Bigr\}
\]
and $A_0 := \{\lambda_\text{min}(n^{-1} X^TX) > \frac{1}{2} \lambda_\text{min}(\Sigma)\}$.  Now
\begin{equation}
\label{Eq:BreveDecomp}
	\mathbb{P}(\breve{I}_n^{(0)} \leq \breve{I}_n^{(1)}) \leq \mathbb{P}\bigl(\{\breve{I}_n^{(0)} \leq \breve{I}_n^{(1)}\} \cap A_{r_0,s_0} \cap A_0\bigr) + \mathbb{P}(A_0^c) + \mathbb{P}(A_0 \cap A_{r_0,s_0}^c).
\end{equation}
For the first term in~\eqref{Eq:BreveDecomp}, define functions $h, h^{(1)}:\mathbb{R}^p \rightarrow \mathbb{R}$ by 
\[
	h(b):=H(\eta_1 - X_1^Tb) \quad \text{and} \quad h^{(1)}(b):=H(\eta_1^{(1)}-X_1^Tb).
\]
Writing $R_1$ and $R_2$ for remainder terms to be bounded below
, we have
\begin{align}
\label{Eq:R1R2}
	\breve{I}_n^{(0)} - \breve{I}_n^{(1)} &= \hat{H}_m(\hat{\eta}_{1,(1)}, \ldots, \hat{\eta}_{m,(1)}) - \hat{H}_m \bigl( (X_1, \hat{\eta}_{1,(1)}), \ldots, (X_m, \hat{\eta}_{m,(1)}) \bigr) \nonumber \\
&\hspace{2cm}- \hat{H}_m(\hat{\eta}_{1,(1)}^{(1)}, \ldots, \hat{\eta}_{m,(1)}^{(1)}) + \hat{H}_m \bigl( (X_1, \hat{\eta}_{1,(1)}^{(1)}), \ldots, (X_m, \hat{\eta}_{m,(1)}^{(1)}) \bigr) \nonumber \\
&= I(X_1;\epsilon_1) + h( \hat{\gamma}) - h(0) - h^{(1)}(\hat{\gamma}^{(1)}) + h^{(1)}(0) + R_1 \nonumber \\
&= I(X_1;\epsilon_1) + R_1 + R_2.
\end{align}
To bound $R_1$ on the event $A_{r_0,s_0}$, we first observe that for fixed $\gamma,\gamma^{(1)} \in \mathbb{R}^p$ and $s,s^{(1)} > 0$,
\begin{align}
\label{Eq:4Hs}
H\biggl(\frac{\eta_1 - X_1^T\gamma}{s}\biggr) &- H\biggl(X_1,\frac{\eta_1 - X_1^T\gamma}{s}\biggr) - H\biggl(\frac{\eta_1^{(1)} - X_1^T\gamma^{(1)}}{s^{(1)}}\biggr) + H\biggl(X_1,\frac{\eta_1^{(1)} - X_1^T\gamma^{(1)}}{s^{(1)}}\biggr) \nonumber \\
&= H\biggl(\frac{\eta_1 - X_1^T\gamma}{s}\biggr) - H\Bigl(\frac{\eta_1}{s}\Bigm| X_1\Bigr) - H\biggl(\frac{\eta_1^{(1)} - X_1^T\gamma^{(1)}}{s^{(1)}}\biggr) + H\Bigl(\frac{\eta_1^{(1)}}{s^{(1)}}\Bigm| X_1\Bigr) \nonumber \\
&= H(\eta_1) - H(\eta_1|X_1) + h(\gamma) - h(0) - h^{(1)}(\gamma^{(1)}) + h^{(1)}(0) \nonumber \\
&= I(X_1;\epsilon_1) + h(\gamma) - h(0) - h^{(1)}(\gamma^{(1)}) + h^{(1)}(0).
\end{align}
Now, for a density $g$ on $\mathbb{R}^d$, define the variance functional
\[
v(g) := \int_{\mathbb{R}^d} g(x)\{\log g(x) + H(g)\}^2 \, dx.
\]
If $(\hat{\gamma},\hat{\gamma}^{(1)},\hat{s},\hat{s}^{(1)})^T$ are such that $A_{r_0,s_0}$ holds, then conditional on $(\hat{\gamma},\hat{\gamma}^{(1)},\hat{s},\hat{s}^{(1)})$, we have $f_{\hat{\eta}}^{\hat{\gamma},\hat{s}} \in \mathcal{F}_{1,\theta_1}$, $f_{X,\hat{\eta}}^{\hat{\gamma},\hat{s}} \in \mathcal{F}_{p+1,\theta_2}$, $f_{\hat{\eta}^{(1)}}^{\hat{\gamma}^{(1)}, \hat{s}^{(1)}} \in \mathcal{F}_{1,\theta_1}$ and $f_{X,\hat{\eta}^{(1)}}^{\hat{\gamma}^{(1)},\hat{s}^{(1)}} \in \mathcal{F}_{p+1,\theta_2}$.  It follows by~\eqref{Eq:4Hs}, \citet[][Theorem~1 and Lemma~11(i)]{BSY2017} that
\begin{align}
\label{Eq:R1Bound}
\limsup_{n \rightarrow \infty} & \ n^{1/2} \sup_{\substack{k_\eta \in \{k_0^*,\ldots,k_\eta^*\}\\k \in \{k_0^*,\ldots,k^*\}}} \sup_{f \in \mathcal{F}_{p+1,\omega}^*} \mathbb{E}_f(|R_1|\mathbbm{1}_{A_{r_0,s_0}}) \nonumber \\
&\leq \limsup_{n \rightarrow \infty} \ n^{1/2} \sup_{\substack{k_\eta \in \{k_0^*,\ldots,k_\eta^*\}\\k \in \{k_0^*,\ldots,k^*\}}} \sup_{f \in \mathcal{F}_{p+1,\omega}^*} \bigl\{\mathbb{E}(R_1^2\mathbbm{1}_{A_{r_0,s_0}})\bigr\}^{1/2} \nonumber \\
&\leq 4\sup_{g \in \mathcal{F}_{1,\theta_1}} v(g) + 4\sup_{g \in \mathcal{F}_{p+1,\theta_2}} v(g) < \infty.
\end{align}
Now we turn to $R_2$, and study the continuity of the functions $h$ and $h^{(1)}$, following the approach taken in Proposition~1 of \citet{Polyanskiy16}.  Write $a_1 \in \mathcal{A}$ for the fifth component of $\theta_1$, and for $x \in \mathbb{R}$ with $f_\eta(x) > 0$, let 
\[
r_{a_1}(x) := \frac{1}{8a_1\bigl(f_\eta(x)\bigr)}.
\]
Then, for $|y-x| \leq r_{a_1}(x)$ we have by \citet[][Lemma~12]{BSY2017} that
\[
	|f_{\eta}(y) - f_{\eta}(x) | \leq \frac{15}{7} f_{\eta}(x)a_1(f_{\eta}(x)) |y-x| \leq \frac{15}{56} f_{\eta}(x).
\]
Hence
\[
	|\log f_{\eta}(y) - \log f_{\eta}(x) | \leq \frac{120}{41} a_1(f_{\eta}(x)) |y-x|.
\]
When $|y-x| > r_{a_1}(x)$ we may simply write
\[
	|\log f_{\eta}(y) - \log f_{\eta}(x) | \leq 8\{ | \log  f_{\eta}(y) | + | \log  f_{\eta}(x) | \}  a_1( f_{\eta}(x)) |y-x|.
\]
Combining these two equations we now have that, for any $x,y$ such that $f_{\eta}(x) >0$, 
\[
	|\log f_{\eta}(y) - \log f_{\eta}(x) | \leq 8\{1+ | \log f_{\eta}(y) | + | \log  f_{\eta}(x) | \} a_1( f_{\eta}(x)) |y-x|.
\]
By an application of the generalised H\"older inequality and Cauchy--Schwarz, we conclude that
\begin{align}
\label{Eq:GenHolder}
	\mathbb{E} &\biggl| \log \frac{ f_{\eta}(\eta_1 - X_1^T\gamma)}{ f_{\eta}(\eta_1)} \biggr| \nonumber \\
&\leq 8 \bigl\{\mathbb{E}a_1^2(f_\eta(\eta_1))\bigr\}^{1/2}\bigl[\mathbb{E}\bigl\{\bigl(1+ | \log f_{\eta}(\eta_1-X_1^T\gamma) | +  | \log f_{\eta}(\eta_1) |\bigr)^2\bigr\}\bigr]^{1/2}\bigl\{\mathbb{E}(|X_1^T\gamma|^4\bigr)\bigr\}^{1/4} \nonumber \\
	& \leq 16\|\gamma\|\bigl\{\mathbb{E}a_1^2(f_\eta(\eta_1))\bigr\}^{1/2}\bigl[\mathbb{E}\bigl\{1+ \log^2 f_{\eta}(\eta_1-X_1^T\gamma) +  \log^2 f_{\eta}(\eta_1)\bigr\}\bigr]^{1/2} \bigl\{\mathbb{E}(\|X_1\|^4)\bigr\}^{1/4}.
\end{align}
We also obtain a similar bound on the quantity $\mathbb{E} \Bigl| \log \frac{ f_{\hat{\eta}}^{\gamma,1}(\eta_1 - X_1^T\gamma)}{ f_{\hat{\eta}}^{\gamma,1}(\eta_1)} \Bigr|$ when $\|\gamma\| \leq r_0$.  Moreover, for any random vectors $U,V$ with densities $f_U,f_V$ on $\mathbb{R}^d$ satisfying $H(U),H(V) \in \mathbb{R}$, and writing $\mathcal{U} := \{f_U > 0\}$, $\mathcal{V} := \{f_V > 0\}$, we have by the non-negativity of Kullback--Leibler divergence that
\begin{align*}
H(U) - H(V) = \int_{\mathcal{V}} f_V \log f_V - \int_{\mathcal{U}} f_U \log f_V - \int_{\mathcal{U}} f_U \log \frac{f_U}{f_V} \leq \mathbb{E}\log \frac{f_V(V)}{f_V(U)}.
\end{align*}
Combining this with the corresponding bound for $H(V) - H(U)$, we obtain that
\[
|H(U) - H(V)| \leq \max\biggl\{\mathbb{E}\Bigl|\log \frac{f_U(U)}{f_U(V)}\Bigr| \, , \, \mathbb{E}\Bigl|\log \frac{f_V(V)}{f_V(U)}\Bigr|\biggr\}.
\]
Since $f_{\eta},f_{\hat{\eta}}^{\gamma,1} \in \mathcal{F}_{1,\theta_1}$ when $\|\gamma\| \leq r_0$, we may apply~\eqref{Eq:GenHolder}, the second part of \citet[][Proposition~9]{BSY2017},~\eqref{Eq:ii,1},~\eqref{Eq:ii,2} and the fact that $\alpha_2 \geq 4$ to deduce that 
\begin{align}
\label{Eq:hgamma}
	\sup_{f \in \mathcal{F}_{p+1,\omega}^*} &\sup_{\gamma \in B_0^\circ(r_0)} \frac{|h(\gamma)-h(0)|}{\|\gamma\|}  \nonumber \\
&\leq \sup_{f \in \mathcal{F}_{p+1,\omega}^*} \sup_{\gamma \in B_0^\circ(r_0)} \frac{1}{\|\gamma\|} \max \biggl\{ \mathbb{E} \biggl| \log \frac{ f_{\eta}(\eta_1 - X_1^T\gamma)}{ f_{\eta}(\eta_1)} \biggr| \, , \, \mathbb{E} \biggl| \log \frac{ f_{\hat{\eta}}^{\gamma,1}(\eta_1 - X_1^T\gamma)}{ f_{\hat{\eta}}^{\gamma,1}(\eta_1)} \biggr| \biggr\} < \infty.
\end{align}
Similarly, 
\begin{equation}
\label{Eq:hgamma1}
\sup_{f \in \mathcal{F}_{p+1,\omega}^*} \sup_{\gamma \in B_0^\circ(r_0)} \frac{|h^{(1)}(\gamma)-h^{(1)}(0)|}{\|\gamma\|} < \infty.
\end{equation}
We now study the convergence of $\hat{\gamma}$ and $\hat{\gamma}^{(1)}$ to $0$. By definition, the unique minimiser of the function
\[
	R(\beta):= \mathbb{E}\{(Y_1-X_1^T\beta)^2\} = \mathbb{E}(\epsilon_1^2) - 2(\beta-\beta_0)^T \mathbb{E}(\epsilon_1X_1) + (\beta-\beta_0)^T \Sigma(\beta-\beta_0)
\]
is given by $\beta_0$, and $R(\beta_0)=\mathbb{E}(\epsilon_1^2)$.  Now, for $\lambda>0$, 
\begin{align*}
	R\bigl(\beta_0+\lambda \mathbb{E}(\epsilon_1X_1)\bigr) = \mathbb{E}(\epsilon_1^2) - 2 \lambda \|\mathbb{E}(\epsilon_1X_1)\|^2 + \lambda^2 \mathbb{E}(\epsilon_1X_1)^T \Sigma \mathbb{E}(\epsilon_1X_1).
\end{align*}
If $\mathbb{E}(\epsilon_1X_1) \neq 0$ then taking $\lambda = \frac{\|\mathbb{E}(\epsilon_1X_1)\|^2}{\mathbb{E}(\epsilon_1X_1)^T \Sigma \mathbb{E}(\epsilon_1X_1)}$ we have $R\bigl(\beta_0+\lambda \mathbb{E}(\epsilon_1X_1)\bigr) < \mathbb{E}(\epsilon_1^2)$, a contradition.  Hence $\mathbb{E}(\epsilon_1X_1) =0$.  Moreover,
\begin{align}
\label{Eq:SecondTermBound}
	\mathbb{E} \Bigl\| \frac{1}{m} \sum_{i=m+1}^n X_i \eta_i \Bigr\| &\leq \frac{1}{m} \sum_{j=1}^p \mathbb{E} \Bigl| \sum_{i=m+1}^n X_{ij}\eta_i \Bigr| \leq \frac{1}{m} \sum_{j=1}^p \mathrm{Var}^{1/2} \Bigl( \sum_{i=m+1}^n X_{ij} \eta_i \Bigr) \nonumber \\
&= \frac{1}{m^{1/2}} \sum_{j=1}^p \mathrm{Var}^{1/2} (X_{1j}\eta_1) \leq \frac{2^{1/2}p}{n^{1/2}} \mathbb{E}(\eta_1^4)^{1/4} \max_{j=1,\ldots,p} \mathbb{E}(X_{1j}^4)^{1/4}.
\end{align}
It follows that 
\begin{equation}
\label{Eq:Error}
	\mathbb{E}\bigl(\|\hat{\gamma}\|\mathbbm{1}_{A_0}\bigr) = \mathbb{E}\Bigl\{\bigl\|(X_{(2)}^TX_{(2)})^{-1}X_{(2)}^T \eta_{(2)}^{(0)}\bigr\|\mathbbm{1}_{A_0}\Bigr\} \leq \frac{2^{1/2}p}{\lambda_0n^{1/2}} \mathbb{E}(\eta_1^4)^{1/4} \max_{j=1,\ldots,p} \mathbb{E}(X_{1j}^4)^{1/4}.
\end{equation}
Similar arguments yield the same bound for $\mathbb{E}\bigl(\|\hat{\gamma}^{(1)}\|\mathbbm{1}_{A_0}\bigr)$.  Hence, from~\eqref{Eq:hgamma},~\eqref{Eq:hgamma1} and~\eqref{Eq:Error}, we deduce that
\begin{equation}
\label{Eq:R2Bound}
\limsup_{n \rightarrow \infty} \ n^{1/2} \sup_{\substack{k_\eta \in \{k_0^*,\ldots,k_\eta^*\}\\k \in \{k_0^*,\ldots,k^*\}}} \sup_{f \in \mathcal{F}_{p+1,\omega}^*} \mathbb{E}_f(|R_2|\mathbbm{1}_{A_{r_0,s_0} \cap A_0}) < \infty.
\end{equation}
We now bound $\mathbb{P}(A_0^c)$.  By the Hoffman--Wielandt inequality, we have that
\begin{align*}
	\{\lambda_\text{min}(m^{-1}X_{(2)}^TX_{(2)}) - \lambda_\text{min}(\Sigma)\}^2 \leq \|m^{-1}X_{(2)}^TX_{(2)} - \Sigma \|_\mathrm{F}^2.
\end{align*}
Thus
\begin{align}
\label{Eq:A0c}
	\mathbb{P}(A_0^c) \leq \mathbb{P} \Bigl( \|m^{-1}X_{(2)}^TX_{(2)} - \Sigma\|_\mathrm{F} \geq \frac{1}{2} \lambda_\text{min}(\Sigma) \Bigr) &\leq \frac{4}{m\lambda_\text{min}^2(\Sigma)} \sum_{j,l=1}^p \mathrm{Var}(X_{1j}X_{1l}) \nonumber \\
	& \leq \frac{8p^2}{n\lambda_\text{min}^2(\Sigma)} \max_{j=1,\ldots,p} \mathbb{E}(X_{1j}^4).
\end{align}
Finally, we bound $\mathbb{P}(A_0 \cap A_{r_0,s_0}^c)$.  By Markov's inequality and~\eqref{Eq:Error},
\begin{equation}
\label{Eq:gammahatbound}
\mathbb{P}\bigl(\{\|\hat{\gamma}\| \geq r_0\} \cap A_0\bigr) \leq \frac{1}{r_0}\mathbb{E}\bigl(\|\hat{\gamma}\|\mathbbm{1}_{A_0}\bigr) \leq \frac{2^{1/2}p}{r_0\lambda_0n^{1/2}} \mathbb{E}(\eta_1^4)^{1/4} \max_{j=1,\ldots,p} \mathbb{E}(X_{1j}^4)^{1/4}.
\end{equation}
The same bound also holds for $\mathbb{P}\bigl(\{\|\hat{\gamma}^{(1)}\| \geq r_0\} \cap A_0\bigr)$.  Furthermore, writing $P_{(2)} := X_{(2)}(X_{(2)}^TX_{(2)})^{-1}X_{(2)}^T$, note that
\begin{align*}
	|\hat{s}^2 - 1| = \Bigl| \frac{1}{m}\|\eta_{(2)}^{(0)}\|^2 - 1 - \frac{1}{m}\|P_{(2)}\eta_{(2)}^{(0)}\|^2\Bigr| \leq \frac{1}{m}\biggl|\sum_{i=m+1}^n (\eta_i^2 - 1)\biggr| + \frac{1}{m} \eta_{(2)}^{(0)T}P_{(2)}\eta_{(2)}^{(0)},
\end{align*}
so by Chebychev's inequality, Markov's inequality and~\eqref{Eq:SecondTermBound}, for any $\delta > 0$
\begin{align}
\label{Eq:ssquared}
\mathbb{P}\bigl(\{|&\hat{s}^2 - 1| > \delta\} \cap A_0\bigr) \nonumber \\
&\leq \mathbb{P}\biggl(\biggl\{\frac{1}{m}\biggl|\sum_{i=m+1}^n (\eta_i^2 - 1)\biggr| > \frac{\delta}{2}\biggr\} \cap A_0\biggr) + \mathbb{P}\biggl(\biggl\{\biggl\|\frac{1}{m}X_{(2)}^T\eta_{(2)}^{(0)}\biggr\| > \Bigl(\frac{\lambda_0\delta}{2}\Bigr)^{1/2}\biggr\} \cap A_0\biggr) \nonumber \\
&\leq \frac{8}{n\delta^2}\mathbb{E}(\eta_1^4) + \frac{2p}{\delta^{1/2}\lambda_0^{1/2}n^{1/2}} \mathbb{E}(\eta_1^4)^{1/4} \max_{j=1,\ldots,p} \mathbb{E}(X_{1j}^4)^{1/4}.
\end{align}
The same bound holds for $\mathbb{P}\bigl(\{|(\hat{s}^{(1)})^2 - 1| > \delta\} \cap A_0\bigr)$.  We conclude from Markov's inequality,~\eqref{Eq:BreveDecomp},~\eqref{Eq:R1R2},~\eqref{Eq:R1Bound},~\eqref{Eq:R2Bound},~\eqref{Eq:A0c},~\eqref{Eq:gammahatbound} and~\eqref{Eq:ssquared} that
\begin{align*}
\sup_{B_n \geq B_n^*} &\sup_{\substack{k_\eta \in \{k_0^*, \ldots, k_\eta^*\} \\ k \in \{k_0^*, \ldots, k^*\}}}\sup_{f \in \mathcal{F}_{1,p+1}^*:I(f) \geq b_n} \mathbb{P}_f(\breve{I}_n \leq \breve{C}_q^{(n),B_n}) \\
&\leq \frac{1}{q(B_n^*+1)} + \sup_{\substack{k_\eta \in \{k_0^*, \ldots, k_\eta^*\} \\ k \in \{k_0^*, \ldots, k^*\}}}\sup_{f \in \mathcal{F}_{1,p+1}^*:I(f) \geq b_n}\mathbb{P}_f(\breve{I}_n^{(1)} \geq \breve{I}_n^{(0)}) \\
&\leq \frac{1}{q(B_n^*+1)} + \sup_{\substack{k_\eta \in \{k_0^*, \ldots, k_\eta^*\} \\ k \in \{k_0^*, \ldots, k^*\}}}\sup_{f \in \mathcal{F}_{1,p+1}^*:I(f) \geq b_n} \Bigl\{\mathbb{P}_f\bigl(\{\breve{I}_n^{(1)} \geq \breve{I}_n^{(0)}\} \cap A_{r_0,s_0} \cap A_0\bigr) \\
&\hspace{5cm} + \mathbb{P}_f(A_0 \cap A_{r_0,s_0}^c) + \mathbb{P}_f(A_0^c)\Bigr\} \\
&\leq \frac{1}{q(B_n^*+1)} + \sup_{\substack{k_\eta \in \{k_0^*, \ldots, k_\eta^*\} \\ k \in \{k_0^*, \ldots, k^*\}}}\sup_{f \in \mathcal{F}_{1,p+1}^*:I(f) \geq b_n} \Bigl\{\frac{1}{b_n}\mathbb{E}_f(|R_1+R_2|\mathbbm{1}_{A_{r_0,s_0}})  \\
&\hspace{5cm} + \mathbb{P}_f(A_0 \cap A_{r_0,s_0}^c) + \mathbb{P}_f(A_0^c)\Bigr\} \rightarrow 0,
\end{align*}
as required.
\end{proof}

\textbf{Acknowledgements}: Both authors are supported by an EPSRC Programme grant.  The first author was supported by a PhD scholarship from the SIMS fund; the second author is supported by an EPSRC Fellowship and a grant from the Leverhulme Trust.



\end{document}